\documentclass[journal,twocolumn]{IEEEtran}

\IEEEoverridecommandlockouts

\usepackage[cmex10]{amsmath} 
\usepackage{amsfonts,acronym,amssymb,amsthm,dsfont}
\usepackage{graphicx}
\usepackage{url}
\usepackage{cite}
\usepackage{hyperref}
\usepackage{braket}
\usepackage{physics}
\usepackage{tabularx}
\usepackage{flushend}
\usepackage{balance}
\usepackage{enumerate}
\usepackage{lipsum}
\usepackage{diagbox}
\usepackage[dvipsnames]{xcolor}
\usepackage{mathrsfs}
\usepackage{bm}
\usepackage{stackengine}
\usepackage{float}
\usepackage{dblfloatfix}

\makeatletter
\def\blfootnote{\xdef\@thefnmark{}\@footnotetext}
\makeatother

\allowdisplaybreaks
\allowbreak

\newcommand{\calA}{\mathcal{A}}

\newcommand{\calB}{\mathcal{B}}

\newcommand{\calC}{\mathcal{C}}
\newcommand{\bbC}{\mathbb{C}}

\newcommand{\calD}{\mathcal{D}}
\newcommand{\bbD}{\mathbb{D}}

\newcommand{\calE}{\mathcal{E}}
\newcommand{\bbE}{\mathbb{E}}

\newcommand{\calH}{\mathcal{H}}
\newcommand{\bbH}{\mathbb{H}}

\newcommand{\bbI}{\mathbb{I}}

\newcommand{\calJ}{\mathcal{J}}

\newcommand{\calL}{\mathcal{L}}

\newcommand{\calM}{\mathcal{M}}

\newcommand{\bbN}{\mathbb{N}}

\newcommand{\calP}{\mathcal{P}}
\newcommand{\bbP}{\mathbb{P}}

\newcommand{\calQ}{\mathcal{Q}}

\newcommand{\calR}{\mathcal{R}}
\newcommand{\bbR}{\mathbb{R}}

\newcommand{\calS}{\mathcal{S}}

\newcommand{\calT}{\mathcal{T}}

\newcommand{\calU}{\mathcal{U}}

\newcommand{\calV}{\mathcal{V}}
\newcommand{\bbV}{\mathbb{V}}

\newcommand{\calX}{\mathcal{X}}

\newcommand{\calY}{\mathcal{Y}}

\newcommand{\calZ}{\mathcal{Z}}

\newcommand{\brk}[1]{\ensuremath{\big[{#1}\big]}}

\newcommand{\indep}{\perp \!\!\! \perp}

\def\on{{\otimes n}}

\DeclareMathOperator{\tra}{Tr}

\newcommand{\den}[2]{\ensuremath{\ket{#1}{\hspace{-1.6mm}}\bra{#2}}}

\newtheorem{theorem}{Theorem}
\newtheorem{corollary}{Corollary}
\newtheorem{definition}{Definition}

\newtheorem{remark}{Remark}

\newtheorem{lemma}{Lemma}[]

\newcommand{\indic}[1]{\ensuremath{\mathds{1}}}
\newcommand{\card}[1]{\ensuremath{\left\lvert{#1}\right\rvert}}   

\newcommand{\sbra}[2]{\ensuremath{\left[{#1}{\,:\,}{#2}\right]}}%
\newcommand{\sbr}[1]{\ensuremath{\left[{#1}\right]}}%
\newcommand{\pr}[1]{\ensuremath{\left({#1}\right)}}%
\newcommand{\br}[1]{\ensuremath{\left\{{#1}\right\}}}%

\newcommand{\Squad}{\hspace{0.5em}}
\acrodef{ACDIS}[ACDIS]{Adaptive Communication Decision and Information Systems}
\acrodef{AEP}{Asymptotic Equipartition Property}
\acrodef{AoA}{Angle of Arrival}
\acrodef{AWGN}{Additive White Gaussian Noise}
\acrodef{AVC}[AVC]{Arbitrarily Varying Channel}
\acrodef{BER}{Bit-Error-Rate}
\acrodef{BEC}{Binary Erasure Channel}
\acrodef{BPSK}{Binary Phase-Shift Keying}
\acrodef{BSC}{Binary Symmetric Channel}
\acrodef{RV}{Random Variable}
\acrodefplural{RV}{Random Variables}
\acrodef{MP}{Multiple Parent}
\acrodef{CR}{Common Randomness}
\acrodef{JTE}{Joint Typical Encoder}
\acrodef{BICM}[BICM]{Bit-Interleaved Coded-Modulation}
\acrodef{CDF}[CDF]{Cumulative Distribution Function}
\acrodef{CGF}[CGF]{Cumulant Generating Function}
\acrodef{CLT}[CLT]{Central Limit Theorem}
\acrodef{CSI}[CSI]{Channel State Information}
\acrodef{DMC}[DMC]{Discrete Memoryless Channel}
\acrodef{DMS}[DMS]{Discrete Memoryless Source}
\acrodef{ERM}[ERM]{Empirical Risk Minimization}
\acrodef{FER}[FER]{Frame Error Rate}
\acrodef{ICA}[ICA]{Independent Component Analysis}
\acrodef{iid}[i.i.d.]{independent and identically distributed}
\acrodef{IoT}[IoT]{Internet of Things}
\acrodef{KKT}[KKT]{Karush-Kuhn Tucker}
\acrodef{LASSO}[LASSO]{Least Absolute Shrinkage and Selection Operator}
\acrodef{LPD}[LPD]{Low Probability of Detection}
\acrodef{LDPC}[LDPC]{Low-Density Parity-Check}
\acrodef{LLMS}[LLMS]{Linear Least Mean Square}
\acrodef{LMS}[LMS]{Least Mean Square}
\acrodef{MAC}[MAC]{multiple-access channel}
\acrodef{MAWTC}[MAC-WTC]{multiple-access wiretap channel}
\acrodef{QMAC}[QMAC]{Quantum Multiple-Access Channel}
\acrodef{MGF}[MGF]{Moment Generating Function}
\acrodef{MLC}[MLC]{Multi-Level Coding}
\acrodef{MLE}[MLE]{Maximum Likelihood Estimate}
\acrodef{MIMO}[MIMO]{Multiple-Input Multiple-Output}
\acrodef{MISO}{Multiple-Input Single-Output}
\acrodef{MSD}[MSD]{Multi-Stage Decoding}
\acrodef{MMSE}[MMSE]{Minimum Mean-Square Error}
\acrodef{PAC}[PAC]{Probably Approximately Correct}
\acrodef{PCA}[PCA]{Principal Component Analysis}
\acrodef{PDF}[PDF]{Probability Density Function}
\acrodefplural{PDF}{Probability Density Functions}
\acrodef{PMF}[PMF]{Probability Mass Function}
\acrodefplural{PMF}{Probability Mass Functions}
\acrodef{PPM}[PPM]{Pulse Position Modulation}
\acrodef{PSD}{Power Spectral Density}
\acrodef{PSK}{Phase Shift Keying}
\acrodef{QKD}{Quantum Key Distribution}
\acrodef{ROC}{Receiver Operating Characteristic}
\acrodef{CVQKD}{Continuous-Variable \ac{QKD}}
\acrodef{QPSK}{Quadrature Phase-Shift Keying}
\acrodef{RV}{random variable}
\acrodef{SIMO}{Single-Input Multiple-Output}
\acrodef{SNR}{Signal-to-Noise Ratio}
\acrodef{SVM}[SVM]{Support Vector Machine}
\acrodef{POVM}{Positive Operator-Valued Measure}
\acrodefplural{POVM}{Positive Operator-Valued Measures}
\acrodef{wrt}[w.r.t.]{with respect to}
\acrodef{WSS}{Wide Sense Stationary}
\acrodef{RHS}{Right Hand Side}
\acrodef{LHS}{Left Hand Side}
\acrodef{CPTP}{Completely Positive and Trace Preserving}
\acrodef{ADSI}[ADSI]{Action-Dependent State Information}

\allowdisplaybreaks
\begin{document}

\title{Quantum Advantage in Multiple Access Wiretap Channels with Entangled Transmitters}

\author{
\IEEEauthorblockN{Hassan ZivariFard and Xiaodong Wang}\\
\thanks{The authors are with the Department of Electrical Engineering, Columbia University, New York, NY 10027. This work is supported in part by the U.S. Office of Naval Research (ONR) under grant N000142112155. E-mails: \{hz2863, xw2008\}@columbia.edu. Part of this work will be presented at the 2026 IEEE Information Theory Workshop.
}
}
\maketitle
\date{}

\begin{abstract}
We investigate secure communication over a classical \ac{MAWTC}, specifically exploring the benefit of shared entanglement between the transmitters. Under the strict semantic security criterion, we derive an achievable rate region and a regularized expression for the secrecy capacity. We further establish a single-letter upper bound on the secure capacity of the \ac{MAWTC} with entangled transmitters. Our results demonstrate that entanglement strictly enlarges the secrecy capacity of \ac{MAWTC} compared to sharing only classical correlated randomness, a finding we illustrate using a pseudo-telepathy game example. Finally, we establish new strong soft-covering lemmas for the output statistics of \acp{MAC} with entangled transmitters. Our results generalize existing results for non-entangled systems.
\end{abstract}

\section{Introduction}
\label{sec:Intro}
Cooperation and security are two central pillars of modern communication systems: cooperation can boost achievable transmission rates, while security constraints often restrict them~\cite{Loock20,Bassoli21,UziCribbing,MAC_GMS,Uzi_Relay,Hayashi06}. 
As the fundamental model for simultaneous communication among multiple senders, the \ac{MAC} plays a central role in the study of communication networks. Likewise, the wiretap channel serves as the canonical model for secure communication \cite{Wyner,BCC:IT78}. The role of quantum resources in communication over \acp{MAC} has been explored across several modeling frameworks. Winter~\cite{QMAC} established a regularized characterization of the classical capacity region for quantum \acp{MAC}. Boche and N\"otzel \cite{Boche14} studied a cooperative classical–quantum \ac{MAC} in which the encoders can conference, i.e., exchange messages with one another at a fixed rate. Hsieh et al. \cite{Hsieh08} and Shi et al. \cite{Shi21} considered a quantum \ac{MAC} in which both transmitters independently share entanglement resources with the receiver. Chou \cite{QMAC_Security} investigated secure communication over a quantum \ac{MAC}. The authors of the present paper studied covert communication over a quantum \ac{MAC} with general message sets \cite{MAC_GMS}.

Earlier work has shown that enriching multi-user systems with auxiliary resources, such as encoder cooperation, can boost achievable communication rates \cite{WillemsConferencing}. This naturally leads to a broader question: can shared entanglement between the transmitters, even when the channel and all other components remain fully classical, offer meaningful advantages?

In \cite{Quek17}, Quek and Shor illustrated the benefits of entangled transmitters by examining rate separations. They considered a specific interference channel example based on the CHSH game \cite{Clauser69}, with a primary focus on super-quantum non-local correlations arising from the PR-box model introduced by Popescu and Rohrlich \cite{Popescu94}. Leditzky et al.~\cite{Leditzky_20,Seshadri23} demonstrated through examples that entanglement shared between transmitters can strictly increase the achievable sum rate of a classical \ac{MAC} (see also \cite{Notzel201,Notzel202,Yun23}). The examples constructed in \cite{Leditzky_20} are based on pseudo-telepathy games \cite{Brassard05}, in which quantum strategies achieve guaranteed success and surpass the performance of classical strategies. Further insights and extensions are provided in \cite{Doolittle22}. Fawzi and Ferm\'e \cite{Fawzi22} further demonstrated a separation in the more fundamental setting of the binary adder channel, assuming the transmitters have access to non-signaling correlations \cite{Quek17}. Recently, Pereg et al.~\cite{Uzi_Entangled_MAC} generalized the findings of \cite{Leditzky_20} to arbitrary \ac{MAC} scenarios with entangled transmitters. They derived a general achievable rate region, provided a regularized characterization of the capacity region, and also investigated \ac{MAC} models with both conferencing and entangled transmitters. Very recently, Hawellek et al.~\cite{Interference_Entangled} studied the effect of entanglement shared among the transmitters in the interference channel, deriving general inner and outer bounds on the capacity region and presenting another example that highlights the benefits of shared entanglement between the transmitters. See also the related development in~\cite{Aghaee25}.

The classical \ac{MAWTC} has been studied extensively in the literature \cite{GMAWCJamming,YassaeeMAWC,Wiese2012,WieseBoche,MultiCase_Paper,ICC2014,IETpaper,ForenPaper,Frey18,XU24}. Tekin and Yener examined the Gaussian \ac{MAWTC} in~\cite{GMAWCJamming}, where they derived an achievable rate region under the weak secrecy criterion. Yassaee and Aref~\cite{YassaeeMAWC} investigated the discrete memoryless \ac{MAWTC} under the strong secrecy criterion. Their analysis also introduced output-statistics techniques for the \ac{MAC} and established two soft-covering lemmas, which form essential components of their achievability scheme for the \ac{MAWTC}. Wise and Boche~\cite{Wiese2012} examined the discrete memoryless \ac{MAWTC} with a common message under the strong security metric. ZivariFard et al.~\cite{MultiCase_Paper} studied \ac{MAWTC} under the tradeoff between the rate of random numbers needed to realize the stochastic encoders and the rates of confidential messages under the weak security criterion. Frey et al.~\cite{Frey18} further advanced the study of \ac{MAWTC} by adopting the semantic security criterion, developing an achievability scheme that relies on a strong soft-covering lemma for the \ac{MAC}.

In this work, we investigate secure communication over classical \acp{MAWTC} when the transmitters share entanglement resources. In our model, two senders wish to convey confidential messages to a common receiver while ensuring semantic security against an external eavesdropper. The key feature of our setup is that the transmitters are allowed to utilize pre-shared entanglement as a resource, as depicted in Fig.~\ref{fig:System_Model}.  
We present general inner and outer bounds for this problem, and we present an example showing that shared entanglement between the transmitters can strictly enlarge the capacity of the \ac{MAWTC} when the transmitters only share classical \ac{CR}. Our achievable rate region recovers, as a special case, several known results, including secure communication over classical \acp{MAWTC} under the strong and semantic security constraints \cite{YassaeeMAWC, Frey18} and communication over classical \acp{MAC} when the transmitters share entanglement resources \cite{Uzi_Entangled_MAC}. We further obtain a regularized expression that characterizes the secrecy capacity of \ac{MAWTC}. We also study output statistics of classical \acp{MAC} with entangled transmitters. We present two strong soft-covering lemmas for output statistics of classical \acp{MAC} with entangled transmitters, which ensure that the variational distance between the output distribution and approximated output distribution is doubly-exponentially decreasing, enabling
semantic security through the union bound. 
To the best of our knowledge, our results are the first to show that shared entanglement can offer tangible benefits in secure multi-user communication settings.

Our achievability scheme integrates random coding, coded time-sharing \cite{ElGamalKim}, wiretap coding \cite{Wyner,BCC:IT78}, and channel resolvability techniques \cite{Hayashi06,Bloch2013,effectivesecrec,Cuff15,Cuff16,ZivSemantic2016,Ziv_WTC_with_NonCausaul_CSI}. Compared with the achievability scheme in \cite{Uzi_Entangled_MAC}, a key challenge in our setting is ensuring semantic security against the eavesdropper, which we address by developing new strong soft-covering lemmas tailored to \acp{MAC} with entangled transmitters. Relative to the schemes in \cite{YassaeeMAWC,Frey18}, an additional challenge lies in leveraging the shared entanglement simultaneously for reliable communication, secrecy, and channel resolvability. Furthermore, compared with \cite{Uzi_Entangled_MAC,YassaeeMAWC,Frey18}, another challenge is to demonstrate, via an explicit example, that shared entanglement can strictly enlarge the set of semantically secure achievable rates.

The remainder of the paper is organized as follows. In Section~\ref{sec:Problem_Statement}, we introduce the problem formulation for the classical \ac{MAWTC} with entangled transmitters. Section~\ref{sec:Main_MACWTC} presents our main results on the secrecy capacity of the \ac{MAWTC} in the presence of shared entanglement. In Section~\ref{sec:Output_Statistics_MAC}, we establish two strong soft-covering lemmas for the output statistics of a classical \ac{MAC} with entangled transmitters. Concluding remarks are provided in Section~\ref{sec:Conclusion}. All proofs are deferred to the appendices.

\textit{Notation:}
Let $\bbN$ be the set of natural numbers and $\bbR$ be the set of real numbers. For any $x, y\in\bbR$, define $\sbra{x}{y} \triangleq [\lfloor x\rfloor, \lceil y \rceil]\cap \bbN$ and $[x] \triangleq \sbra{1}{x}$. Script letters such as $\calX$ denote finite sets. Uppercase letters, e.g., $X$, represent classical \acp{RV}, while lowercase letters, e.g., $x$, denote their realizations. For $n\in\bbN$, we write $x^n = \pr{x(i)}_{i\in[n]}$ for a length-$n$ sequence over $\calX$. Also, we write $X_i^j$ for a sequence $\pr{X(i), X(i+1),\cdots, X(j)}$ over $\calX$. The distribution of a discrete \ac{RV} $X$ is denoted by the \ac{PMF} $p_X(x)$ on the finite alphabet $\calX$. Let $p_X$ and $q_X$ be \acp{PMF} on a finite set $\calX$. 
The total variation distance between $p_X$ and $q_X$ is defined as $\bbV(p_X,q_X) \triangleq \frac{1}{2}\sum_{x\in\calX} \abs{p_X(x)-q_X(x)}$. 
The Kullback--Leibler (KL) divergence between $p_X$ and $q_X$ is given by $\bbD(p_X\lVert q_X) \triangleq \sum_{x\in\calX} p_X(x)\log\frac{p_X(x)}{q_X(x)}$. 
For any $\alpha>0$ with $\alpha\neq1$, the R\'enyi divergence of order $\alpha$ of $p_X$ from $q_X$ is defined as $\bbD_\alpha(p_X\lVert q_X) \triangleq \frac{1}{\alpha-1}
\log\sum_{x\in\calX} p_X(x)^\alpha q_X(x)^{1-\alpha}$. 
The support of a probability distribution $p$ is denoted by $\mathrm{supp}(p)$. The notation $p_X^\on(x^n)$ refers to the \ac{iid} product distribution $\prod_{i=1}^n p_X(x(i))$. $\indic{1}_{\{\cdot\}}$ denotes the indicator function and $\exp\pr{\cdot}$ denotes the exponential function. The set of $\epsilon$-strongly typical sequences of length $n$ \ac{wrt} $P_X$ is defined as,
    \begin{align}
        &\calT_\epsilon^{(n)}\triangleq\nonumber\\
        &\left\{x^n\in\calX^n:\left|\frac{\textup{Nu}\left(\bar{x}|x^n\right)}{n}-P_X(\bar{x})\right|\le\epsilon P_X(\bar{x}),\forall\bar{x}\in\calX\right\},\nonumber
    \end{align}where $\textup{Nu}\left(\bar{x}|x^n\right)=\sum_{i=1}^n\indic{1}_{\{x_i=\bar{x}\}}$.
We denote the set of positive semi-definite operators on a finite-dimensional Hilbert space $\calH$ by $\calP(\calH)$ and denote the set of quantum states by $\calD(\calH)\triangleq\{\rho\in\calP(\calH):\tra[\rho]=1\}$.  
We also denote the space of the bounded linear operators on $\calH$ with $\calL(\calH)$. The identity operator on some Hilbert space $\calH$ is denoted by $\mathds{I}$. We use $\rho$ to denote a quantum density matrix and use $\ket{\rho}$ to denote the associated quantum~state. 
\begin{figure*}[b!]
    \hrulefill
\begin{align}
    \mathsf{e}(C_n)&=\frac{1}{\abs{\calM_1}\abs{\calS_1}\abs{\calM_2}\abs{\calS_2}}\sum_{(m_1,s_1)\in\calM_1\times\calS_1}\sum_{(m_2,s_2)\in\calM_2\times\calS_2}\sum_{\pr{x_1^n,x_2^n}\in\calX_1^n\times\calX_2^n}f\pr{x_1^n,x_2^n\lvert m_1,s_1,m_2,s_2}\nonumber\\
    &\qquad\times\sum_{\substack{y^n\in\calY^n:g(y^n)\ne\pr{m_1,m_2}}}W_{Y|X_1X_2}^\on(y^n|x_1^n,x_2^n).\label{eq:probaility_error}
\end{align}
\end{figure*}
\section{Problem Statement for \texorpdfstring{\ac{MAWTC}}{MAC-WTC}}
\label{sec:Problem_Statement}
\begin{figure*}[t!]
\centering
\includegraphics[width=12.0cm]{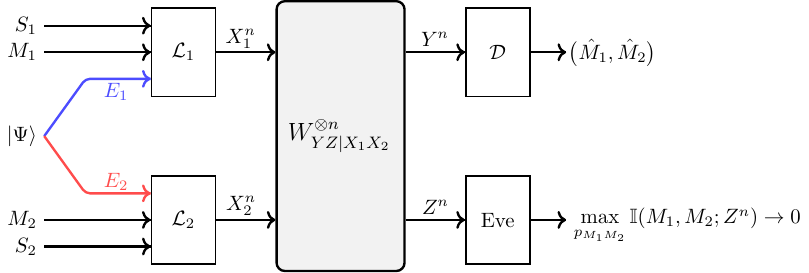}
\caption{Multiple Access Wiretap Channel with Entangled Transmitters. Each Encoder~$i$, for $i\in\{1,2\}$, encodes its classical message $M_i$ into the channel input $X_i^n$ while ensuring that the message remains confidential from the eavesdropper. Encoder~$i$ takes as input the message $M_i$, local randomness $S_i$, its share of the quantum entangled state $E_i$, and a set of \acp{POVM} $\calL_i$, and produces the channel input $X_i^n$.
}
\label{fig:System_Model}
\vspace{-0.15in}
\end{figure*}

Formally, a discrete memoryless \ac{MAWTC} $\pr{\calX_1,\calX_2,W_{YZ\lvert X_1X_2},\calY,\calZ}$ consists of two finite input alphabets $\calX_1$ and $\calX_2$, two finite output alphabets $\calY$ and $\calZ$, and transition probabilities $W_{YZ\lvert X_1X_2}$.
Fig.~\ref{fig:System_Model} depicts a communication system in which two transmitters communicate securely over a classical channel $W_{YZ\lvert X_1X_2}$ with the aid of shared entanglement. The entanglement resources are denoted by $E_1$ and $E_2$, corresponding to the first and second transmitters, respectively, and their joint Hilbert space is given by $\calH_{E_1E_2}\triangleq\calH_{E_1}\otimes\calH_{E_2}$. We define codes as follows.
\begin{definition}[Secrecy Code Components for \ac{MAWTC} with Entangled Transmitters]
\label{defi:Code_Components_Secure_Comm}
A sequence of $(2^{nR_1},2^{nR_2},n)$ codes $\br{C_n}_{n\in\bbN}$ for the \ac{MAWTC} $W_{YZ\lvert X_1X_2}$ with entangled transmitters consists of the following:
\begin{itemize}
    \item message sets $\calM_1\triangleq\left[\left\lfloor2^{nR_1}\right\rfloor\right]$ and $\calM_2\triangleq\left[\left\lfloor2^{nR_2}\right\rfloor\right]$;
    \item a bipartite state $\Psi_{E_1E_2}\in\calD(\calH_{E_1E_2})$, which is shared between the two transmitters;
    \item two local randomness sources, $\pr{\calS_1,p_{S_1}}$ and $\pr{\calS_2,p_{S_2}}$; 
    \item two collections of encoding \acp{POVM} 
    \begin{align*}
        \calL_1^{(m_1,s_1)}&=\br{L_{x_1^n}^{(m_1,s_1)}}_{x_1^n\in\calX_1^n},\nonumber\\
        \calL_2^{(m_2,s_2)}&=\br{L_{x_2^n}^{(m_2,s_2)}}_{x_2^n\in\calX_2^n},
    \end{align*}acting on $E_1$ and $E_2$, respectively, with one pair of \acp{POVM} specified for each tuple $(m_1,s_1,m_2,s_2)\in\calM_1\times\calS_1\times\calM_2\times\calS_2$;
    \item a decoding function $g:\calY^n\to\calM_1\times\calM_2\cup\br{\mathfrak{e}}$, which assigns each channel observation $y^n$ either to a message pair $(m_1,m_2)\in \calM_1\times\calM_2$ or to an error symbol $\mathfrak{e}$.
\end{itemize}
\end{definition}
The transmitters share an entangled state $\Psi_{E_1E_2}\in\calD(\calH_{E_1E_2})$. For $i\in\br{1,2}$, given local randomness $s_i\in\calS_i$ and a message $m_i\in\calM_i$, the Transmitter~$i$ applies the encoding measurement $\calL_i^{(m_i,s_i)}$ to its share of the entanglement resource $E_i$, and sends the resulting outcome $x_i^n\in\calX_i^n$ across the channel over $n$ uses. Therefore, the joint input distribution is
\begin{align*}
    &f\pr{x_1^n,x_2^n\lvert m_1,s_1,m_2,s_2}\nonumber\\
    &\quad=\tra\sbr{\pr{L_{x_1^n}^{(m_1,s_1)}\otimes L_{x_2^n}^{(m_2,s_2)}}\Psi_{E_1E_2}^\on}.
\end{align*}
Bob observes the channel output $y^n$ and forms an estimate of the transmitted message pair as $(\hat{m}_1,\hat{m}_2) \triangleq g(y^n)$. 
The code is known by all the terminals, and the objective is to design a reliable and semantically secure code. Note that if the joint state $\Psi_{E_1E_2}$ is a product state, then Definition~\ref{defi:Code_Components_Secure_Comm} reduces to secrecy code components for a classical \ac{MAC} $W_{YZ|X_1X_2}$ without the use of quantum resources. 
From Definition~\ref{defi:Code_Components_Secure_Comm} the average probability of error is defined as \eqref{eq:probaility_error} at bottom of the page.

\begin{definition}[Secrecy Code]
\label{defi:Code_Secure_Comm}
    A rate pair $(R_1,R_2)$ is said to be achievable for the \ac{MAWTC} $W_{YZ\lvert X_1X_2}$ if there exists a sequence of $\pr{2^{nR_1},2^{nR_2},n}$ codes $\br{C_n}_{n\in\bbN}$ such that
    \begin{subequations}\label{eq:Reliability_Security}
    \begin{align}
        \lim_{n\to\infty}\mathsf{e}(C_n)&=0,\label{eq:Reliability}\\
        \lim_{n\to\infty}\max_{P_{M_1,M_2}}\bbI_{C_n}(M_1,M_2;Z^n)&=0.\label{eq:Security}
    \end{align}
    \end{subequations}
\end{definition}The secrecy capacity region of the \ac{MAWTC} $W_{YZ|X_1X_2}$ with entangled transmitters, denoted by $\calC_{\textup{ET-MAC-WTC}}\pr{W_{YZ|X_1X_2}}$, is defined as the closure of all achievable rate pairs $(R_1,R_2)$.
\begin{remark}[Message Distribution]
    In the definition of the average error probability in \eqref{eq:probaility_error}, we assume that the messages intended for the legitimate receiver are uniformly distributed. This choice is standard in capacity analysis and corresponds to the worst-case scenario for reliability. On the other hand, the security constraint in \eqref{eq:Security} is defined in a distribution-independent manner, ensuring that the secrecy guarantee holds for \emph{any} message distribution. Hence, while the reliability analysis assumes uniform messages, this does not restrict the generality of the security criterion.
\end{remark}
In this paper, we show that the secrecy capacity region of the \ac{MAWTC} $W_{YZ|X_1X_2}$ with entangled transmitters, i.e., $\calC_{\textup{ET-MAC-WTC}}\pr{W_{YZ|X_1X_2}}$, strictly improves upon the secrecy capacity region of the \ac{MAWTC} $W_{YZ|X_1X_2}$ with classical private \ac{CR}, which is defined in the following.
\begin{definition}[Secrecy Code Components for \ac{MAWTC} with \ac{CR}]
\label{defi:Code_Components_Classical_CR}
A sequence of $(2^{nR_1},2^{nR_2},n)$ codes $\br{C_n}_{n\in\bbN}$ for the \ac{MAWTC} $W_{YZ\lvert X_1X_2}$ with \ac{CR} shared between the transmitters consists of the following:
\begin{itemize}
    \item message sets $\calM_1\triangleq\left[\left\lfloor2^{nR_1}\right\rfloor\right]$ and $\calM_2\triangleq\left[\left\lfloor2^{nR_2}\right\rfloor\right]$;
    \item a source of unlimited common randomness $(\calS,p_S)$ shared between the transmitters and kept secret from both the legitimate receiver and the eavesdropper;
    \item two local randomness sources, $\pr{\calS_1,p_{S_1}}$ and $\pr{\calS_2,p_{S_2}}$; 
    \item two encoding functions $g_1:\calM_1\times\calS\times\calS_1\to\calX_1^n$ and $g_2:\calM_2\times\calS\times\calS_2\to\calX_2^n$, corresponding to the first and second transmitters, respectively;
    \item a decoding function $g:\calY^n\to\calM_1\times\calM_2\cup\br{\mathfrak{e}}$, which assigns each channel observation $y^n$ either to a message pair $(m_1,m_2)\in \calM_1\times\calM_2$ or to an error symbol $\mathfrak{e}$.
\end{itemize}
\end{definition}
The probability of error and the secrecy code for the problem defined above can be specified in the same manner as in \eqref{eq:probaility_error} and Definition~\ref{defi:Code_Secure_Comm}, respectively. The secrecy capacity region of the \ac{MAWTC} $W_{YZ|X_1X_2}$ with \ac{CR} shared among the transmitters, denoted by $\calC_{\textup{CR-MAC-WTC}}\!\pr{W_{YZ|X_1X_2}}$, is defined as the closure of all achievable rate pairs $(R_1,R_2)$.
\section{Main Results for \texorpdfstring{\ac{MAWTC}}{MAC-WTC} with Entangled Transmitters}
\label{sec:Main_MACWTC}
In this paper, we establish an achievable rate region for the \ac{MAWTC} with entangled transmitters, derive a regularized expression for its secrecy capacity, and present a single-letter upper bound on $\calC_{\textup{ET-MAC-WTC}}\pr{W_{YZ|X_1X_2}}$. We then present an example showing that shared entanglement can strictly enlarge the secrecy capacity region relative to the setting in which the transmitters share only classical \ac{CR}.
\subsection{Semantic Security for \texorpdfstring{\ac{MAWTC}}{MAC-WTC} with Entangled Transmitters}
\begin{theorem}
\label{thm:Achievable}
Let
\begin{subequations}\label{eq:Regions}
\begin{align}
\calB_1&\triangleq 
\nonumber\\
&\left\{\hspace{-2mm} \begin{array}{rl}
  (R_1,R_2):\;
    R_1&\hspace{-2.5mm}<\bbI(U_1;Y\lvert U_0,U_2)-\bbI(U_1;Z\lvert U_0),\\
    R_2&\hspace{-2.5mm}<\bbI(U_2;Y\lvert U_0,U_1)-\bbI(U_2;Z\lvert U_0),\\
    R_1+R_2&\hspace{-2.5mm}<\bbI(U_1,U_2;Y\lvert U_0)-\bbI(U_1,U_2;Z\lvert U_0),
	\end{array}
\hspace{-2mm}\right\},
\label{eq:Region_1}\\
\calB_2&\triangleq 
\Big\{ (R_1,0):\;
    R_1<\bbI(U_1;Y\lvert U_0,U_2)-\bbI(U_1;Z\lvert U_0,U_2)\Big\},
\label{eq:Region_2}\\
\calB_3&\triangleq 
\Big\{ (0,R_2):\;
    R_2<\bbI(U_2;Y\lvert U_0,U_1)-\bbI(U_2;Z\lvert U_0,U_1)\Big\}.
\label{eq:Region_3}
\end{align}
Then we have
\begin{align}
  \bigcup\limits_{p_{U_0U_1U_2X_1X_2YZ}\in\mathfrak{P}}\hspace{-4mm}\br{\calB_1\cup\calB_2\cup\calB_3}\subseteq\calC_{\textup{ET-MAC-WTC}}\pr{W_{YZ|X_1X_2}},\label{eq:thm_Achievable}
\end{align} where $\mathfrak{P}$ is the set of \acp{PMF} that factor as
\begin{align}
    &p_{U_0U_1U_2X_1X_2YZ}(u_0,u_1,u_2,x_1,x_2,y,z )=\nonumber\\
    &\quad p_{U_0}(u_0)p_{U_1\lvert U_0}(u_1\lvert u_0)p_{U_2\lvert U_0}(u_2\lvert u_0)\nonumber\\
    &\quad\times\tra\br{\pr{L_1(x_1\lvert u_0,u_1)\otimes L_2(x_2\lvert u_0,u_2)}\Psi_{E_1E_2}}\nonumber\\
    &\quad\times W_{YZ\lvert X_1X_2}(y,z\lvert x_1,x_2).\label{eq:Joint_Dist_Single_Letter_1}
\end{align}
\end{subequations}
\end{theorem}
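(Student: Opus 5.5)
We prove achievability of $\calB_1$ with a random superposition-and-binning code in which the entanglement is consumed letter by letter through the single-letter measurements $L_1(\cdot|u_0,u_1)$ and $L_2(\cdot|u_0,u_2)$. Fix $p\in\mathfrak{P}$ and draw a time-sharing sequence $u_0^n\sim p_{U_0}^{\otimes n}$, known to all terminals. For $i\in\{1,2\}$, draw codewords $u_i^n(m_i,s_i)$ with $m_i\in[2^{nR_i}]$ and $s_i\in[2^{nR_i'}]$, independently and i.i.d.\ according to $p_{U_i|U_0}(\cdot|u_0(t))$. Each transmitter holds $n$ copies of $\Psi_{E_1E_2}$. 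Given $(m_i,s_i)$, with $s_i$ uniform local randomness, Transmitter~$i$ applies the product POVM $\bigotimes_{t=1}^n L_i(\cdot|u_0(t),u_i(t;m_i,s_i))$ to $E_i^n$ and sends the outcome. Because the measurements act on independent copies of $\Psi_{E_1E_2}$, the induced channel from $(u_1^n,u_2^n)$ to $(y^n,z^n)$, for fixed $u_0^n$, is the memoryless classical channel
\begin{align*}
W_{YZ|U_0U_1U_2}(y,z|u_0,u_1,u_2)=\sum_{x_1,x_2}\tra\br{\pr{L_1(x_1|u_0,u_1)\otimes L_2(x_2|u_0,u_2)}\Psi_{E_1E_2}}W_{YZ|X_1X_2}(y,z|x_1,x_2).
\end{align*}
This reduction turns the problem into a classical MAWTC with inputs $U_1,U_2$ and conditionally independent codebooks given $U_0$, which matches the factorization \eqref{eq:Joint_Dist_Single_Letter_1}. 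The entanglement matters only through the correlations it induces inside this effective channel.

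For reliability, Bob runs joint typicality decoding of $(u_1^n(m_1,s_1),u_2^n(m_2,s_2))$ given $u_0^n$ and discards $(s_1,s_2)$. The standard MAC packing argument with coded time sharing gives vanishing error if
\begin{align*}
R_1+R_1'&<\bbI(U_1;Y|U_0,U_2),\\
R_2+R_2'&<\bbI(U_2;Y|U_0,U_1),\\
R_1+R_1'+R_2+R_2'&<\bbI(U_1,U_2;Y|U_0).
\end{align*}
For semantic security, we apply the strong soft-covering lemma for MACs with entangled transmitters developed in Section~\ref{sec:Output_Statistics_MAC}, conditioned on $u_0^n$. For each fixed message pair $(m_1,m_2)$, the sub-codebooks indexed by $(s_1,s_2)$ must induce an eavesdropper output distribution close to $p_{Z|U_0}^{\otimes n}$. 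The lemma should give this with a doubly exponentially small failure probability whenever
\begin{align*}
R_1'&>\bbI(U_1;Z|U_0),\\
R_2'&>\bbI(U_2;Z|U_0),\\
R_1'+R_2'&>\bbI(U_1,U_2;Z|U_0).
\end{align*}
A union bound over the exponentially many message pairs then gives $\max_{(m_1,m_2)}\bbV\pr{P_{Z^n|m_1,m_2},p_{Z|U_0}^{\otimes n}}\le e^{-\gamma n}$ for some codebook. This bound implies $\max_{P_{M_1M_2}}\bbI(M_1,M_2;Z^n)\to0$ through the usual continuity bound, as in Cuff's and Frey et al.'s arguments. Derandomizing $u_0^n$ and the codebook is handled jointly by showing that reliability and security hold simultaneously with high probability.

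Fourier–Motzkin elimination of $(R_1',R_2')$ from the five constraints yields exactly the three bounds of $\calB_1$. The differences are taken sum-to-sum and individual-to-individual; we check that every cross combination is redundant, for example $R_1<\bbI(U_1,U_2;Y|U_0)-\bbI(U_1,U_2;Z|U_0)+\ldots$. This redundancy holds because $U_1$ and $U_2$ are conditionally independent given $U_0$, so that $\bbI(U_1;Z|U_0)+\bbI(U_2;Z|U_0)\le\bbI(U_1,U_2;Z|U_0)$. If some cross terms turn out not to be redundant, we would instead appeal to the convex-hull/time-sharing structure, which is the usual resolution in the papers of Yassaee–Aref and Frey et al.

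For $\calB_2$ we set $R_2=0$ and let Transmitter~2 send $u_2^n$ drawn from a codebook that is also given to Bob and Eve. Treating $(U_0,U_2)$ as the time-sharing variable and applying the single-user wiretap argument, with the entangled-channel soft covering at $R_1'>\bbI(U_1;Z|U_0,U_2)$, gives $R_1<\bbI(U_1;Y|U_0,U_2)-\bbI(U_1;Z|U_0,U_2)$. $\calB_3$ follows by symmetry.

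We expect the main obstacle to be the strong soft-covering step: obtaining a doubly exponential concentration for the two-codebook structure, where the terms are not independent across $(s_1,s_2)$. We would first try to decompose the Rényi-type bound into the three events for the individual and joint rates, in the style of Frey et al., and then apply McDiarmid's inequality over the $2^{n(R_1'+R_2')}$ independent codewords.
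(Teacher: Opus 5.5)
Your proposal is correct and follows the paper's proof essentially step for step. It uses the same coded-time-sharing superposition/binning codebook with product POVMs on $\Psi_{E_1E_2}^{\otimes n}$, joint-typicality decoding with the same three packing constraints, and Lemma~\ref{lemma:Resolvability_1} plus a union bound over message pairs and a KL--TV conversion for semantic security. Derandomization is also the same idea; the paper does it via the Selection Lemma~\ref{lemma:Selection}. Your Fourier--Motzkin step matches the paper's and is more explicit. Note that conditional independence of $U_1,U_2$ given $U_0$ is also what makes the other cross term redundant, via $\bbI(U_1;Y|U_0,U_2)+\bbI(U_2;Y|U_0,U_1)\ge\bbI(U_1,U_2;Y|U_0)$. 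For $\calB_2$, sending a single publicly known codeword from Transmitter~2 and treating $(U_0,U_2)$ as time sharing is just the $R_2'=0$ instance of the paper's Lemma~\ref{lemma:one_Sided_MAC_Res} argument, and it avoids the paper's case split on $R_2'$.
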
Theorem~\ref{thm:Achievable} is proved in Appendix~\ref{proof:thm:Achievable}. The positive mutual information quantities in Theorem~\ref{thm:Achievable} represent the communication rates achievable over the \ac{MAC}, while the negative mutual information quantities capture the penalty rates that the transmitters must incur to ensure secure communication. The region $\calB_1$ corresponds to the regime in which both transmitters aim to communicate their messages confidentially to the legitimate receiver, whereas the regions $\calB_i$, for $i\in\{2,3\}$, correspond to regimes in which only Transmitter~$i-1$ seeks to transmit its message confidentially. The proof of $\calB_1$ in Theorem~\ref{thm:Achievable} relies on coded time-sharing, Wyner's wiretap coding, for each transmitter, and a strong soft-covering lemma for \acp{MAC} to approximate the output 
distribution corresponding to both transmitters sending \ac{iid} codewords, thereby ensuring 
semantic security via the union bound. 
Similarly, the proof of $\calB_i$, for $i\in\{2,3\}$, in Theorem~\ref{thm:Achievable} 
is based on coded time-sharing and Wyner's wiretap coding, but employs a different strong 
soft-covering lemma for \acp{MAC} to approximate the output distribution when only Transmitter~$i-1$ 
sends an \ac{iid} codeword. The main distinction between the achievability proof of Theorem~\ref{thm:Achievable} and the achievability proofs in 
\cite{YassaeeMAWC,Wiese2012,Frey18} for classical \ac{MAWTC}s without shared entanglement lies in the generation of the channel inputs. In our scheme, we first generate a time-sharing sequence represented by the auxiliary \ac{RV} $U_0$. Conditioned on $U_0$, Transmitter~$i$, for $i\in\{1,2\}$, superimposes a wiretap codebook to encode the message $M_i$. The channel inputs $X_1$ and $X_2$ are then generated via encoding \acp{POVM} that depend on the classical auxiliary \acp{RV} $(U_0,U_1)$ and $(U_0,U_2)$, respectively. Because these \acp{POVM} act on entangled states shared between the transmitters, they can induce correlations between $X_1$ and $X_2$, which are deliberately exploited to harness the advantages provided by entanglement. Whereas in \cite{YassaeeMAWC,Wiese2012,Frey18}, the channel input of Transmitter~$i$ is generated via channel prefixing conditioned on $(U_0,U_i)$, this mechanism cannot induce correlation between the channel inputs when conditioned on the classical auxiliary \acp{RV} $(U_0,U_i)$.

The achievable rate region stated in Theorem~\ref{thm:Achievable} closely parallels the regions established in \cite{YassaeeMAWC,Wiese2012,Frey18}. The key difference is that our achievable rate region includes an additional optimization over both the encoding \acp{POVM} and the shared entangled quantum states. As a result, any potential performance gains arising from entanglement are not immediately apparent from the rate expressions alone. To clearly demonstrate the benefit of shared entanglement, a concrete example is therefore required, which we present in Section~\ref{sec:Example}.

\begin{remark}[Semantic Security for \ac{MAWTC} without Entanglement]
When the encoders do not share entanglement prior to communication, i.e., when $\Psi_{E_1E_2}=\Psi_{E_1}\otimes\Psi_{E_2}$, the region $\calB_1$ in \eqref{eq:Region_1} reduces to \cite[Theorem~10]{Frey18}. In this case, the auxiliary systems $\Psi_{E_1}$ and $\Psi_{E_2}$ carry no correlations, and the corresponding \ac{POVM} $\calL_1 \otimes \calL_2$ reduces to trivial local measurements that can be absorbed into the encoding operations.
\end{remark}
\begin{remark}[Semantic Security for \ac{MAWTC} without Entanglement]
When the encoders do not share entanglement prior to communication, that is, when 
$\Psi_{E_1E_2}=\Psi_{E_1}\otimes\Psi_{E_2}$ and $U_0=\emptyset$, the achievable region in 
Theorem~\ref{thm:Achievable} reduces to the region in~\cite[Theorem~1]{YassaeeMAWC}, but under a stronger security constraint.
\end{remark}
\begin{remark}[Communication Over a \ac{MAC} with Entanglement]
    By setting  $Z=\emptyset$, the achievable rate region described in Theorem~\ref{thm:Achievable} reduces to the achievable rate region derived for communication over the classical \ac{MAC} with entangled transmitters, presented in \cite[Theorem~2]{Uzi_Entangled_MAC}.
\end{remark}

\begin{lemma}[Cardinality Bounds and Pure Entangled States]
\label{lemma:Pure_Shared_States}
The union of the inner bound in~\eqref{eq:thm_Achievable} can be achieved using auxiliary variables $U_0, U_1, U_2$ satisfying $\abs{\calU_0}\le 3$ and $\abs{\calU_i} \le 3\abs{\calX_1}\abs{\calX_2}$ for $i\in\{1,2\}$, together with pure shared states $\rho_{E_1E_2} \triangleq \den{\Psi_{E_1E_2}}{\Psi_{E_1E_2}}$. That is, restricting the union to such auxiliary variables and pure states does not reduce the achievable region.
\end{lemma}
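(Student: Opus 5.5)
We prove the two claims separately: first purification of the shared state, then cardinality reduction via the Fenchel--Eggleston--Carath\'eodory support lemma, in the standard order $U_1,U_2$ first and then $U_0$.

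\emph{Pure states.} Fix any admissible $p_{U_0U_1U_2}$, state $\Psi_{E_1E_2}$, and POVMs $\{L_1(x_1|u_0,u_1)\}$, $\{L_2(x_2|u_0,u_2)\}$. Let $\ket{\Phi}_{E_1E_2R}$ be a purification of $\Psi_{E_1E_2}$ with reference $R$. Assign $R$ to Transmitter~1 by setting $E_1'\triangleq E_1R$ and $L_1'(x_1|u_0,u_1)\triangleq L_1(x_1|u_0,u_1)\otimes\mathds{I}_R$. These are valid POVMs on $E_1'$. For every $(u_0,u_1,u_2)$,
\begin{align*}
&\tra\br{\pr{L_1'(x_1|u_0,u_1)\otimes L_2(x_2|u_0,u_2)}\den{\Phi}{\Phi}}\\
&\quad=\tra\br{\pr{L_1(x_1|u_0,u_1)\otimes L_2(x_2|u_0,u_2)}\Psi_{E_1E_2}},
\end{align*}
so the joint distribution in \eqref{eq:Joint_Dist_Single_Letter_1} is unchanged. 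Since every term in $\calB_1,\calB_2,\calB_3$ depends only on this classical joint PMF, the rate regions coincide.

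\emph{Cardinality of $U_1$ and $U_2$.} Fix the pure state and the family of POVMs. For fixed $u_0$ and $u_2$-data, the induced conditional $p_{X_1X_2|U_0=u_0,U_1=u_1}$ is a point in the simplex over $\calX_1\times\calX_2$. The kernel $p_{U_2|U_0}$ and the POVM assignment remain fixed, so for each $u_1$ the quantity $p_{X_1X_2U_2|u_0,u_1}$ is a fixed function of $u_1$. We apply the support lemma to $p_{U_1|U_0=u_0}$, preserving:
\begin{itemize}
\item the marginal $p_{X_1X_2|U_0=u_0}$, which takes $\abs{\calX_1}\abs{\calX_2}-1$ functionals, or more precisely the joint $p_{X_1X_2U_2|u_0}$ as needed;
\item the conditional entropies $H(Y|U_0,U_1,U_2)$, $H(Z|U_0,U_1)$ and $H(Z|U_0,U_1,U_2)$, together with $H(Y|U_0,U_1)$.
\end{itemize}
Only a handful of continuous functionals of the point $p_{U_2X_1X_2|u_0,u_1}$ enter the six rate expressions.

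Counting these gives at most $\abs{\calX_1}\abs{\calX_2}+2$ functionals per $u_0$. We then relabel the retained $u_1$-values and define the POVM for each new label as the POVM of the original $u_1$ it copies. This step is the main obstacle. The support lemma must operate over a set of achievable conditional laws, namely those induced by the chosen POVMs applied to the fixed state, and the selected points must be realizable by POVMs rather than merely by arbitrary distributions. This is handled by running the lemma over the compact set $\{L_1(\cdot|u_0,u_1)\}_{u_1}$ and picking actual original indices, so realizability is automatic.

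Performing this for each $u_0$, with $U_2$ symmetric, gives a bound of order $\abs{\calX_1}\abs{\calX_2}+c$. We enlarge it to $3\abs{\calX_1}\abs{\calX_2}$ to cover all functionals appearing jointly across $\calB_1$, $\calB_2$ and $\calB_3$. Keeping these functionals consistent in the presence of $U_2$ is the delicate bookkeeping.

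\emph{Cardinality of $U_0$.} With $\abs{\calU_1}$ and $\abs{\calU_2}$ bounded, each rate constraint is a $p_{U_0}$-average of a function of the conditional data given $u_0$. For a fixed region ($\calB_1$, $\calB_2$ or $\calB_3$), the support lemma must preserve the three differences defining $\calB_1$, or a single difference for $\calB_2$ and $\calB_3$. This yields $\abs{\calU_0}\le 3$, with each retained $u_0$ carrying its own conditional kernels and POVMs. Since the union over PMFs is taken region by region, this covers all three regions and completes the proof.
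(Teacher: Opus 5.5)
Your purification step is correct, and more direct than the paper's spectral-decomposition construction. Your reduction of $U_0$ to three values is also fine. The gap is in the cardinality reduction for $U_1$ (and symmetrically $U_2$).

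The rate expressions contain $\bbH(Y|U_0,U_2)$, through $\bbI(U_1;Y|U_0,U_2)$ in $\calB_1$ and $\calB_2$. They also contain $\bbH(Z|U_0,U_2)$, through $\bbI(U_2;Z|U_0)$ in $\calB_1$ and $\bbI(U_1;Z|U_0,U_2)$ in $\calB_2$. For fixed $u_0$, the law $p_{Y|U_0=u_0,U_2=u_2}=\sum_{u_1}p_{U_1|U_0}(u_1|u_0)\,p_{Y|U_0U_1U_2}(\cdot|u_0,u_1,u_2)$ is a mixture over $u_1$. Hence $\bbH(Y|U_0=u_0,U_2)$ is a concave, not affine, function of $p_{U_1|U_0=u_0}$. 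The support lemma only preserves averages of the form $\sum_{u_1}p(u_1)g_j(u_1)$.

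To preserve these two terms you must fix $p_{U_2Y|u_0}$ and $p_{U_2Z|u_0}$, for instance via $p_{U_2X_1X_2|u_0}$. That costs on the order of $\abs{\calU_2}\,\abs{\calX_1}\abs{\calX_2}$ functionals, and the symmetric step for $U_2$ costs on the order of $\abs{\calU_1}\,\abs{\calX_1}\abs{\calX_2}$. The two bounds are circular and give no absolute bound. You hint at this (``or more precisely the joint $p_{X_1X_2U_2|u_0}$ as needed'') but then count as if it were not needed.

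Some secondary problems with the counting:
\begin{itemize}
\item Your count of $\abs{\calX_1}\abs{\calX_2}+2$ omits these terms. Even the items you do list give $\abs{\calX_1}\abs{\calX_2}+3$.
\item Enlarging to $3\abs{\calX_1}\abs{\calX_2}$ ``to cover all functionals'' is not an argument. In the paper the factor $3$ comes from the union of per-$u_0$ supports over $\abs{\calU_0}\le 3$.
\item The obstacle you single out, POVM realizability, is not one: keeping a subset of the original $u_1$ labels keeps their POVMs. The real obstacle is the one you defer to ``delicate bookkeeping.''
\end{itemize}

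The paper avoids the circularity with the Gohari--Anantharam perturbation method instead of the support lemma for $U_1,U_2$. It fixes $u_0$ and takes a distribution maximizing a weighted sum rate $G_\lambda$ at a corner of the pentagon. It then perturbs $p_\epsilon(u_1)=p(u_1)\bigl(1+\epsilon\phi(u_1)\bigr)$ with $\bbE[\phi(U_1)]=0$ and $\bbE[\phi(U_1)|X_1,X_2]=0$, which is solvable once $\abs{\calU_1}>\abs{\calX_1}\abs{\calX_2}$. This keeps $p_{X_1X_2YZ}$, the conditional independence of $U_1$ and $U_2$, and the POVMs intact.

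All entropy terms in the objective then vary affinely in $\epsilon$ except $\bbH(U_2,Y)$. The second-order optimality condition forces $\bbE[\phi(U_1)|U_2,Y]=0$, so the objective is constant along the perturbation. One can then drive a mass point of $U_1$ to zero until $\abs{\calU_1}\le\abs{\calX_1}\abs{\calX_2}$ per $u_0$.

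What breaks the circularity is optimality of the chosen distribution, not preserving $\bbH(Y|U_0,U_2)$ as a functional. You need an argument of this kind to get a bound independent of $\abs{\calU_2}$.
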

The proof of Lemma~\ref{lemma:Pure_Shared_States} relies on the Fenchel--Eggleston--Carath\'eodory theorem~\cite{Eggleston} together with a perturbation argument~\cite{Perurbation,Perturbation_Quantum}. The full details are provided in Appendix~\ref{proof:lemma:Pure_Shared_States}.
\begin{figure*}[b!]
    \hrulefill
    \setcounter{equation}{4}
\begin{align}
\bigcup\limits_{\substack{p_{U_0}p_{U_1U_2\lvert U_0},\,\calL_1\otimes\calL_2,\,\Psi_{E_1E_2}}}\left\{ \begin{array}{l}
(R_1,R_2)\in\bbR_+^2: \\ 
 R_1 \le \min\{\bbI(U_1;Y\lvert U_0)-\bbI(U_1;Z\lvert U_0),\\
 \qquad\bbI(U_1;Y\lvert U_0,U_2)-\bbI(U_1;Z\lvert U_0,U_2)\} \\ 
 R_2 \le \min\{\bbI(U_2;Y\lvert U_0)-\bbI(U_2;Z\lvert U_0),\\
 \qquad\bbI(U_2;Y\lvert U_0,U_1)-\bbI(U_2;Z\lvert U_0,U_1)\} \\ 
 R_1 + R_2 \le \bbI(U_1,U_2;Y\lvert U_0)-\bbI(U_1,U_2;Z\lvert U_0) \\ 
 \end{array} \right\},\label{eq:Upper_Region_ET}
\end{align}
    \setcounter{equation}{3}
\end{figure*}

We now establish a regularized expression for the secrecy capacity region of the \ac{MAWTC} with entangled transmitters.

\begin{theorem}
\label{thm:NLetter_Capacity}
Let
\begin{align*}
\calB_1^{(n)}&\triangleq\nonumber\\
&\hspace{-4mm}\left\{ \begin{array}{rl}
  (R_1,R_2):\;
    R_1&\hspace{-2.5mm}<\bbI(U_1^n;Y^n\lvert U_2^n)-\bbI(U_1^n;Z^n),\\
    R_2&\hspace{-2.5mm}<\bbI(U_2^n;Y^n\lvert U_1^n)-\bbI(U_2^n;Z^n),\\
    R_1+R_2&\hspace{-2.5mm}<\bbI(U_1^n,U_2^n;Y^n)-\bbI(U_1^n,U_2^n;Z^n),
	\end{array}
\hspace{-2mm}\right\},
\\
\calB_2^{(n)}&\triangleq 
\Big\{ (R_1,0):\;
    R_1<\bbI(U_1^n;Y^n\lvert U_2^n)-\bbI(U_1^n;Z^n\lvert U_2^n)\Big\},\\
\calB_3^{(n)}&\triangleq 
\Big\{ (0,R_2):\;
    R_2<\bbI(U_2^n;Y^n\lvert U_1^n)-\bbI(U_2^n;Z^n\lvert U_1^n)\Big\}.
\end{align*}
Then we have
\begin{align}
  &\frac{1}{n}\bigcup_{n=1}^\infty\,\,\,\bigcup\limits_{p_{U_1^n}p_{U_2^n},\Psi_{E_1E_2}^\on,\calL_1\otimes\calL_2}\br{\calB_1^{(n)}\cup\calB_2^{(n)}\cup\calB_3^{(n)}}\nonumber\\
  &=\calC_{\textup{ET-MAC-WTC}}\pr{W_{YZ|X_1X_2}},\label{eq:Capacity_Region_nLetter}
\end{align}where the cardinality of the auxiliary \acp{RV} are restricted to $\abs{\calU_i} \le \abs{\calX_1}\abs{\calX_2}$ for $i\in\{1,2\}$.
\end{theorem}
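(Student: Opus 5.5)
Achievability and converse are handled separately.

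\emph{Achievability} (the union in \eqref{eq:Capacity_Region_nLetter} is contained in $\calC_{\textup{ET-MAC-WTC}}$). Fix $n$, product input laws $p_{U_1^n}p_{U_2^n}$, the state $\Psi_{E_1E_2}^\on$, and POVMs $\calL_1\otimes\calL_2$ acting on $E_1^n$ and $E_2^n$. Together these define a single-letter super-channel. Its inputs are the super-symbols $\tilde U_i\triangleq U_i^n$. Its outputs are $\tilde Y\triangleq Y^n$ and $\tilde Z\triangleq Z^n$, produced by the map $W^\on_{YZ|X_1X_2}$ from $\tilde X_i\triangleq X_i^n$. Entanglement enters through the bipartite state $\tilde\Psi\triangleq\Psi_{E_1E_2}^\on$.
\begin{itemize}
\item Apply Theorem~\ref{thm:Achievable} to this super-channel with $U_0$ constant, using $k$ blocks.
\item This yields every rate pair in $\calB_1^{(n)}\cup\calB_2^{(n)}\cup\calB_3^{(n)}$, measured per super-symbol; dividing by $n$ gives rates per channel use.
\item Both the error criterion \eqref{eq:Reliability} and the semantic security criterion \eqref{eq:Security} are stated per code sequence. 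A code of blocklength $k$ for the super-channel is therefore a code of blocklength $kn$ for the original channel.
\item Blocklengths that are not multiples of $n$ are handled by padding with at most $n-1$ dummy symbols. This costs a vanishing rate loss, and security is unaffected because the dummy symbols carry no message.
\end{itemize}
Taking the union over $n$ and the closure completes this direction.

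\emph{Converse.} Fix an achievable $(R_1,R_2)$ with codes $C_n$. Define $U_i^n\triangleq(M_i,S_i)$, so that $U_1^n$ and $U_2^n$ are independent and uniform. The encoding POVMs on $\Psi_{E_1E_2}^\on$ then produce $X_1^n,X_2^n$ exactly as in the product-form distribution of the regularized region. For the bound on $R_1$:
\begin{itemize}
\item By Fano's inequality, $nR_1\le \bbI(M_1;Y^n\lvert M_2)+n\epsilon_n$, where $M_1$ and $M_2$ are independent.
\item Since $S_1$ is independent of everything else given $M_1$ and the Markov chains hold, $\bbI(M_1;Y^n|M_2)\le \bbI(U_1^n;Y^n|U_2^n)$. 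This step needs $\bbI(M_1;Y^n|U_2^n)\le\bbI(U_1^n;Y^n|U_2^n)$, which follows from the chain rule together with $M_1\indep U_2^n$.
\item Secrecy gives $\bbI(M_1;Z^n)\le\epsilon_n$.
\item Write $\bbI(U_1^n;Y^n|U_2^n)-\bbI(U_1^n;Z^n)$ and add the nonnegative gap $\bbI(S_1;Y^n|M_1,U_2^n)$.
\end{itemize}
The cleaner route is the standard one: $nR_1\le \bbI(M_1;Y^n|M_2)-\bbI(M_1;Z^n)+2n\epsilon_n$. Then take $U_i^n\triangleq M_i$ itself. The encoder randomness $S_i$ is absorbed into the POVM, since a classical mixture of POVMs is again a POVM with the same structure.

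With $U_i^n=M_i$, the same argument yields the $R_2$ and sum-rate bounds, using $\bbI(M_1,M_2;Z^n)\le\epsilon_n$. When $R_2=0$, the bounds for $\calB_2^{(n)}$ follow from secrecy conditioned on $M_2$. Here $M_2$ is deterministic, or $\bbI(M_1;Z^n|M_2)\le\bbI(M_1,M_2;Z^n)$ by independence of $M_1$ and $M_2$.

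The messages are uniform, which is permitted because \eqref{eq:Security} holds for every message distribution.

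\emph{Cardinality.} The restriction $|\calU_i|\le|\calX_1||\calX_2|$ is not literally achievable when $U_i^n=M_i$. I would handle it by a support-lemma reduction applied to the $n$-letter super-channel, in the spirit of Lemma~\ref{lemma:Pure_Shared_States}. Concretely, preserve $p_{X_1^n X_2^n}$ and the relevant entropies, reading the bound on $\calU_i$ relative to the super-alphabet.

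\emph{Main obstacle.} The step I expect to be hardest is making the converse's auxiliary variables match the stated product form $p_{U_1^n}p_{U_2^n}$ together with the entangled state $\Psi_{E_1E_2}^\on$ and the cardinality bound. Two points need checking here. First, the POVM $\calL_i$ must be allowed to depend only on $U_i^n$. Second, the state must be taken as a tensor power; if the code's state is not of this form, I would argue that $\Psi^\on$ with arbitrary dimension suffices, since any state can be treated as a single super-letter.
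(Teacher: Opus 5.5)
Your proposal is correct and follows the paper's proof. Achievability applies Theorem~\ref{thm:Achievable} with $U_0$ trivial to the $n$-letter super-channel, and the converse uses Fano's inequality plus the secrecy constraint with $U_i^n=M_i$; you make explicit the absorption of $S_i$ into the POVMs, which the paper leaves implicit, and your discarded $U_i^n=(M_i,S_i)$ attempt should simply be deleted, since $\bbI(M_1,S_1;Z^n)$ need not be small.

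The cardinality worry is also milder than you suggest: $(M_1,M_2)-(X_1^n,X_2^n)-Y^n$ is Markov, so Fano gives $R_i\le\log(\abs{\calX_1}\abs{\calX_2})$, hence $M_i$ embeds directly into $\calU_i^n$ with $\abs{\calU_i}=\abs{\calX_1}\abs{\calX_2}$ and no support-lemma reduction is needed.
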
The achievability proof of Theorem~\ref{thm:NLetter_Capacity} follows directly from the achievability proof of Theorem~\ref{thm:Achievable} by setting $U_0=\emptyset$. The converse proof of Theorem~\ref{thm:NLetter_Capacity} is provided in Appendix~\ref{proof:thm:NLetter_Capacity}. Similar to Theorem~\ref{thm:Achievable}, the auxiliary random sequence $U_i^n$, for $i\in\{1,2\}$, represents the coded version of the message~$M_i$.

We now present a general single-letter upper bound on the secrecy capacity.
\begin{theorem}
\label{thm:Single_Letter_Outter}
The secrecy capacity region $\calC_{\textup{ET-MAC-WTC}}\pr{W_{YZ|X_1X_2}}$ is included in \eqref{eq:Upper_Region_ET} at the bottom of the page,  
where the cardinality of the auxiliary \acp{RV} are restricted to $\abs{\calU_0}\le 5$ and $\abs{\calU_i} \le 5\pr{\abs{\calX_1}\abs{\calX_2}+1}$ for $i\in\{1,2\}$.
\end{theorem}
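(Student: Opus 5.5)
Fix a sequence of $(2^{nR_1},2^{nR_2},n)$ codes satisfying \eqref{eq:Reliability_Security}. The plan is the standard Csisz\'ar--K\"orner/Wyner converse, adapted to the fact that conditioned on the messages the inputs $X_1^n,X_2^n$ are generated by local measurements on a shared state. Take $M_1,M_2$ independent and uniform. Because semantic security implies $\bbI(M_1,M_2;Z^n)\le\epsilon_n\to0$ for this distribution, Fano's inequality gives
\begin{align*}
nR_1&\le \bbI(M_1;Y^n|M_2)-\bbI(M_1;Z^n|M_2)+n\delta_n,\\
nR_1&\le \bbI(M_1;Y^n)-\bbI(M_1;Z^n)+n\delta_n,\\
n(R_1+R_2)&\le \bbI(M_1,M_2;Y^n)-\bbI(M_1,M_2;Z^n)+n\delta_n,
\end{align*}
and symmetric bounds hold for $R_2$. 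For the conditional bound I use $\bbI(M_1;Z^n|M_2)\le \bbI(M_1,M_2;Z^n)\le\epsilon_n$. For the unconditional one, $\bbI(M_1;Y^n|M_2)\ge \bbI(M_1;Y^n)$ by independence of the messages, but I keep the weaker form $\bbI(M_1;Y^n)$ directly; this accounts for the $\min$ in \eqref{eq:Upper_Region_ET}.

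Next, single-letterize each difference. Write the conditioning-free difference as $\sum_i[\bbI(M_1;Y_i|Y^{i-1},Z_{i+1}^n)-\bbI(M_1;Z_i|Y^{i-1},Z_{i+1}^n)]$ using the Csisz\'ar sum identity. Then define $U_{0,i}=(Y^{i-1},Z_{i+1}^n)$ and $U_{k,i}=(M_k,U_{0,i})$, and introduce a time-sharing variable $Q$ uniform on $[n]$ that is absorbed into $U_0=(Q,U_{0,Q})$. The conditional differences (given $M_2$) telescope in the same way with the extra conditioning on $U_{2,i}$, and the sum-rate term likewise. The single-letter expressions then take the form of \eqref{eq:Upper_Region_ET}. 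To avoid mismatched auxiliaries across the bounds, I will use a single common choice of $U_0,U_1,U_2$ for all of them.

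The main obstacle is showing that the resulting joint law lies in the feasible class: that $(U_0,U_1,U_2,X_1,X_2)$ can be realized as $p_{U_0}p_{U_1U_2|U_0}$ followed by local POVMs $L_1(x_1|u_0,u_1)\otimes L_2(x_2|u_0,u_2)$ on some shared state. This is why the region in \eqref{eq:Upper_Region_ET} allows a general $p_{U_1U_2|U_0}$, not a product. The issue is that $U_{0,i}$ contains past and future channel outputs, which are correlated with the post-measurement entanglement, so I cannot simply measure coordinate $i$ of $\Psi^{\otimes n}$. My first attempt would be a simulation argument. Given $(u_0,u_1,u_2)$, the conditional law $p_{X_{1,i}X_{2,i}|U_{0,i},M_1,M_2}$ must be produced by measurements in which transmitter $k$ sees only $(u_0,u_k)$. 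I would allow the state $\Psi_{E_1E_2}$ itself to depend on $u_0$, by embedding it in a register controlled by $u_0$, or equivalently take a larger state $\Psi^{\otimes n}\otimes$(classical copies). If exact realizability fails, I fall back on the observation that the bound only requires $p_{U_1U_2|U_0}$ to be arbitrary and $X_1X_2$ to depend on $(U_1,U_2)$ through the Markov chain $U_0U_1U_2\to X_1X_2\to YZ$. I would then argue that the maximization over POVMs and states in \eqref{eq:Upper_Region_ET} is intended to include this relaxation, checking the precise factorization the authors require.

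Finally, the cardinality bounds $|\calU_0|\le5$ and $|\calU_i|\le5(|\calX_1||\calX_2|+1)$ follow from the Fenchel--Eggleston--Carath\'eodory theorem, applied as in Lemma~\ref{lemma:Pure_Shared_States}. For $U_0$ I preserve the five rate functionals; for each $U_i$ I additionally preserve the conditional input distribution, with $|\calX_1||\calX_2|-1$ constraints per $u_0$, giving the stated product form.
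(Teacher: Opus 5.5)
Your outline follows the paper's proof step for step: Fano plus the secrecy constraint (which gives both the conditional and the unconditional bounds behind the $\min$), the Csisz\'ar sum identity with $U_0(t)=(Y^{t-1},Z_{t+1}^n)$ and $U_k(t)=(M_k,U_0(t))$, and a uniform time index absorbed into $U_0$. The gap is the step you yourself flag as the main obstacle, and you leave it open.

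The paper's tool at that step is a per-letter coarse-graining of the encoding measurements. Define $\Lambda^{(t,m_i,s_i)}_{a_i}\triangleq\sum_{x_i^n:\,x_i(t)=a_i}L^{(m_i,s_i)}_{x_i^n}$. This is a POVM on $E_i$ that depends only on Transmitter~$i$'s own data. By linearity of the trace, $\bbP[X_1(t)=a_1,X_2(t)=a_2\,|\,m_1,s_1,m_2,s_2]=\tra[(\Lambda^{(t,m_1,s_1)}_{a_1}\otimes\Lambda^{(t,m_2,s_2)}_{a_2})\Psi_{E_1E_2}^{\otimes n}]$, cf.\ \eqref{eq:Measurment_t}--\eqref{eq:additional_1}. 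Averaging over $s_i$ again gives a local POVM. This is the idea missing from your proposal, and it shows that with $U_0=T$ and $U_k=M_k$ the per-letter law is feasible.

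It does not, however, control the law conditioned on $(Y^{t-1},Z_{t+1}^n)$, which the sum identity forces into $U_0$. The paper crosses this step by assertion. Its justification, that $(Y^{t-1},Z_{t+1}^n)$ is a function of $(S_1,S_2)$, does not hold: both the measurement outcomes and the channel are random given $(S_1,S_2)$.

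So your concern is well founded, but neither of your remedies closes the gap. Conditioning on $(Y^{t-1},Z_{t+1}^n)$ post-selects on a function of both transmitters' inputs at other times, and this can make the conditioned law signaling. For example:
\begin{itemize}
\item let both transmitters measure an EPR pair in the computational basis to obtain a common bit $R$;
\item let them send $X_2(1)=R\oplus M_2$ and $X_1(2)=R$, with all other inputs constant;
\item let $Y=X_2$.
\end{itemize}
Then $X_1(2)=Y(1)\oplus M_2$, and $Y(1)$ is a component of $U_0(2)$. Hence $p(x_1|u_0,u_1,u_2)$ depends on $u_2$. By contrast, every law of the form $\tra[(L_1(x_1|u_0,u_1)\otimes L_2(x_2|u_0,u_2))\Psi_{E_1E_2}]$ has an $X_1$-marginal independent of $u_2$.

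Letting the state depend on $u_0$ cannot fix this: the obstruction is signaling, which no product measurement on any state reproduces. Your fallback, enlarging the class to arbitrary $p_{X_1X_2|U_0U_1U_2}$, proves a valid but weaker outer bound, not the theorem as stated. To prove the statement you need an additional argument precisely at this step. Neither your proposal nor the paper's brief treatment supplies it.
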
The proof of Theorem~\ref{thm:Single_Letter_Outter} adapts the techniques developed in \cite{BCC:IT78,Uzi_Entangled_MAC} to the problem considered in this paper and is provided in Appendix~\ref{proof:thm:Single_Letter_Outter}. Similar to Theorem~\ref{thm:Achievable}, the auxiliary \ac{RV} $U_0$ can be interpreted as a time-sharing \ac{RV}, while the auxiliary \acp{RV} $U_i$, for $i\in\{1,2\}$, represent the messages $M_i$. The proofs of the corresponding cardinality bounds follow arguments similar to those used in Lemma~\ref{lemma:Pure_Shared_States} and are omitted for brevity.

\begin{remark}[Unlimited Entanglement Resources]
    In this section, we analyze the capacity region under the assumption of unlimited entanglement between the transmitters. We emphasize that, in all results presented thus far, the dimensions of the entangled systems $E_1$ and $E_2$ are unbounded. Since the unions in \eqref{eq:thm_Achievable} and \eqref{eq:Capacity_Region_nLetter} range over the closure of the set of finite-dimensional quantum correlations, the regions also include correlations that arise as $\dim(\calH_{E_t}) \to \infty$. This broader class of correlations has been investigated in~\cite{Slofstra20}.
\end{remark}

\subsection{Entanglement Resources with Limited Rate}
\label{sec:Limited_Entanglement}
Now, we consider a setting where the entanglement
resources are limited. A code with rate-limited entanglement resources is defined such that the encoders have access to Hilbert spaces that grow as a function of the channel uses in a rate-limited fashion. The precise definition is similar to Definition~\ref{defi:Code_Components_Secure_Comm}, with the only modification in the second bullet point, which now reads:
\begin{itemize}
    \item a bipartite quantum state $\Psi_{E_1E_2} \in \calD(\calH_{E_1E_2})$ shared between the two transmitters, where the local Hilbert space dimensions satisfy $\dim(\calH_{E_t}) \le 2^{n\theta_E}$ for $t \in\sbr{1,2}$.
\end{itemize}

The communication scheme follows the same structure as in Section~\ref{sec:Problem_Statement}. 
A rate pair $(R_1,R_2)$ is said to be achievable with entanglement rate $\theta_E$ if, for every 
$\epsilon>0$ and all sufficiently large block lengths $n$, there exists a 
$\pr{2^{nR_1},2^{nR_2},n}$ code in which the transmitters share entanglement at rate $\theta_E$, 
and both the average error probability, $\lim_{n\to\infty}\mathsf{e}(C_n)=0$,
and the semantic security condition,
$\lim_{n\to\infty}\max_{P_{M_1,M_2}} 
\bbI_{C_n}(M_1,M_2;Z^n)=0$, are satisfied.
The capacity region $\calC_{\textup{ET-MAC-WTC}}\!\pr{W_{YZ|X_1X_2},\theta_E}$ with 
rate-limited entanglement between the transmitters is defined as the set of all achievable 
rate pairs $(R_1,R_2)$ under a shared entanglement rate of $\theta_E$.
\begin{corollary}[Semantically Secure Achievable Rates with Rate-Limited Entanglement]
The secrecy capacity of \ac{MAWTC} satisfies
    \begin{align*}
  &\bigcup\limits_{p_{U_0}p_{U_1|U_0}p_{U_2|U_0},\Psi_{E_1E_2},\calL_1\otimes\calL_2}\br{\calB_1\cup\calB_2\cup\calB_3}\nonumber\\
  &\qquad\subseteq\calC_{\textup{ET-MAC-WTC}}\pr{W_{YZ|X_1X_2},\theta_E},
\end{align*}where $\calB_1,\calB_2$, and $\calB_3$ are defined in \eqref{eq:Region_1}, \eqref{eq:Region_2}, and \eqref{eq:Region_3}, respectively, and the states $\Psi_{E_1E_2}$ are restricted to be pure states such that $\bbH(E_1)_\Psi=\bbH(E_2)_\Psi\le\theta_E$.
\end{corollary}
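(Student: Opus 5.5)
The plan is to reduce the corollary to Theorem~\ref{thm:Achievable} by converting the i.i.d. entanglement that scheme consumes into a rate-limited resource through entanglement dilution. In the proof of Theorem~\ref{thm:Achievable} (Appendix~\ref{proof:thm:Achievable}), the encoders use $n$ copies of a single-letter state, i.e.\ $\Psi_{E_1E_2}^{\otimes n}$. By Lemma~\ref{lemma:Pure_Shared_States} we may take this state to be pure, $\ket{\Psi}_{E_1E_2}$, and restrict the auxiliaries to the stated cardinalities without shrinking the region. Its Schmidt coefficients are the eigenvalues of $\Psi_{E_1}$, so $\bbH(E_1)_\Psi=\bbH(E_2)_\Psi$. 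If $\bbH(E_1)_\Psi<\theta_E$, the local dimension needed to store a good approximation of $\ket{\Psi}^{\otimes n}$ is at most $2^{n(\bbH(E_1)_\Psi+\delta)}\le 2^{n\theta_E}$.

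\textbf{Truncation step.} Write the Schmidt decomposition $\ket{\Psi}^{\otimes n}=\sum_{k^n}\sqrt{\lambda_{k^n}}\ket{k^n}\ket{k^n}$. Keep only the strongly typical indices $k^n\in\calT_\epsilon^{(n)}$ and renormalize to obtain $\ket{\Phi_n}$. This state has Schmidt rank $\le 2^{n(\bbH(E_1)+\delta(\epsilon))}$, so it fits into local Hilbert spaces of that dimension (the typical subspaces). Moreover, $\abs{\braket{\Phi_n}{\Psi^{\otimes n}}}^2=\Pr[\calT_\epsilon^{(n)}]\to1$, so the trace distance satisfies $\tfrac12\lVert\Phi_n-\Psi^{\otimes n}\rVert_1\le\eta_n\to0$, and indeed $\eta_n$ decays exponentially in $n$. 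The encoders then apply the same POVMs $\calL_1^{(m_1,s_1)}\otimes\calL_2^{(m_2,s_2)}$, compressed to the typical subspaces and completed with an arbitrary outcome on the complement so that they remain valid POVMs there.

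\textbf{Transfer of reliability and secrecy.} For every fixed $(m_1,s_1,m_2,s_2)$, the induced input distribution $f'$ and the original $f$ differ in total variation by at most $\eta_n$, since measurement is a CPTP map. Pushing both through the memoryless channel $W_{YZ|X_1X_2}^{\otimes n}$ keeps the joint law of $(M_1,M_2,Y^n,Z^n)$ within $\eta_n$. The error probability therefore changes by at most $\eta_n$.

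For semantic security, work with the quantity actually controlled in the proof of Theorem~\ref{thm:Achievable}, namely the per-message total variation between the eavesdropper output and the ideal output $\bbV(P_{Z^n|m_1,m_2},Q_{Z^n})$. This bound is established uniformly over messages via the strong soft-covering lemmas. Replacing $\Psi^{\otimes n}$ by $\Phi_n$ adds at most $\eta_n$ to each of these distances, uniformly in $(m_1,m_2)$. The standard conversion from per-message total variation to $\max_{P_{M_1M_2}}\bbI(M_1,M_2;Z^n)$ gives a bound of order $n\,\epsilon_n\log\abs{\calZ}+h(\epsilon_n)$, with $\epsilon_n=\text{(soft-covering term)}+\eta_n$. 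This vanishes provided $\epsilon_n=o(1/n)$. The soft-covering term is doubly exponential and $\eta_n$ is exponential, so this holds.

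\textbf{Main obstacle.} The delicate step is this last one: the perturbation must be $o(1/n)$ \emph{uniformly} over message distributions. It must not merely vanish, because mutual information continuity introduces a factor of $n$. So I would explicitly invoke a Chernoff bound on the typical set to get $\eta_n\le 2^{-n c(\epsilon)}$. Alternatively, and more cleanly, I would note that the corollary's statement carries a $\le\theta_E$ condition. Rates with $\bbH(E_1)=\theta_E$ exactly are then obtained by a closure argument: time-share or slightly mix the state so that $\bbH<\theta_E$, and use continuity of the rate expressions in $\calB_1,\calB_2,\calB_3$ with respect to the state.
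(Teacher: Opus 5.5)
Your proof is correct. It is also more explicit than the paper, which only says the corollary ``proceeds along the same lines as that of Theorem~\ref{thm:Achievable}'' and omits the argument. That remark skips the one step that actually needs proof. The scheme of Theorem~\ref{thm:Achievable} consumes $\Psi_{E_1E_2}^{\otimes n}$, whose local dimension $(\dim\calH_{E_1})^n$ can far exceed $2^{n\theta_E}$, because the corollary constrains $\bbH(E_1)_\Psi$ rather than $\log\dim\calH_{E_1}$. Instead of re-running the random-coding proof, you treat Theorem~\ref{thm:Achievable}'s code as a black box and pre-share the Schmidt-truncated state. That state is itself an admissible resource, so no dilution protocol (which would need communication between the transmitters) is required. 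You then transfer reliability and per-message secrecy through the trace distance $\eta_n=\sqrt{1-\Pr[\calT_\epsilon^{(n)}]}$.

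One feature of your route is necessary rather than stylistic. You convert per-message total variation into $\max_{P_{M_1M_2}}\bbI(M_1,M_2;Z^n)$ via entropy continuity, not via the KL--TV bound \eqref{eq:KLD_TV} used in the paper. After truncation, the eavesdropper's output can put mass where $q_{Z|U_0}^{\otimes n}$ vanishes, so the absolute continuity that \eqref{eq:KLD_TV} requires can fail. Your bound instead costs only a factor $n$, which the exponential decay of $\eta_n$ absorbs.

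A few small corrections:
\begin{itemize}
\item The Chernoff bound on $\eta_n$ and the treatment of $\bbH(E_1)_\Psi=\theta_E$ are complementary, not alternatives; you need both.
\item For the boundary case, no closure is needed, only the openness of the strict inequalities in $\calB_1,\calB_2,\calB_3$. If you perturb the state rather than time-share, do it on the Schmidt coefficients so the state stays pure. For example, mix the Schmidt vector slightly with a point mass on its largest entry; this strictly lowers the entropy by Schur concavity. A genuinely mixed perturbation would lose the Schmidt-truncation structure your argument relies on.
\item The soft-covering contribution to your $\epsilon_n$ is exponentially small, not doubly exponentially small; the doubly exponential quantity is the failure probability over codebooks. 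This still suffices.
\item The compressed operators $\Pi_i L\,\Pi_i$ already form a POVM on the typical subspace each transmitter holds, so no completion on the complement is needed.
\end{itemize}
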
The proof proceeds along the same lines as that of Theorem~\ref{thm:Achievable} and is therefore omitted for brevity. We also note that the proof of the entangled states can be restricted to the pure states in Lemma~\ref{lemma:Pure_Shared_States}, and the cardinality bounds established in Lemma~\ref{lemma:Pure_Shared_States} apply in this setting as well.

\subsection{Example}
\label{sec:Example}
In this section, we present an example of a classical \ac{MAWTC} for which sharing entanglement between the transmitters strictly increases the achievable sum rate. 
\subsubsection{Mermin-Peres Magic Square Game}
In this example, the channel is defined in terms of a pseudo-telepathy game, i.e., a non-local game in which quantum strategies outperform classical strategies and guarantee winning with certainty. The example is based on the Mermin-Peres magic square game \cite{Mermin90,Peres90,Aravind03,Brassard05}, also known as the magic square game, which is a cooperative game with two players and a referee. In the magic square game, the central object is a $3 \times 3$ grid whose cells are to be filled with binary values. The referee selects one of the nine cells uniformly at random and reveals only partial information to the players: Player~1 is told the row index $r$, and Player~2 is told the column index $c$. Each player must then output three binary values, Player~1 for the entire row $r$, and Player~2 for the entire column $c$. The players win the round if the following conditions are simultaneously satisfied:
\begin{enumerate}
    \item their assignments agree on the overlapping cell $(r,c)$;
    \item row $r$ has even parity; and
    \item column $c$ has odd parity.
\end{enumerate}
Since no communication is allowed after the game begins, the players can coordinate only by agreeing on a joint strategy beforehand. Their goal is to maximize the probability of satisfying all three constraints for every possible choice of $(r,c)$.

Under any classical strategy, perfect success is impossible because the even-parity constraints on the rows are incompatible with the odd-parity constraints imposed on the columns. A representative deterministic assignment achieving the best possible performance is shown in Table~\ref{table:Magic_Deterministic}. For each pair $(r,c)$ selected by the referee, the players output the corresponding row and column from this table, which guarantees consistency on all overlapping entries except when $(r,c) = (3,3)$. Since this is the only losing case, the maximum achievable winning probability is $\tfrac{8}{9}$ when $(r,c)$ is drawn uniformly at random. As shown in~\cite{Brassard05}, this value remains optimal even when the players are allowed to use randomized classical strategies.

\setlength{\arrayrulewidth}{1.5pt}

\begin{table}[!t]
\centering
\large
\caption{The Magic Square Game: Deterministic Strategy}
\label{table:Magic_Deterministic}
\begin{tabular}{|c|c|c|}
\hline
{\small \,\,1\,\,} & {\small \,\,0\,\,} & {\small \,\,1\,\,} \\
\hline
{\small \,\,0\,\,} & {\small \,\,0\,\,} & {\small \,\,0\,\,} \\
\hline
{\small \,\,0\,\,} & {\small \,\,1\,\,} & {\small \,\,?\,\,} \\
\hline
\end{tabular}
\end{table}

\setcounter{equation}{5}

If the players share an entangled resource, they can win the magic square game with probability~1. Before the game begins, they prepare the state
\begin{align}
    \ket{E'_1E''_1E'_2E''_2}
    &= \frac{1}{2}\!\left(
        \ket{00}\!\otimes\!\ket{11}
        + \ket{11}\!\otimes\!\ket{00}\right.\nonumber\\
        &\left.\qquad- \ket{01}\!\otimes\!\ket{10}
        - \ket{10}\!\otimes\!\ket{01}\right),\label{eq:Bipartite_State}
\end{align}
and systems $E'_i, E''_i$ are given to the player~$i$.

Upon receiving the row index $r$ and column index $c$, the players act according to the quantum strategy summarized in Table~\ref{table:Magic_Quantum}. In this table, each cell is associated with a specific observable, where $X$, $Y$, and $Z$ denote the Pauli operators. Player~1 performs simultaneous measurements of the three observables appearing in row~$r$ on the joint system $(E'_1, E''_1)$, while Player~2 measures the three observables in column~$c$ on $(E'_2, E''_2)$. Each measurement outcome is then placed in the corresponding cell of the row or column they report.

This entangled strategy ensures that the players always agree on the overlapping cell, achieving a perfect winning probability of~1~\cite{Brassard05}.
\begin{table}[!t]
\centering
\renewcommand{\arraystretch}{1.5}
\caption{The Magic Square Game: Quantum Strategy}
\label{table:Magic_Quantum}
\begin{tabular}{|c|c|c|}
\hline
{\small $X\otimes\mathds{I}$} & {\small $\mathds{I}\otimes X$} & {\small $X\otimes X$} \\
\hline
{\small $\mathds{I}\otimes Y$} & {\small $Y\otimes\mathds{I}$} & {\small $Y\otimes Y$} \\
\hline
{\small $-X\otimes Y$} & {\small $-Y\otimes X$} & {\small $Z\otimes Z$} \\
\hline
\end{tabular}
\end{table}
\subsubsection{Channel Model}
We define a \ac{MAWTC} based on the magic square game, where the legitimate receiver observes a noiseless channel whenever the transmitted inputs correspond to a winning instance of the game. For all other input combinations, the channel outputs are generated uniformly at random, independent of the channel inputs. While the channel for the eavesdropper is the opposite, in the sense that the eavesdropper's channel outputs are generated uniformly at random, independent of the channel inputs, whenever the transmitted inputs correspond to a winning instance of the game. For all other input combinations, the eavesdropper observes a noiseless channel. In this way, the cooperation between the transmitters that enables them to win the game also contributes to securing the communication and increasing the achievable rates. The precise definition is provided below.

In the general formulation of the magic square game, a referee selects two questions $q_1 \in \calQ_1$ and $q_2 \in \calQ_2$, uniformly at random, and sends $q_1$ to Player~1 and $q_2$ to Player~2. Player~1 and Player~2 then produce answers $a_1 \in \calA_1$ and $a_2 \in \calA_2$, respectively, according to (possibly randomized) strategies $f_i : \calQ_i \to \calA_i$ for $i \in \{1,2\}$. The game is won if $(q_1, q_2, a_1, a_2) \in \mathfrak{W}$, where $\mathfrak{W}$ denotes the winning set.

We now define a \ac{MAWTC} $W_{YZ|X_1X_2}$ associated with this game by setting
\begin{align*}
    \calX_i &= \calQ_i \times \calA_i, \qquad i \in \br{1,2},\\
    \calY = \calZ &= \calQ_1 \times \calQ_2,
\end{align*}
such that,
\begin{subequations}\label{eq:Game_Input_Output_Relations}
\begin{align}
 &W_{Y|X_1X_2}\pr{\pr{\hat{q}_1,\hat{q}_2}|q_1,a_1,q_2,a_2}\nonumber\\
 &\quad=\begin{cases}
\indic{1}_{\br{\hat{q}_1=q_1}\cap\br{\hat{q}_2=q_2}},\,\,&\text{if}\Squad \pr{q_1,q_2,a_1,a_2}\in\mathfrak{W}\\
\frac{1}{\abs{\calQ_1}\abs{\calQ_2}},&\text{otherwise}
\end{cases},\label{eq:Game_Input_Output_Y}\\
&W_{Z|X_1X_2}\pr{\pr{\check{q}_1,\check{q}_2}|q_1,a_1,q_2,a_2}\nonumber\\
 &\quad=\begin{cases}
\frac{1}{\abs{\calQ_1}\abs{\calQ_2}},&\text{if}\Squad \pr{q_1,q_2,a_1,a_2}\in\mathfrak{W}\\
\indic{1}_{\br{\check{q}_1=q_1}\cap\br{\check{q}_2=q_2}},\,\,&\text{otherwise}
\end{cases}.\label{eq:Game_Input_Output_Z}
\end{align}
\end{subequations}
In other words, if the inputs $X_1 = (q_1, a_1)$ and $X_2 = (q_2, a_2)$ correspond to a winning instance of the game, 
\begin{itemize}
    \item the legitimate receiver receives the question pair exactly, that is, $Y = (q_1, q_2)$ with probability~1;
    \item the eavesdropper's output, $Z$, is drawn uniformly at random, independent of the channel inputs.
\end{itemize}Conversely, if the game is lost:
\begin{itemize}
    \item the output $Y$ is drawn uniformly at random from the question set, independent of the channel inputs;
    \item the eavesdropper receives the question pair in a noiseless manner, that is, $Z = (q_1, q_2)$ with probability~1.
\end{itemize}

Formally, the magic square game is defined by the question sets 
$\calQ_1 \triangleq \{1,2,3\}$ and $\calQ_2 \triangleq \{1,2,3\}$, 
and answer alphabets $\calA_1 = \calA_2 = \{0,1\}^3$. 
The winning set $\mathfrak{W}$ consists of all tuples 
\begin{align*}
\left(q_1, q_2, (a_1(j), a_2(j))_{j \in \{1,2,3\}}\right) 
    \in \calQ_1 \times \calQ_2 \times \calA_1 \times \calA_2,
\end{align*}
that satisfy the following conditions:
\begin{align*}
&a_1(q_2) = a_2(q_1), \nonumber\\
&a_1(1) + a_1(2) + a_1(3) \equiv 0 \pmod{2},\nonumber\\
&a_2(1) + a_2(2) + a_2(3) \equiv 1 \pmod{2}.
\end{align*}

\begin{subequations}
The secrecy capacity region of the \ac{MAWTC} $W_{YZ|X_1X_2}$, with marginals $W_{Y|X_1X_2}$ and $W_{Z|X_1X_2}$, when the transmitters share classical \ac{CR}, i.e., $\calC_{\textup{CR-MAC-WTC}}\!\pr{W_{YZ|X_1X_2}}$, is contained in the standard CR-assisted capacity region of the corresponding \ac{MAC} $W_{Y|X_1X_2}$, denoted by $\calC_{\textup{CR-MAC}}\!\pr{W_{Y|X_1X_2}}$:
\begin{align}
    \calC_{\textup{CR-MAC-WTC}}\!\pr{W_{YZ|X_1X_2}}
    \subseteq
    \calC_{\textup{CR-MAC}}\!\pr{W_{Y|X_1X_2}}.\label{eq:CR_MAC_VS_CR-MAWTC}
\end{align}
This inclusion holds because the \ac{MAWTC} imposes an additional secrecy constraint.

It is also known that the capacity region of a \ac{MAC} without any \ac{CR} shared between the transmitters, denoted by $\calC_{\textup{MAC}}\!\left(W_{Y|X_1X_2}\right)$, is equal to the capacity region of the \ac{MAC} with \ac{CR}; see, e.g., \cite[Remark~2]{Uzi_Entangled_MAC}. In particular,
\begin{align}
    \calC_{\textup{CR-MAC}}\!\pr{W_{Y|X_1X_2}}=\calC_{\textup{MAC}}\!\pr{W_{Y|X_1X_2}}.\label{eq:MAC_VS_CR-MAC}
\end{align}
\end{subequations}

Seshadri~\emph{et al.}~\cite{Seshadri23} showed that for the \ac{MAC} without any classical or quantum resources shared between the transmitters, the sum rate of the \ac{MAC} $W_{Y|X_1X_2}\!\pr{(\hat{q}_1,\hat{q}_2)|q_1,a_1,q_2,a_2}$ described in~\eqref{eq:Game_Input_Output_Y} is bounded by  $R_1+R_2\le3.02$. Therefore, from \eqref{eq:CR_MAC_VS_CR-MAWTC} and \eqref{eq:MAC_VS_CR-MAC}, we have
\begin{align}
     \calC_{\textup{CR-MAC-WTC}}\!\pr{W_{YZ|X_1X_2}}&\subseteq\calC_{\textup{CR-MAC}}\!\pr{W_{Y|X_1X_2}}\nonumber\\
     &\subseteq\br{(R_1,R_2):R_1+R_2\le3.02}.\label{eq:Seshadri}
\end{align}This outer bound is illustrated in Fig.~\ref{fig:Example_Region} by the orange region.
\subsubsection{Main Result}We now present the secrecy capacity region for the \ac{MAWTC} $W_{YZ|X_1X_2}$ defined in~\eqref{eq:Game_Input_Output_Relations}. 
\begin{theorem}
\label{thm:Capacity_Example}
The secrecy capacity region of the \ac{MAWTC} $W_{YZ|X_1X_2}$ defined in \eqref{eq:Game_Input_Output_Relations} is given by
\begin{align}
    \calC_{\textup{ET-MAC-WTC}}\!\pr{W_{YZ|X_1X_2}}= %
\left\{ \begin{array}{rl}
  (R_1,R_2):\;
    R_1&\hspace{-2.5mm}<\log_2(3)\\
    R_2&\hspace{-2.5mm}<\log_2(3)
	\end{array}
\hspace{-1mm}\right\}.\label{eq:Capacity_Example}
\end{align}
\end{theorem}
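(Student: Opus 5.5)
We prove \eqref{eq:Capacity_Example} in two parts: an achievability part based on Theorem~\ref{thm:Achievable} with the magic-square quantum strategy, and a converse based on Theorem~\ref{thm:Single_Letter_Outter}.

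\emph{Achievability.} We apply Theorem~\ref{thm:Achievable}, region $\calB_1$, with $U_0=\emptyset$. Take $U_1=Q_1$ and $U_2=Q_2$ independent and uniform on $\{1,2,3\}$. Let $\Psi_{E_1E_2}$ be the state in \eqref{eq:Bipartite_State}, and let $L_i(x_i\lvert u_i)$ be the POVM that outputs $x_i=(u_i,a_i)$. Here $a_i$ is obtained by measuring the row (respectively column) observables of Table~\ref{table:Magic_Quantum}, with outcomes $\pm1$ mapped to bits. Since this strategy wins with probability one, every input pair lies in $\mathfrak{W}$ almost surely. Hence $Y=(Q_1,Q_2)$ deterministically, while $Z$ is uniform and independent of $(Q_1,Q_2)$. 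This gives
\begin{align*}
\bbI(U_1;Y\lvert U_2)&=\bbH(Q_1)=\log_2 3,\\
\bbI(U_2;Y\lvert U_1)&=\log_2 3,\\
\bbI(U_1,U_2;Y)&=2\log_2 3,
\end{align*}
and every mutual information term involving $Z$ is zero. Therefore $\calB_1$ equals the open square $\{R_1<\log_2 3,\ R_2<\log_2 3\}$. The only care needed is that the bipartite state and the measurement operators match the factorization \eqref{eq:Joint_Dist_Single_Letter_1}: the measurement of Player~1 depends only on $u_1$ and acts on $E_1=(E_1',E_1'')$, and similarly for Player~2. Taking closure gives the region in \eqref{eq:Capacity_Example}.

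\emph{Converse.} We use Theorem~\ref{thm:Single_Letter_Outter} and show $R_1\le\log_2 3$; the bound on $R_2$ follows by symmetry. From \eqref{eq:Upper_Region_ET},
\[
R_1\le \bbI(U_1;Y\lvert U_0,U_2)-\bbI(U_1;Z\lvert U_0,U_2)\le \bbI(U_1;Y\lvert U_0,U_2).
\]
Conditioned on $(U_0,U_2)=(u_0,u_2)$, the channel input $X_2$ has a fixed conditional law, and the receiver sees $Y=(Y_1,Y_2)\in\calQ_1\times\calQ_2$. We bound
\begin{align*}
\bbI(U_1;Y\lvert U_0,U_2)&\le \bbI(U_1;Y_1\lvert U_0,U_2)+\bbI(U_1;Y_2\lvert U_0,U_2,Y_1)\\
&\le \log_2 3+\bbI(U_1;Y_2\lvert U_0,U_2,Y_1).
\end{align*}
The difficulty is the second term. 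In principle, Transmitter~1 could influence $Y_2$ by deliberately winning or losing: a losing input makes $Y$ uniform, which changes the law of $Y_2$. So the simple bound $\log_2\abs{\calY}=2\log_2 3$ is too loose, and we need the channel structure.

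\emph{Main obstacle.} The cleaner route is an $n$-letter argument. Because the question pair reaches the legitimate receiver only if the inputs win, one can argue through Fano's inequality that
\[
nR_1\le \bbI(M_1;Y^n\lvert M_2)+o(n).
\]
Given $M_2$ and the $Y$-components carrying $Q_2$, the residual information about $M_1$ is carried only by the $Q_1$-coordinate or by the win/lose pattern. A losing symbol produces a uniform $Y$, but such a symbol reveals $(q_1,q_2)$ noiselessly to the eavesdropper. I would therefore combine the secrecy term $\bbI(U_1;Z\lvert U_0,U_2)$ with the legitimate term. The goal is to show that any information carried by losses beyond $\log_2 3$ per letter is fully leaked to $Z$, so that the difference stays at most $\log_2 3$. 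Concretely, I would write $Y$ as a function of $(W,Q_1,Q_2,N)$, where $W$ is the win indicator and $N$ is uniform noise, and $Z$ as a function of $(W,Q_1,Q_2,N')$. I would then bound $\bbI(U_1;Y\lvert\cdot)-\bbI(U_1;Z\lvert\cdot)$ by $\bbH(Q_1\lvert\cdot)\le\log_2 3$ using the complementary structure of $Y$ and $Z$. This bound is the step I expect to be hardest and the one to verify carefully.
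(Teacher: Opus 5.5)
Your achievability part is correct and is the paper's argument ($\calB_1$ with $U_0=\emptyset$, uniform questions, magic-square measurements). The converse, however, cannot be completed. The inequality you flag as the hardest step is false for this channel.

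Take $U_0=\emptyset$ and $U_2$ constant. Let Transmitter~2 always send a fixed $x_2^*=(q_2^*,a_2^*)$ with $a_2^*$ of odd parity. Let $U_1=X_1$ equal, with probability $(1-p)/3$ each, a winning input $(j,a_1^{(j)})$ for $j\in\{1,2,3\}$, where $a_1^{(j)}$ is an even-parity row with $a_1^{(j)}(q_2^*)=a_2^*(j)$. With probability $p$, let it equal one fixed losing input. Wins make $Y$ deterministic and $Z$ uniform; losses do the opposite. Hence $\bbH(Y|X_1)=p\log_2 9$ and $\bbI(X_1;Z)\le\log_2 9-(1-p)\log_2 9=p\log_2 9$. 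Grouping the three ``win'' atoms of $Y$ gives $\bbH(Y)=\log_2 3+h_2(2p/3)+2p/3$, with $h_2$ the binary entropy. Therefore
\[
\bbI(U_1;Y|U_2)-\bbI(U_1;Z|U_2)\ge\log_2 3+h_2\pr{\tfrac{2p}{3}}+\tfrac{2p}{3}-2p\log_2 9 .
\]
This exceeds $\log_2 3$ for small $p$, since $h_2(x)/x\to\infty$ as $x\to 0$; for example, $p=0.003$ gives about $\log_2 3+0.0038$. Your heuristic that information carried by losses is ``fully leaked'' fails quantitatively. Losing with probability $p$ costs only $O(p)$ in leakage, but gains order $p\log(1/p)$ in $\bbH(Y)$. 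With $U_2$ constant, both $R_1$-terms in Theorem~\ref{thm:Single_Letter_Outter} equal this value, so that outer bound contains $(\log_2 3+0.0038,0)$. It cannot give $R_1\le\log_2 3$, and neither can your $n$-letter variant.

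This is not only a weakness of your route: the statement itself is false as written. The same choice lies in $\calB_2$ of Theorem~\ref{thm:Achievable}. It is plain Wyner coding over the point-to-point channel $X_1\to(Y,Z)$ with $x_2^*$ fixed, and needs no entanglement. So $(R_1,0)$ with $R_1>\log_2 3$ is achievable, and the claimed square is strictly smaller than the secrecy capacity region.

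The paper's converse hides the same issue. It bounds the region by $\calC_{\textup{ET-MAC}}(W_{Y|X_1X_2})$ and then treats $W_{Y|X_1X_2}$ as the noiseless MAC. But that channel is noiseless only on $\mathfrak{W}$: with $x_2^*$ fixed, user~1 alone gets about $1.69$ bits over it by losing with probability $p=3/29$.

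What does hold is that the square is achievable, and that $R_1+R_2\le\log_2 9=2\log_2 3$ because $\abs{\calY}=9$. So the corner $(\log_2 3,\log_2 3)$ lies on the boundary. That suffices for the separation from the classical sum-rate bound $3.02$, but not for the exact characterization claimed.
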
The capacity region $\calC_{\textup{ET-MAC-WTC}}\!\pr{W_{YZ|X_1X_2}}$ in \eqref{eq:Capacity_Example} is depicted in Fig.~\ref{fig:Example_Region}. As shown in the figure, the capacity region in Theorem~\ref{thm:Capacity_Example} violates the outer bound in \eqref{eq:Seshadri} for the \ac{MAWTC} with classical \ac{CR} between the transmitters. In particular, the green region corresponds to rate pairs that are achievable with entanglement but not with classical \ac{CR}. Note that the outer bound in \eqref{eq:Seshadri} is originally derived for \acp{MAC} without security constraints. Since no classical strategy can win the game with probability~1, some information necessarily leaks from the transmitters to the eavesdropper. Consequently, the outer bound in \eqref{eq:Seshadri} is generally loose for the \ac{MAWTC} defined in \eqref{eq:Game_Input_Output_Relations} when classical \ac{CR} is shared between the transmitters.
\begin{figure}[t!]
\centering
\includegraphics[width=7.0cm]{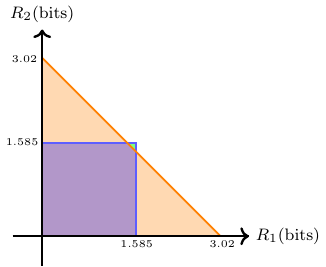}
\caption{Bounds on the secure rates for the magic square \ac{MAWTC} $W_{YZ|X_1X_2}$ defined in \eqref{eq:Game_Input_Output_Relations}. The orange region depicts the outer bound on the capacity region $\calC_{\textup{CR-MAC-WTC}}$ with classical \ac{CR} shared between the transmitters. The blue square represents the capacity region $\calC_{\textup{ET-MAC-WTC}}$ when the transmitters share entanglement. The green region highlights the rate pairs that are achievable with entanglement, but not achievable with classical~\ac{CR}.
}
\label{fig:Example_Region}
\vspace{-0.15in}
\end{figure}
\begin{proof}
Note that, similar to \eqref{eq:CR_MAC_VS_CR-MAWTC}, the secrecy capacity region of the entanglement-assisted \ac{MAWTC}, $\calC_{\textup{ET-MAC-WTC}}\!\pr{W_{YZ|X_1X_2}}$, is contained in the entanglement-assisted \ac{MAC} capacity region $\calC_{\textup{ET-MAC}}\!\pr{W_{Y|X_1X_2}}$:
\begin{align}
    \calC_{\textup{ET-MAC-WTC}}\!\pr{W_{YZ|X_1X_2}}
    \subseteq
    \calC_{\textup{ET-MAC}}\!\pr{W_{Y|X_1X_2}},\label{eq:ET_MAC_VS_ET-MAWTC}
\end{align}
again due to the secrecy requirement. 

The converse proof of Theorem~\ref{thm:Capacity_Example} is immediate, since we have~\eqref{eq:ET_MAC_VS_ET-MAWTC}, and the \ac{RHS} of~\eqref{eq:Capacity_Example} corresponds to the capacity of the noiseless \ac{MAC}  $W_{Y|X_1X_2}\!\pr{(\hat{q}_1,\hat{q}_2)|q_1,a_1,q_2,a_2}=\indic{1}_{\br{\hat{q}_1=q_1}\cap\br{\hat{q}_2=q_2}}$, without any secrecy constraint and regardless of whether entanglement resources are available.

To prove the direct part, consider the region $\calB_1$ in Theorem~\ref{thm:Achievable}.  We choose a bipartite state, the \acp{POVM}, and the associated classical \acp{RV} as follows.  Let the bipartite state $\ket{E_1E_2}$ be as defined in~\eqref{eq:Bipartite_State}, where  $E_i \triangleq E'_iE''_i$ for $i\in\br{1,2}$.  We set $\calU_0=\emptyset$ and choose $p_{U_i}$ to be uniform over  $\calU_i=\br{1,2,3}$ for $i\in\br{1,2}$.  Thus, the \acp{RV} $U_1$ and $U_2$ are distributed according to the referee's questions in the pseudo-telepathy game.

Given $U_i$, the Transmitter~$i$ performs the measurement $\calL_i(U_i)$ as follows. Transmitter~1 measures the observables in row $r=U_1$ of Table~\ref{table:Magic_Quantum}, obtains a random triplet  $V_1 \triangleq (a_1(j))_{j\in\br{1,2,3}}$, and transmits $X_1=(U_1,V_1)$ over the channel. Similarly, Transmitter~2 measures the observables in column $c=U_2$, obtains  $V_2 \triangleq (a_2(j))_{j\in\br{1,2,3}}$, and transmits $X_2=(U_2,V_2)$.

Since $(U_1,U_2,V_1,V_2)$ wins the game with certainty under the chosen entangled strategy, we have  $Y=(U_1,U_2)$ and $Z \indep (U_1,U_2,V_1,V_2)$ with probability~1. Therefore,
\begin{align*}
        \bbI(U_1;Y | U_2) &= \bbH(U_1) = \log_2(3)=1.585,\\
        \bbI(U_2;Y | U_1) &= \bbH(U_2) = \log_2(3)=1.585,\\
        \bbI(U_1,U_2;Y) &= \bbH(U_1,U_2) = 2\log_2(3)=2\times1.585,\\
        \bbI(U_1;Z | U_2) &= 
        \bbI(U_2;Z | U_1) =
        \bbI(U_1,U_2;Z) = 0.
\end{align*}
This construction requires an entanglement rate of $\theta_E = 2$ qubit pairs per channel use.
\end{proof}
\begin{remark}[Interpretations]
Intuitively, in the \ac{MAWTC} defined in \eqref{eq:Game_Input_Output_Relations}, when the game is won, the queries are effectively delivered only to the legitimate receiver; when the game is lost, they are delivered only to the eavesdropper. Since the magic square game admits a perfect quantum strategy, the resulting communication channel from the transmitters to the legitimate receiver becomes noiseless, while the eavesdropper's output is independent of the inputs and therefore uninformative. Consequently, the secrecy capacity coincides with the capacity of the corresponding entanglement-assisted \ac{MAC} without secrecy constraints. 
In contrast, because no classical strategy can win the game with probability~1, the eavesdropper's output cannot be made independent of the users' inputs in the classical setting. 
This leads to a secrecy penalty, and the outer bound in \eqref{eq:Seshadri} shows that the classical \ac{CR}-assisted secrecy region cannot contain the full entanglement-assisted region.
\end{remark}
\section{Output Statistics of \texorpdfstring{\ac{MAC}}{MAC} with Entangled Transmitters}
\label{sec:Output_Statistics_MAC}
\begin{figure}[t!]
\centering
\includegraphics[width=8.0cm]{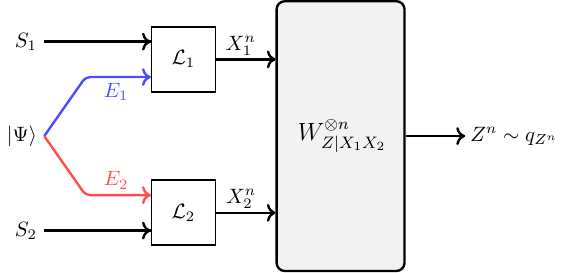}
\caption{Distribution approximation over a \ac{MAC} with entangled transmitters
}
\label{fig:System_Model_Resolvability}
\vspace{-0.15in}
\end{figure}
In this section, we study the output statistics of a \ac{MAC} with entangled transmitters. 
Approximating a desired output distribution through the use of minimal randomness at the channel input has emerged as a fundamental tool in information theory, with applications in source coding, rate–distortion theory, coordination, and secrecy, and is therefore of independent interest. 
Here, we establish two strong soft-covering lemmas that serve as the main technical tools for ensuring semantic security. 
The first lemma is a strong soft-covering result for \acp{MAC} that approximates the \ac{iid} output distribution induced when both transmitters send \ac{iid} codewords. 
The second lemma is a strong soft-covering result for \acp{MAC} that approximates the output distribution when only one transmitter sends an \ac{iid} codeword, while the other transmitter transmits a codeword selected from a codebook; consequently, the target output distribution in this case is non-\ac{iid}.

Consider a two-transmitter discrete memoryless \ac{MAC} $\pr{\calX_1,\calX_2,W_{Z\lvert X_1X_2},\calZ}$ with entangled transmitters, as illustrated in Fig.~\ref{fig:System_Model_Resolvability}, where $\calX_i$, $i\in\br{1,2}$, denotes the input alphabet at Transmitter~$i$, $W_{Z\lvert X_1X_2}$ is the channel law, and $\calZ$ is the output alphabet. 
\begin{definition}[Resolvability Code Components for \ac{MAC} with Entangled Transmitters]
\label{defi:code_Resolvability}
A sequence of $(2^{nR_1},2^{nR_2},n)$ resolvability codes $\br{C_n}_{n\in\bbN}$ for the \ac{MAC} $W_{Z\lvert X_1X_2}$, with entangled transmitters, consists of the following:
\begin{itemize}
    \item local randomness sets $\calS_1\triangleq\left[\left\lfloor2^{nR_1}\right\rfloor\right]$ and $\calS_2\triangleq\left[\left\lfloor2^{nR_2}\right\rfloor\right]$;
    \item a target distribution $q_{Z^n}$, which may in general be non-\ac{iid};
    \item a bipartite state $\Psi_{E_1E_2}\in\calD(\calH_{E_1E_2})$, which is shared between the two transmitters;
    \item two collections of encoding \acp{POVM} 
    \begin{align*}
        \calL_1^{(s_1)}=\pr{L_{x_1^n}^{(s_1)}}_{x_1^n\in\calX_1^n},\quad\text{and}\quad\calL_2^{(s_2)}=\pr{L_{x_2^n}^{(s_2)}}_{x_2^n\in\calX_2^n},
    \end{align*}acting on $E_1$ and $E_2$, respectively, with one pair of \acp{POVM} specified for each pair $(s_1,s_2)\in\calS_1\times\calS_2$.
\end{itemize}
\end{definition}
Traditionally, a soft-covering (resolvability) code is defined as follows.
\begin{definition}[Resolvability Code]
\label{defi:Code_Resolvability}
    A rate pair $(R_1,R_2)$ is said to be achievable for a discrete memoryless \ac{MAC} $W_{Z\lvert X_1 X_2}$ with entangled transmitters if, for a given target distribution $q_{Z^n}$, there exists a sequence of $\pr{2^{nR_1},2^{nR_2},n}$ codes $\br{C_n}_{n\in\bbN}$ such that
    \begin{align}
        \lim_{n\to\infty}\bbV\pr{p_{Z^n}, q_{Z^n}}&=0,\label{eq:Resolvability}
    \end{align}where $p_{Z^n}$ is the distribution induced at the output of the channel by $\br{C_n}_{n\in\bbN}$. 
\end{definition}The resolvability region, denoted by $\calR_{\textup{ET-MAC}}\pr{q_{Z^n}}$, where $q_{Z^n}$ is the target distribution, is defined as the closure of all achievable rate pairs $(R_1,R_2)$.

Here, we adopt a stronger notion, commonly referred to as \emph{strong soft-covering}~\cite{Cuff15,Cuff16,ZivSemantic2016,Ziv_WTC_with_NonCausaul_CSI}. In this setting, with high probability (with respect to the random code construction), the distance between the induced distribution and the target distribution decays exponentially fast in the block length $n$. Moreover, the probability that a randomly generated code fails to achieve this exponential decay is itself doubly–exponentially small.

Thus, instead of the requirement in~\eqref{eq:Resolvability}, we impose the stronger condition
\begin{align*}
    \bbP\br{\bbV\pr{p_{Z^n}, q_{Z^n}}>\exp\pr{-\gamma_1n}}&\le\exp\pr{-\exp\pr{\gamma_2n}},
\end{align*}for some constants $\gamma_1,\gamma_2>0$.

Consider a joint distribution $p_{U_0}p_{U_1\lvert U_0}p_{U_2\lvert U_0}$, entangled states $\Psi_{E_1E_2}$, and \ac{POVM} measurements $\calL_1\otimes\calL_2$ such that 
\begin{align}
    q_{Z\lvert U_0}(z\lvert u_0)&=\sum_{u_1,u_2,x_1,x_2}p_{U_1\lvert U_0}(u_1\lvert u_0)p_{U_2\lvert U_0}(u_2\lvert u_0)\nonumber\\
    &\times\tra\br{\pr{L_1(x_1\lvert u_0,u_1)\otimes L_2(x_2\lvert u_0,u_2)}\Psi_{E_1E_2}}\nonumber\\
    &\times W_{Z\lvert X_1X_2}(z\lvert x_1,x_2).\label{eq:Target_Dist}
\end{align}
    Let $u_0^n$ be a random time-sharing sequence, generated \ac{iid} according to $\prod_{t=1}^np_{U_0}(u_{0,t})$. Also, for $i\in\{1,2\}$, let $C_{i,n}\triangleq\br{U_i^n(s_i)}_{s_i\in\calS_i}$, where $\calS_i\triangleq\brk{2^{nR_i}}$, be a random codebook generated \ac{iid} according to $\prod_{t=1}^np_{U_i\lvert U_0}(u_{i,t}\lvert u_{0,t})$. A realization of $C_{i,n}$ is denoted by  $\calC_{i,n}\triangleq\br{u_i^n(s_i)}_{s_i\in\calS_i}$. 
Now, let $C_n\triangleq\br{U_0^n,C_{1,n},C_{2,n}}$ and $\calC_n\triangleq\br{u_0^n,\calC_{1,n},\calC_{2,n}}$.

Letting $\mathfrak{C}_n$ denote the set of all possible realizations of $C_n$, the codebook construction described above induces a probability measure $\mu$ over this ensemble. For every codebook $\calC_n \in \mathfrak{C}_n$, we have
\begin{align*}
    \mu\pr{\calC_n}&=p_{U_0}^\on(u_0^n)\prod_{s_1\in\calS_1}p_{U_1|U_0}^\on(u_1^n(s_1)|u_0^n)\nonumber\\
    &\quad\times\prod_{s_2\in\calS_2}p_{U_2|U_0}^\on(u_2^n(s_2)|u_0^n).
\end{align*}

Given the random codebooks and the local randomness $s_i$, the Transmitter~$i$, for $i\in\{1,2\}$, performs a measurement $\bigotimes_{t=1}^n\pr{\calL_i(u_{0,t},u_{i,t}(s_i))}$ on the entangled state $E_i^n$ and transmits the outcome of the measurement over the channel. Therefore,
\begin{align}
    &f\pr{x_1^n,x_2^n\lvert s_1,s_2}\nonumber\\
    &=\tra\Big[\Big(L_1^n\big(x_1^n\lvert u_0^n,u_1^n\pr{s_1}\big)\otimes L_2^n\big(x_2^n\lvert u_0^n,u_2^n\pr{s_2}\big)\Big)\Psi_{E_1E_2}^\on\Big]\nonumber\\
    &=\prod_{t=1}^n\tra\!\Big[\Big(L_1\big(x_{1,t}\lvert u_{0,t},u_{1,t}\pr{s_1}\big)\otimes L_2\big(x_{2,t}\lvert u_{0,t},u_{2,t}\pr{s_2}\big)\!\Big)\nonumber\\
    &\qquad\times\Psi_{E_1E_2}\Big],\label{eq:Encoding_Res}
\end{align}where $L_i^n\big(x_i^n\lvert u_0^n,u_i^n\pr{s_i}\big)\triangleq\bigotimes_{t=1}^nL_i\big(x_{i,t}\lvert u_{0,t},u_{i,t}\pr{s_i}\big)$, for $i\in\{1,2\}$. For a fixed codebook $\calC_n$, this encoding scheme induces the following joint distribution,
\begin{align}
    &p_{S_1S_2U_0^nU_1^nU_2^nX_1^nX_2^nZ^n\lvert\calC_n}\pr{s_1,s_2,\tilde{u}_0^n,\tilde{u}_1^n,\tilde{u}_2^n,x_1^n,x_2^n,y^n,z^n}\nonumber\\
    &\,\,\triangleq\frac{1}{2^{n(R_1+R_2)}}\indic{1}_{\br{\tilde{u}_0^n=u_0^n}\cap\br{\tilde{u}_1^n=u_1^n(s_1)}\cap\br{\tilde{u}_2^n=u_2^n(s_2)}}\nonumber\\
    &\qquad\times f\pr{x_1^n,x_2^n\lvert s_1,s_2}W_{Z\lvert X_1X_2}^\on\pr{z^n\lvert x_1^n,x_2^n}.\label{eq:Joint_Dist_Res}
\end{align}Therefore, we have the following marginal distribution:
\begin{align*}
    &p_{Z^n\lvert\calC_n}(z^n)=\frac{1}{2^{n(R_1+R_2)}}\nonumber\\
    &\quad\times\sum_{(s_1,s_2)\in\calS_1\times\calS_2}W_{Z\lvert U_0U_1U_2}^\on\pr{z^n\lvert u_0^n,u_1^n(s_1),u_2^n(s_2)},
\end{align*}where $W_{Z\lvert U_0U_1U_2}^\on$ is the marginal of \eqref{eq:Joint_Dist_Res}.

\begin{lemma}
\label{lemma:Resolvability_1}
For a discrete memoryless \ac{MAC} $W_{Z|X_1,X_2}$ with entangled transmitters and a target distribution $q_{Z|U_0}^\on(\cdot|u_0^n)$, where $q_{Z|U_0}(\cdot|u_0)$ is defined in \eqref{eq:Target_Dist}, if $(R_1,R_2)$ belongs to
    \begin{align}
        \bigcup\limits_{\substack{p_{U_0}p_{U_1\lvert U_0}p_{U_2\lvert U_0}\\ ,\calL_1\otimes\calL_2,\,\Psi_{E_1E_2}}}\left\{ \begin{array}{l}
 (R_1,R_2)\in\bbR_+^2: \\ 
 R_1 > \bbI(U_1;Z\lvert U_0) \\ 
 R_2 > \bbI(U_2;Z\lvert U_0) \\ 
 R_1 + R_2 > \bbI(U_1,U_2;Z\lvert U_0) \\ 
 \end{array} \right\},\label{eq:Resolvability_11}
\end{align}
then for sufficiently large $n$ there exists $\gamma_1,\gamma_2>0$ such that
\begin{align*}
    \bbP_\mu\pr{\bbV\pr{p_{Z^n|\bbC_n},q_{Z|U_0}^\on(\cdot|u_0^n)}>\exp\pr{-\gamma_1n}}\nonumber\\
    \le\exp\pr{-\exp(\gamma_2n)}.
\end{align*}
\end{lemma}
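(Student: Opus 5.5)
The first step is to reduce the entangled setting to a classical one. The encoding in \eqref{eq:Encoding_Res} factors over time and the shared state is $\Psi_{E_1E_2}^{\on}$, so the conditional law of $Z^n$ given $(u_0^n,u_1^n(s_1),u_2^n(s_2))$ is the product $W_{Z|U_0U_1U_2}^{\on}$. Here
\[
W_{Z|U_0U_1U_2}(z|u_0,u_1,u_2)=\sum_{x_1,x_2}\tra\br{\pr{L_1(x_1|u_0,u_1)\otimes L_2(x_2|u_0,u_2)}\Psi_{E_1E_2}}W_{Z|X_1X_2}(z|x_1,x_2).
\]
This is a genuine classical memoryless channel from $(U_0,U_1,U_2)$ to $Z$. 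Moreover, by \eqref{eq:Target_Dist}, $q_{Z|U_0}$ is exactly its output when $U_1,U_2$ are conditionally independent given $U_0$. The entanglement only shapes this effective channel: it correlates $X_1$ and $X_2$, but it leaves $U_1$ and $U_2$ conditionally independent given $U_0$. Lemma~\ref{lemma:Resolvability_1} is therefore a strong soft-covering statement for a classical MAC with superposition (time-sharing) codebooks. I would reproduce the argument of Frey et al.~\cite{Frey18} and Cuff~\cite{Cuff16}, conditioning on $u_0^n$ throughout.

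Fix a typical $u_0^n$ (atypical ones have exponentially small probability, which is absorbed at the end) and work with the codebook randomness alone. Split the induced distribution using the typical set of the information densities
\[
\imath_1=\log\frac{W(z|u_0,u_1,u_2)}{W(z|u_0,u_2)},\qquad \imath_2=\log\frac{W(z|u_0,u_1,u_2)}{W(z|u_0,u_1)},\qquad \imath_{12}=\log\frac{W(z|u_0,u_1,u_2)}{q(z|u_0)}.
\]
Each is required to lie within $n\delta$ of its conditional mean $\bbI(U_1;Z|U_0,U_2)$, $\bbI(U_2;Z|U_0,U_1)$ and $\bbI(U_1,U_2;Z|U_0)$ respectively. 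The target $q^{\on}$ is bounded above by the expectation over the codebook of the "good" part $p^{(1)}$ plus an atypical remainder. The total variation $\bbV$ is then at most the atypical mass, which is exponentially small by Chernoff/Hoeffding since everything is i.i.d. given $u_0^n$, plus the positive deviations of $p^{(1)}(z^n)$ above its mean.

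The core step is concentration of $p^{(1)}(z^n)$ for each fixed $z^n$. Following Frey et al., write $p^{(1)}(z^n)$ as an average over $(s_1,s_2)$ of bounded terms, and use the relation $\bbI(U_1,U_2;Z|U_0)=\bbI(U_1;Z|U_0)+\bbI(U_2;Z|U_0,U_1)$ together with the symmetric one. The terms are not independent, since terms sharing an $s_1$ (or an $s_2$) are dependent. I would handle this with a two-stage martingale/bounded-differences argument, which I expect to be the main obstacle.
\begin{itemize}
\item In the first stage, condition on the codebook $C_{1,n}$ and apply a Chernoff bound over the independent codewords $U_2^n(s_2)$. This needs $R_2$ to exceed an effective conditional rate.
\item In the second stage, apply a McDiarmid or Chernoff bound over $U_1^n(s_1)$ for the intermediate quantity $\frac{1}{|\calS_2|}\sum_{s_2}\cdots$. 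Its boundedness comes from the typicality truncation through $\imath_1$, $\imath_{12}$, with the roles of the users swapped where needed.
\end{itemize}
The three rate conditions $R_1>\bbI(U_1;Z|U_0)$, $R_2>\bbI(U_2;Z|U_0)$ and $R_1+R_2>\bbI(U_1,U_2;Z|U_0)$ make the normalized bounds $2^{-n(R-\bbI)+O(n\delta)}$ small in each stage. As a result, for each $z^n$ the deviation exceeds $e^{-\gamma n}q(z^n)$ with probability at most $\exp(-e^{\gamma' n})$. If the corner conditions are tight in only one order, I would first reduce to corner points by rate-splitting or a separate argument, following \cite{Frey18}.

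Finally, take a union bound over the at most $|\calZ|^n$ sequences $z^n$. This remains doubly exponentially small, so with probability at least $1-\exp(-\exp(\gamma_2 n))$ all deviations are small simultaneously, giving $\bbV\le \exp(-\gamma_1 n)$. Choosing $\delta$ small relative to the rate slacks fixes $\gamma_1,\gamma_2>0$.
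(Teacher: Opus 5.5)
Your outline follows the paper's proof closely. Both use the effective classical channel $W_{Z|U_0U_1U_2}$ and a typical/atypical split via information densities. Both run nested Chernoff bounds at a corner point (the paper's Lemma~\ref{lemma:Resolvability_12}: $R_1>\bbI(U_1;Z|U_0)$, $R_2>\bbI(U_2;Z|U_0,U_1)$), followed by a union bound over $z^n$. Both then need a separate step for rate pairs that dominate neither corner. The paper time-shares between the two corners over sub-blocks of lengths $(1-\lambda^\star)n$ and $\lambda^\star n$; your rate splitting would also work. Note that the nested argument itself uses only the two corner inequalities, not all three.

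The step that fails is ``atypical $u_0^n$ have exponentially small probability, which is absorbed at the end.'' A probability of order $2^{-cn}$ cannot be absorbed into $\exp(-\exp(\gamma_2 n))$. In fact, with $u_0^n$ part of the random codebook under $\mu$, the bound as stated is false. Here is a counterexample:
\begin{itemize}
\item take $U_0,U_1$ independent uniform bits, with $U_2$ and $X_2$ constant;
\item let the POVM output $X_1=U_1$ when $u_0=0$ and $X_1=0$ when $u_0=1$, and let $Z=X_1$.
\end{itemize}
Then $\bbI(U_1;Z|U_0)=\bbI(U_1,U_2;Z|U_0)=\tfrac12$ and $\bbI(U_2;Z|U_0)=0$, so $(R_1,R_2)=(0.6,0.1)$ lies in the region. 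The event $\{U_0^n=0^n\}$ has probability $2^{-n}$. On it the target is uniform on $\{0,1\}^n$, while $p_{Z^n|\calC_n}$ is supported on at most $2^{0.6n}$ points, so $\bbV\ge 1-2^{-0.4n}$.

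What your argument does prove is the doubly exponential bound conditioned on $u_0^n\in\calT_\epsilon^{(n)}$, with $\epsilon$ small relative to the rate slacks. That is all Appendix~\ref{proof:thm:Achievable} needs. Since $u_0^n$ is shared by all message pairs, the cost $\bbP(U_0^n\notin\calT_\epsilon^{(n)})\to 0$ is paid once, outside the union bound over the $2^{n(R_1+R_2)}$ messages.

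The paper's proof has the same hole. In step $(a)$ of \eqref{eq:Additional_3} it conditions on $V_0^n=v_0^n$ but keeps the unconditional $\bbP((V_0^n,V_1^n,V_2^n,Z^n)\in\calA)$ in the threshold. In the example above, $p_{\text{atyp}}^{(1)}=1$ given $U_0^n=0^n$.

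There are also two smaller points.

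\emph{Order of conditioning.} Your stage one conditions on $C_{1,n}$. The resulting good event then depends on both codebooks, so $C_{1,n}$ is no longer i.i.d.\ when you condition on that event in stage two. The paper avoids this by requiring the stage-one bound over $C_{2,n}$ for \emph{every} $(u_0^n,u_1^n)\in\calU_0^n\times\calU_1^n$ (the set $\mathfrak{C}_{z^n}$ in \eqref{eq:CZ}). The extra union over $|\calU_0|^n|\calU_1|^n$ sequences costs nothing against a doubly exponential bound.

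\emph{Atypical mass.} An exponentially small \emph{mean} of the codebook-averaged atypical mass is not enough; you need its doubly exponential concentration. The paper gets this from Janson's inequality. McDiarmid with bounded differences $2^{-nR_1}$ and $2^{-nR_2}$ gives the same $\exp(-c\,2^{n\min\{R_1,R_2\}})$.
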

To prove Lemma \ref{lemma:Resolvability_1}, we revisit the proof techniques of \cite{Wiese2012,Frey18}, which differ slightly from those in \cite{Cuff15,Cuff16}, and extend them to \acp{MAC} with entangled transmitters. As detailed above, the proof is based on coded time sharing, random coding, and distribution approximation methods. In our scheme, we first generate a time-sharing sequence represented by the auxiliary \ac{RV} $U_0$. Conditioned on $U_0$, Transmitter~$i$, for $i\in\{1,2\}$, superimposes a codebook, represented by $U_i$, to encode the local randomness $S_i$. The channel inputs $X_1$ and $X_2$ are then generated via encoding \acp{POVM} that depend on the classical auxiliary \acp{RV} $(U_0,U_1)$ and $(U_0,U_2)$, respectively. 
Because these \acp{POVM} act on entangled states shared between the transmitters, they can induce correlations between $X_1$ and $X_2$. The full proof of Lemma \ref{lemma:Resolvability_1} is given in Appendix~\ref{proof:thm:Resolvability_1}. 
\begin{remark}[Strong Soft-Covering Lemma for \texorpdfstring{\acp{MAC}}{MACs} without Entanglement]
When the encoders do not share entanglement prior to communication, i.e., when $\Psi_{E_1E_2}=\Psi_{E_1}\otimes\Psi_{E_2}$, Lemma~\ref{lemma:Resolvability_1} reduces to a soft-covering result for \acp{MAC} without entanglement, recovering the resolvability region in \cite[Theorem~1]{Frey18}. In this case, the auxiliary systems $\Psi_{E_1}$ and $\Psi_{E_2}$ carry no correlations, and the corresponding \ac{POVM} $\calL_1 \otimes \calL_2$ reduces to trivial local measurements that can be absorbed into the encoding operations.
\end{remark}
Now consider an alternative scenario, in which the first transmitter does not select its codeword from the codebook $C_{1,n}$. Instead, conditioned on the common sequence $u_0^n$, it generates a length-$n$ sequence $U_1^n$ i.i.d.\ according to $p_{U_1|U_0}$. It then performs the product measurement $\bigotimes_{t=1}^n\pr{\calL_1(u_{0,t},u_{1,t})}$ on its share of the entangled state $E_1^n$, and the resulting measurement outcome is transmitted through the channel. The second transmitter operates similarly as before: given the local randomness $s_2$, it selects a codeword $U_2^n(s_2)$ from the predetermined codebook $C_{2,n}$. It then performs the measurement $\bigotimes_{t=1}^n\pr{\calL_2(u_{0,t},u_{2,t})}$ on its share of the entangled state $E_2^n$, and transmits the resulting measurement outcome over the channel. Let the distribution induced at the output of the channel by this scenario be denoted by $p_{Z^n\lvert C_{2,n}}$. For a fixed codebook $\calC_{2,n}$, this encoding scheme induces the following joint distribution,
\begin{align}
    &p_{S_2U_0^nU_1^nU_2^nX_1^nX_2^nZ^n\lvert\calC_{2,n}}\pr{s_2,\tilde{u}_0^n,\tilde{u}_1^n,\tilde{u}_2^n,x_1^n,x_2^n,y^n,z^n}\nonumber\\
    &\,\,\triangleq\frac{1}{2^{nR_2}}\indic{1}_{\br{\tilde{u}_0^n=u_0^n}\cap\br{\tilde{u}_2^n=u_2^n(s_2)}}p_{U_1|U_0}^\on(u_1^n|u_0^n) f\pr{x_1^n,x_2^n\lvert s_2}\nonumber\\
    &\qquad\times W_{Z\lvert X_1X_2}^\on\pr{z^n\lvert x_1^n,x_2^n},\label{eq:Joint_Dist_Res_One_Sided}
\end{align}where $f(x_1^n,x_2^n | s_2)$ is defined analogously to \eqref{eq:Encoding_Res}, except that the sequence $u_1^n$ is used in place of the codeword $u_1^n(s_1)$. Therefore, we have the following marginal distribution:
\begin{align*}
    p_{Z^n\lvert\calC_{2,n}}(z^n)=\frac{1}{2^{nR_2}}\sum_{s_2\in\calS_2}W_{Z\lvert U_0U_2}^\on\pr{z^n\lvert u_0^n,u_2^n(s_2)},
\end{align*}where $W_{Z\lvert U_0U_2}^\on$ is the marginal of \eqref{eq:Joint_Dist_Res_One_Sided}.

\begin{lemma}\label{lemma:one_Sided_MAC_Res}
For a discrete memoryless \ac{MAC} $W_{Z|X_1,X_2}$ with entangled transmitters and a target distribution $ p_{Z^n\lvert C_{2,n}}$, if $(R_1,R_2)$ belongs to
\begin{equation*}
    \bigcup\limits_{p_{U_0}p_{U_1\lvert U_0}p_{U_2\lvert U_0},\,\calL_1\otimes\calL_2,\,\Psi_{E_1E_2}} {(\calR_1 \cup \calR_2)} 
\end{equation*}
where, 
\begin{align*}
&\calR_1=\left\{ \begin{array}{l}
 (R_1,R_2)\in\bbR_+^2: \\ 
 R_1 > \bbI(U_1;Z\lvert U_0) \\ 
 R_2 > \bbI(U_2;Z\lvert U_0) \\ 
 R_1 + R_2 > \bbI(U_1,U_2;Z\lvert U_0) \\ 
 \end{array} \right\},\\
&\calR_2=\left\{\!\!\!\begin{array}{l}
 (R_1,R_2)\in\bbR_+^2:\,\,
 R_1 > \bbI(U_1;Z|U_0,U_2) \\ 
 \end{array}\!\!\!\right\},
\end{align*}
then for sufficiently large $n$ there exists $\gamma_1,\gamma_2>0$ such that
\begin{align}
    \bbP_\mu\pr{\bbV\pr{p_{Z^n|C_n},p_{Z^n|C_{2,n}}}>\exp\pr{-\gamma_1n}}\nonumber\\
    \le\exp\pr{-\exp(\gamma_2n)}.\label{eq:Doble_Expo_One_Sided}
\end{align}
\end{lemma}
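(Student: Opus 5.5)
The proof splits according to which of the two regions $(R_1,R_2)$ lies in.

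\emph{Case $\calR_1$.} We use Lemma~\ref{lemma:Resolvability_1} twice. Applied directly, it gives that $p_{Z^n|C_n}$ is $\exp(-\gamma_1 n)$-close to $q^{\otimes n}_{Z|U_0}(\cdot|u_0^n)$ except with doubly-exponentially small probability. Next, note that $p_{Z^n|C_{2,n}}$ is the output of a degenerate resolvability code in which Transmitter~1 uses infinite rate. Concretely, its codeword is drawn i.i.d.\ from $p_{U_1|U_0}$, so averaging over it produces the channel $W_{Z|U_0U_2}$, i.e., $W_{Z|U_0U_1U_2}$ averaged over $p_{U_1|U_0}$. Hence $p_{Z^n|C_{2,n}}$ is a single-user soft-covering construction with $2^{nR_2}$ codewords for the channel $u_0,u_2\mapsto z$. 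The single-user strong soft-covering lemma (the $R_1\to\infty$ specialization of Lemma~\ref{lemma:Resolvability_1}) applies because $R_2>\bbI(U_2;Z|U_0)$. It gives closeness of $p_{Z^n|C_{2,n}}$ to the same i.i.d.\ target. The triangle inequality and a union bound over the two bad events then yield \eqref{eq:Doble_Expo_One_Sided}, with $\gamma_1$ halved and the smaller $\gamma_2$.

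\emph{Case $\calR_2$.} Here we condition on $(u_0^n,\calC_{2,n})$ and write
\[
\bbV\pr{p_{Z^n|C_n},p_{Z^n|C_{2,n}}}\le \frac{1}{2^{nR_2}}\sum_{s_2}\bbV\pr{p_{Z^n|C_{1,n},u_0^n,u_2^n(s_2)},\,W^{\otimes n}_{Z|U_0U_2}(\cdot|u_0^n,u_2^n(s_2))}.
\]
This holds by convexity of the total variation distance, since both distributions are uniform mixtures over $s_2$. For each fixed $s_2$, the inner term is a single-user soft-covering problem. The codebook $C_{1,n}$ is drawn i.i.d.\ from $p_{U_1|U_0}$, and $U_1$ is independent of $U_2$ given $U_0$. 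The channel is $(u_0,u_2,u_1)\mapsto z$, obtained from the entangled POVMs; it is memoryless because the state is $\Psi^{\otimes n}$ and the measurements are product, as in \eqref{eq:Encoding_Res}. The side information is $(u_0^n,u_2^n(s_2))$, which is i.i.d.\ $p_{U_0U_2}$. Since $R_1>\bbI(U_1;Z|U_0,U_2)$, the strong soft-covering lemma with i.i.d.\ side information (as in \cite{Cuff16,Frey18}) gives the following. The inner distance exceeds $\exp(-\gamma_1 n)$ with probability at most $\exp(-\exp(\gamma_2 n))$, provided $(u_0^n,u_2^n(s_2))$ is typical. Atypical side information occurs only with exponentially small probability, and this must also be made doubly exponential. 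To do so, we count the atypical codewords $u_2^n(s_2)$ among the $2^{nR_2}$ independent draws: by a Chernoff bound, their fraction exceeds $\exp(-\delta n)$ only with doubly-exponentially small probability. Any fixed $u_0^n$ is handled by typicality, which is again a Chernoff/Hoeffding event. A union bound over the $2^{nR_2}$ values of $s_2$ preserves the doubly-exponential decay. Averaging then bounds the sum by $\exp(-\gamma_1 n)+\exp(-\delta n)$.

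\textbf{Main obstacle.} The hardest step is carrying out the single-user strong soft-covering argument uniformly over typical side-information sequences, since they are not freely chosen but come from the random codebook $C_{2,n}$. The plan is to follow the McDiarmid/Bernstein route of \cite{Cuff16}. First truncate to jointly typical triples to bound the likelihood ratio by $2^{n(\bbI(U_1;Z|U_0U_2)+\epsilon)}$. Then show the expected distance is exponentially small. Finally, apply a bounded-differences concentration over the $2^{nR_1}$ independent codewords. Entanglement enters only through the effective channel $W_{Z|U_0U_1U_2}$, which is a legitimate memoryless classical channel, so the classical proof transfers unchanged.
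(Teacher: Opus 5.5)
Your Case $\calR_1$ is the paper's argument: a triangle inequality through $q_{Z|U_0}^{\otimes n}$, Lemma~\ref{lemma:Resolvability_1} applied to both codes, then a union bound.

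\textbf{Case $\calR_2$: same reduction, different assembly.} You share the paper's key reduction. Convexity of $\bbV$ over the uniform mixture in $s_2$ leaves a single-user soft-covering problem for $C_{1,n}$ through the effective memoryless channel $W_{Z|U_0U_1U_2}$, with target $W_{Z|U_0U_2}^{\otimes n}$ and side information $(u_0^n,u_2^n(s_2))$. The paper then proceeds as follows.
\begin{itemize}
\item It uses symmetry in $s_2$ to reduce $\bbE_\mu\bbV$ to one component with \emph{random} side information $(U_0^n,U_2^n)$.
\item It bounds the atypical mass of that component with a R\'enyi/Markov estimate plus the Janson-type Lemma~\ref{lemma:Bounding_Atypical_One_Sided}.
\item It bounds the typical part with the Chernoff-based Lemma~\ref{lemma:Bounding_Typical}, uniformly in $(u_0^n,u_2^n,z^n)$.
\end{itemize}
You instead condition on $(u_0^n,\calC_{2,n})$ and apply McDiarmid-type concentration to each component. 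You then union-bound over $s_2$, which also handles the dependence created by the shared $C_{1,n}$. Finally, you discard components with atypical $u_2^n(s_2)$ via a Chernoff count.

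The paper's route reuses the machinery of Lemma~\ref{lemma:Resolvability_1}. Yours makes explicit two points the paper leaves implicit:
\begin{itemize}
\item The symmetry step controls only an expectation, so a union over $s_2$ like yours is needed to get a doubly-exponential statement for the mixture.
\item The proof of Lemma~\ref{lemma:Bounding_Atypical_One_Sided} conditions on $(v_0^n,v_2^n)$ while keeping the unconditional bound $\mu$ in the threshold. That is justified only for side information whose conditional atypical probability is at most of order $\mu$, which is precisely the typical sequences your count isolates.
\end{itemize}

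\textbf{A correction to your plan.} Typicality of a single random sequence holds only with exponentially small failure probability. So ``any fixed $u_0^n$ is handled by typicality'' cannot be made doubly exponential. Likewise, your count of atypical codewords is doubly exponential only if $2^{nR_2}$ grows exponentially. You need a fraction threshold $2^{-\delta n}$ with $\delta$ below both $R_2$ and the per-codeword atypicality exponent, which requires $R_2>0$.

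When $R_2=0$, which $\calR_2$ allows, an atypical $u_2^n(1)$ has merely exponentially small probability, as does an atypical $u_0^n$. Either can genuinely make soft covering fail. The doubly-exponential bound should therefore be stated conditionally on typical $(u_0^n,\calC_{2,n})$. This is a limitation of the lemma as literally stated, not of your approach; the paper's proof does not overcome it either.

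It is also harmless downstream. In Appendix~\ref{proof:thm:Achievable}, the union bound runs over messages with $u_0^n$ and the second codebook shared. The atypical event is therefore paid once, not $2^{nR_1}$ times.
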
Similar to the proof of Lemma~\ref{lemma:Resolvability_1}, the proof is based on coded time sharing, random coding, and distribution approximation methods. Similarly to before, we first generate a time-sharing sequence represented by the auxiliary \ac{RV} $U_0$. Conditioned on $U_0$, Transmitter~$i$, for $i\in\{1,2\}$, superimposes a codebook, represented by $U_i$, to encode the local randomness $S_i$. The channel inputs $X_1$ and $X_2$ are then generated via encoding \acp{POVM} that depend on the classical auxiliary \acp{RV} $(U_0,U_1)$ and $(U_0,U_2)$, respectively. 
Since these \acp{POVM} act on entangled states shared between the transmitters, they can induce correlations between $X_1$ and $X_2$. A complete proof of Lemma~\ref{lemma:one_Sided_MAC_Res} is provided in Appendix~\ref{proof:lemma:one_Sided_MAC_Res}.
\begin{lemma}[Cardinality Bounds and Pure Entangled States]
\label{lemma:Pure_Shared_States_Resol}
    The union of the inner bound in Lemma~\ref{lemma:Resolvability_1} and Lemma~\ref{lemma:one_Sided_MAC_Res} can be achieved using auxiliary variables $U_0, U_1, U_2$ satisfying $\abs{\calU_0}\le 3$ and $\abs{\calU_i} \le 3\abs{\calX_1}\abs{\calX_2}$ for $i\in\{1,2\}$, together with pure shared states $\rho_{E_1E_2} \triangleq \den{\Psi_{E_1E_2}}{\Psi_{E_1E_2}}$. That is, restricting the union to such auxiliary variables and pure states does not reduce the achievable regions.
\end{lemma}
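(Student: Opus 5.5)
The plan mirrors the argument for Lemma~\ref{lemma:Pure_Shared_States}, in two independent parts: purify the state, then reduce the alphabets.

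\emph{Pure states.} Fix $(p_{U_0}p_{U_1|U_0}p_{U_2|U_0},\calL_1\otimes\calL_2,\Psi_{E_1E_2})$. Let $\ket{\phi}_{E_1E_2R}$ be a purification of $\Psi_{E_1E_2}$, and give the purifying system $R$ to Transmitter~1, so that $E_1'\triangleq E_1R$ and $E_2'\triangleq E_2$. Extend the POVMs as $L_1'(x_1|u_0,u_1)\triangleq L_1(x_1|u_0,u_1)\otimes\mathds{I}_R$ and $L_2'\triangleq L_2$. Since $\tra_R\den{\phi}{\phi}=\Psi_{E_1E_2}$, this gives
\begin{align*}
\tra\br{\pr{L_1'\otimes L_2'}\den{\phi}{\phi}}=\tra\br{\pr{L_1\otimes L_2}\Psi_{E_1E_2}}.
\end{align*}
Hence the classical joint law of $(U_0,U_1,U_2,X_1,X_2,Z)$ in \eqref{eq:Target_Dist} is unchanged, and so are all mutual informations in \eqref{eq:Resolvability_11} and in $\calR_1,\calR_2$. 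Dimensions stay finite, so no limiting argument is needed.

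\emph{Cardinality of $U_1$ and $U_2$.} Fix $u_0$ and the state. The quantities in the regions depend on $p_{U_1|U_0=u_0}$ only through:
\begin{itemize}
\item the induced conditional law $\sum_{u_1}p(u_1|u_0)\,q(x_1,x_2|u_0,u_1,u_2)$, for each $u_2$;
\item the averages $\sum_{u_1}p(u_1|u_0)\bbH(Z|U_0=u_0,U_1=u_1)$ and $\sum_{u_1}p(u_1|u_0)\bbH(Z|U_0=u_0,U_1=u_1,U_2)$.
\end{itemize}
The induced law fixes $p_{X_1X_2Z|U_0=u_0}$ and $\bbH(Z|U_0=u_0)$, using the product structure in \eqref{eq:Joint_Dist_Single_Letter_1}.

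To handle the $u_2$ dependence, use the alternative encoding from the proof of Lemma~\ref{lemma:Pure_Shared_States}. Transmitter~1's measurement outcome, given $u_1$, is a classical post-processing index. Relabel $U_1$ by the induced channel $u_1\mapsto\{L_1(\cdot|u_0,u_1)\}$, and preserve the marginal $p_{X_1X_2|U_0=u_0}$ (of size $\abs{\calX_1}\abs{\calX_2}-1$) together with the required entropy functionals. Applying the Fenchel--Eggleston--Carath\'eodory theorem~\cite{Eggleston} to the connected set of POVM elements, indexed by $u_1$, with these continuous functionals, keeps all constraints and gives $\abs{\calU_1}\le 3\abs{\calX_1}\abs{\calX_2}$. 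The factor $3$ comes from the three rate functionals of $\calR_1$ together with the single constraint of $\calR_2$, and is treated as in Lemma~\ref{lemma:Pure_Shared_States}. The argument for $U_2$ is symmetric.

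\emph{Cardinality of $U_0$.} Apply Carath\'eodory once more, now to the three functionals $\bbI(U_1;Z|U_0)$, $\bbI(U_2;Z|U_0)$ and $\bbI(U_1,U_2;Z|U_0)$. Each is an average over $u_0$, and the region $\calR_2$ is absorbed into the same bookkeeping. Because the regions are defined by strict inequalities on these averages, preserving the values is enough. This gives $\abs{\calU_0}\le 3$.

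\emph{Main obstacle.} The hard step is the $U_1,U_2$ reduction. There the entanglement couples $U_1$ to $X_2$, so the conditional law $p_{X_1X_2|U_0U_1U_2}$ is not a product. I would first try the perturbation argument~\cite{Perurbation,Perturbation_Quantum}: perturb $p_{U_1|U_0}$ multiplicatively, $p_\epsilon(u_1|u_0)=p(u_1|u_0)(1+\epsilon\phi(u_1))$, with $\phi$ chosen orthogonal to the $\abs{\calX_1}\abs{\calX_2}$-dimensional linear constraints. This keeps $p_{U_2X_1X_2|U_0}$ fixed, and the rate expressions are then linear in $\epsilon$. Pushing $\epsilon$ to the boundary removes a support point without reducing the rates.

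The crucial check is that $\bbH(Z|U_0,U_2)$, $\bbH(Z|U_0)$ and the conditional entropies given $U_1$ are all either preserved or linear in $\epsilon$. This holds because $L_1$ does not depend on $u_2$, so the joint law factors through $p_{U_1|U_0}$ linearly.
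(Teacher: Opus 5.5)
Your purification step is correct and takes a cleaner route than the paper's. You hand the purifying system $R$ to Transmitter~1 and use $L_1\otimes\mathds{I}_R$, which reproduces the input law directly; Appendix~\ref{proof:lemma:Pure_Shared_States} instead uses the spectral decomposition with classical copies $\ket{v}_{B_1}\ket{v}_{B_2}$ and POVMs $L_i\otimes\den{v}{v}$. Your $U_0$ step is the paper's Fenchel--Eggleston--Carath\'eodory argument, except that the union over $\calR_2$ should be treated separately, since $\bbI(U_1;Z|U_0,U_2)$ is a fourth functional.

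The gap is in reducing $\abs{\calU_1}$ and $\abs{\calU_2}$. The term $\bbH(Z|U_0,U_2)$ enters both $\bbI(U_2;Z|U_0)$ and $\bbI(U_1;Z|U_0,U_2)$, and it is not an average over $u_1$. It depends on $p_{U_1|U_0}$ through $p_{Z|U_0U_2}(\cdot|u_0,u_2)=\sum_{u_1}p(u_1|u_0)\,p_{Z|U_0U_1U_2}(\cdot|u_0,u_1,u_2)$ for every $u_2$. Your first list (the induced law for each $u_2$) is the right set of constraints. However, it has about $\abs{\calU_2}\pr{\abs{\calX_1}\abs{\calX_2}-1}$ entries, so the Carath\'eodory bound grows with $\abs{\calU_2}$, and the symmetric step for $U_2$ then makes the argument circular. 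Replacing that list by the single marginal $p_{X_1X_2|U_0}$ drops exactly this term. Relabelling $U_1$ by its POVM does not rescue the count either: a $u_2$-free way to fix these laws is to fix the averaged POVM $\sum_{u_1}p(u_1|u_0)L_1(\cdot|u_0,u_1)$, but that costs a number of constraints growing with the entanglement dimension, which is unbounded. Your explanation of the factor $3$ is also wrong. In the paper it is $\abs{\calU_0}\le3$ multiplying a per-$u_0$ bound of $\abs{\calX_1}\abs{\calX_2}$.

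Your perturbation fallback rests on a false claim. The constraints $\bbE[\phi(U_1)|U_0,X_1,X_2]=0$ do fix $p_{X_1X_2|U_0}$. But $p_\epsilon(u_2,x_1,x_2|u_0)=p(u_2,x_1,x_2|u_0)\bigl(1+\epsilon\,\bbE[\phi(U_1)|u_0,u_2,x_1,x_2]\bigr)$, and with entangled POVMs $U_1-(U_0,X_1,X_2)-U_2$ is not a Markov chain. For example, in a CHSH-type strategy the output parity is informative about $u_1$ only for some values of $u_2$. So $p_{U_2X_1X_2|U_0}$ moves, and $\bbH(Z|U_0,U_2)$ is a concave, generally non-linear function of $\epsilon$. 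The fact that $L_1$ does not depend on $u_2$ makes the joint law affine in $\epsilon$, but not the entropy of its $(U_2,Z)$ marginal. Pushing $\epsilon$ to the boundary can therefore increase $\bbI(U_2;Z|U_0)$ or $\bbI(U_1;Z|U_0,U_2)$ and shrink the region.

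The missing ingredient is the extremality argument of Appendix~\ref{proof:lemma:Pure_Shared_States}:
\begin{enumerate}
\item Impose only $\bbE[\phi(U_1)|X_1,X_2]=0$ and $\bbE[\phi(U_1)]=0$.
\item Take a distribution that is extremal for a weighted sum of corner-point rates.
\item The second-order optimality condition \cite{Perurbation} then forces $\bbE[\phi(U_1)|U_0,U_2,Z]=0$; the paper does this with $(U_2,Y)$. Hence $p_{U_2Z|U_0}$ is in fact preserved, and the objective is constant in $\epsilon$.
\end{enumerate}
In the resolvability setting the weighted sum is minimized, so this works only when $\bbH(Z|U_0,U_2)$ enters with a positive coefficient. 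Working out which corners and weights satisfy that sign condition is the bookkeeping your sketch omits.
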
The proof of Lemma~\ref{lemma:Pure_Shared_States_Resol} is similar to that of Lemma~\ref{lemma:Pure_Shared_States} and is omitted for brevity.
\begin{remark}[Strong Soft-Covering Lemma for \texorpdfstring{\acp{MAC}}{MACs} without Entanglement]
By setting $\Psi_{E_1E_2}=\Psi_{E_1}\otimes\Psi_{E_2}$, and taking $U_0=\emptyset$, 
Lemma~\ref{lemma:Resolvability_1} reduces to the soft-covering result for \acp{MAC} without entanglement given in \cite[Theorem~3]{YassaeeMAWC} and \cite[Lemma~1]{ISIT21}, but under a stronger doubly exponential guarantee.
\end{remark}
\begin{remark}[Strong Soft Covering with Rate-Limited Entanglement]
    Similar to Section~\ref{sec:Limited_Entanglement}, the achievable rate regions derived in this section can be extended to the setting in which the transmitters share entanglement resources of limited rate.
\end{remark}
\section{Conclusion}
\label{sec:Conclusion}
In this work, we investigated secure communication over a classical \ac{MAWTC} when the transmitters have access to shared entanglement. Our main contributions are the derivation of inner and outer bounds on the secrecy capacity region for a general \ac{MAWTC} with entangled transmitters, and the demonstration that shared entanglement can strictly enlarge the secrecy capacity compared to the case in which the transmitters share only classical \ac{CR}.

Furthermore, we studied the output statistics of a classical \ac{MAC} with entangled transmitters and established two strong soft-covering lemmas tailored to this setting. These results enabled us to prove semantic security for the \ac{MAWTC} and extend existing soft-covering arguments to scenarios involving entanglement.
\begin{appendices}
\begin{figure*}[b!]
    \hrulefill
\begin{align}
    &\bbP\pr{\pr{\hat{M}_1,\hat{S}_1,\hat{M}_2,\hat{S}_2}\ne\pr{M_1,S_1,M_2,S_2}\big\lvert(M_1,S_1,M_2,S_2)=(m_1,s_1,m_2,s_2)}\nonumber\\
    &=\bbP\pr{\calE_{m_1,s_1,m_2,s_2}^c\cup\bigcup_{\pr{\tilde{m}_1,\tilde{s}_1,\tilde{m}_2,\tilde{s}_2}\ne\pr{m_1,s_1,m_2,s_2}}\calE_{\tilde{m}_1,\tilde{s}_1,\tilde{m}_2,\tilde{s}_2}\Big\lvert\pr{m_1,s_1,m_2,s_2}}\nonumber\\
    &\le\bbP\pr{\calE_{m_1,s_1,m_2,s_2}^c\lvert\pr{m_1,s_1,m_2,s_2}}+\sum_{\substack{\pr{\tilde{m}_1,\tilde{s}_1}=\pr{1,1}\\ \pr{\tilde{m}_1,\tilde{s}_1}\ne \pr{m_1,s_1}}}^{2^{n(R_1+R'_1)}}\bbP\pr{\calE_{\tilde{m}_1,\tilde{s}_1,m_2,s_2}\Big\lvert(m_1,s_1,m_2,s_2)}\nonumber\\
    &\qquad+\sum_{\substack{\pr{\tilde{m}_2,\tilde{s}_2}=\pr{1,1}\\\pr{\tilde{m}_2,\tilde{s}_2}\ne \pr{m_2,s_2}}}^{2^{n(R_2+R'_2)}}\bbP\pr{\calE_{m_1,s_1,\tilde{m}_2,\tilde{s}_2}\Big\lvert(m_1,s_1,m_2,s_2)}+\sum_{\substack{\pr{\tilde{m}_1,\tilde{s}_1}=\pr{1,1}\\\pr{\tilde{m}_1,\tilde{s}_1}\ne \pr{m_1,s_1}}}^{2^{n(R_1+R'_1)}}\sum_{\substack{\pr{\tilde{m}_2,\tilde{s}_2}=\pr{1,1}\\\pr{\tilde{m}_2,\tilde{s}_2}\ne \pr{m_2,s_2}}}^{2^{n(R_2+R'_2)}}\bbP\pr{\calE_{\tilde{m}_1,\tilde{s}_1,\tilde{m}_2,\tilde{s}_2}\Big\lvert(m_1,s_1,m_2,s_2)},\label{eq:Union_Bound}
\end{align}
\end{figure*}
\section{Proof of Theorem~\ref{thm:Achievable}}
\label{proof:thm:Achievable}
We show that for each $\delta_1,\delta_2,\epsilon>0$, there exists a sequence of codes $\pr{2^{n(R_1-\delta_1)},2^{n(R_2-\delta_2)},n,\epsilon}$ codes $\br{\calC_n}_{n\in\bbN}$ for the \ac{MAWTC} $W_{YZ\lvert X_1X_2}$ that satisfies both the reliability and the semantic security constraints. 
Fix $p_{U_0},p_{U_1\lvert U_0},p_{U_2\lvert U_0}$, a bipartite state $\Psi_{E_1E_2}$, a set of \ac{POVM} operators $\calL_i(u_0,u_i)=\br{L_i(x_i\lvert u_0,u_i)}_{i\in\br{1,2}}$, and $\delta_1,\delta_2,\epsilon>0$. We assume that the transmitters share $n$ copies of the bipartite state, i.e., $\Psi_{E_1E_2}^\on$. 
\subsection{Random Codebook Generation}
\label{sec:Codebook_Cons} 
Let $u_0^n$ be a random time-sharing sequence, generated \ac{iid} according to $\prod_{t=1}^np_{U_0}(u_{0,t})$. Also, let $C_{i,n}\triangleq\br{U_i^n(m_i,s_i)}_{(m_i,s_i)\in\calM_i\times\calS_i}$, for $i\in\{1,2\}$, where $\calM_i\triangleq\brk{2^{nR_i}}$ and $\calS_i\triangleq\brk{2^{nR'_i}}$, be a random codebook generated \ac{iid} according to $\prod_{t=1}^np_{U_i\lvert U_0}(u_{i,t}\lvert u_{0,t})$. The indices $(m_i,s_i)$ can be seen as a one-layer binning, where $m_i$ denotes the bin index and $s_i$ denotes the codeword index in the bin number $m_i$. A realization of $C_{i,n}$ is denoted by  $\calC_{i,n}\triangleq\br{u_i^n(m_i,s_i)}_{(m_i,s_i)\in\calM_i\times\calS_i}$. 
Now, let $C_n\triangleq\br{U_0^n,C_{1,n},C_{2,n}}$ and $\calC_n\triangleq\br{u_0^n,\calC_{1,n},\calC_{2,n}}$.

Letting $\mathfrak{C}_n$ denote the set of all possible realizations of $C_n$, the codebook construction described above induces a probability measure $\mu$ over this ensemble. For every codebook $\calC_n \in \mathfrak{C}_n$, we have
\begin{align*}
    \mu\pr{\calC_n}=p_{U_0}^\on(u_0^n)\prod_{(m_1,s_1)\in\calM_1\times\calS_1}p_{U_1|U_0}^\on(u_1^n(m_1,s_1)|u_0^n)\nonumber\\
    \times\prod_{(m_2,s_2)\in\calM_2\times\calS_2}p_{U_2|U_0}^\on(u_2^n(m_2,s_2)|u_0^n).
\end{align*}

\subsection{Encoding}
Given the random codebooks, the message $m_i$, and the local randomness $s_i$, 
the Transmitter~$i$, for $i\in\{1,2\}$, first computes the sequences $\pr{u_0^n,u_i^n(m_i,s_i)}$. It then performs the measurement $\bigotimes_{t=1}^n\calL_i\pr{u_{0,t},u_{i,t}(m_i,s_i)}$ on its share of the entangled state $E_i^n$ and transmits the resulting measurement outcomes over the channel. Therefore,
\begin{align*}
    &f\pr{x_1^n,x_2^n\lvert m_1,s_1,m_2,s_2,\calC_n}\nonumber\\
    &=\tra\Big[\Big(L_1^n\big(x_1^n\lvert u_0^n,u_1^n\pr{m_1,s_1}\big)\otimes L_2^n\big(x_2^n\lvert u_0^n,u_2^n\pr{m_2,s_2}\big)\Big)\nonumber\\
    &\qquad\times\Psi_{E_1E_2}^\on\Big]\nonumber\\
    &=\prod_{t=1}^n\tra\Big[\Big(L_1\big(x_{1,t}\lvert u_{0,t},u_{1,t}\pr{m_1,s_1}\big)\nonumber\\
    &\qquad\otimes L_2\big(x_{2,t}\lvert u_{0,t},u_{2,t}\pr{m_2,s_2}\big)\Big)\Psi_{E_1E_2}\Big],
\end{align*}where $L_i^n\big(x_i^n\lvert u_0^n,u_i^n\pr{m_i,s_i}\big)\triangleq\bigotimes_{t=1}^nL_i\big(x_{i,t}\lvert u_{0,t},u_{i,t}\pr{m_i,s_i}\big)$, for $i\in\{1,2\}$. For a fixed codebook $\calC_n$, this encoding scheme induces the following joint distribution,
\begin{align*}
    &p_{M_1S_1M_2S_2U_0^nU_1^nU_2^nX_1^nX_2^nY^nZ^n\lvert\calC_n}\big(m_1,s_1,m_2,s_2,\tilde{u}_0^n,\tilde{u}_1^n,\tilde{u}_2^n\nonumber\\
    &,x_1^n,x_2^n,y^n,z^n\big)\triangleq\frac{1}{2^{n(R_1+R'_1+R_2+R'_2)}}\nonumber\\
    &\quad\times\indic{1}_{\br{\tilde{u}_0^n=u_0^n}\cap\br{\tilde{u}_1^n=u_1^n(m_1,s_1)}\cap\br{\tilde{u}_2^n=u_2^n(m_2,s_2)}}\nonumber\\
    &\quad\times f\pr{x_1^n,x_2^n\lvert m_1,s_1,m_2,s_2,\calC_n}W_{YZ\lvert X_1X_2}\pr{y^n,z^n\lvert x_1^n,x_2^n}.
\end{align*}

\subsection{Decoding and Error Probability Analysis}
The decoder declares that $(\hat{m}_1,\hat{m}_2)=(m_1,m_2)$ if there is a unique pair $(\hat{m}_1,\hat{m}_2)$ such that $\pr{U_0,U_1^n(\hat{m}_1,s_1),U_2^n(\hat{m}_2,s_2),Y^n}\in\calT_\epsilon^{(n)}$, for some $(s_1,s_2)$. 
Now we define the following error events
\begin{subequations}
\begin{align*}
    &\calE_{m_1,s_1,m_2,s_2}\triangleq\nonumber\\
    &\br{\pr{U_0^n,U_1^n(\hat{m}_1,s_1),U_2^n(\hat{m}_2,s_2),Y^n}\in\calT_\epsilon^{(n)}\pr{p_{U_0U_1U_2Y}}},
\end{align*}where $p_{U_0U_1U_2Y}$ is the marginal of the following distribution,
\begin{align*}
    &p_{U_0U_1U_2X_1X_2YZ}(u_0,u_1,u_2,x_1,x_2,y,z)\nonumber\\
    &=p_{U_0}(u_0)p_{U_1\lvert U_0}(u_1\lvert u_0)p_{U_2\lvert U_0}(u_2\lvert u_0)\nonumber\\
    &\quad\times\tra\br{\pr{L_1(x_1\lvert u_0,u_1)\otimes L_2(x_2\lvert u_0,u_2)}\Psi_{E_1E_2}}\nonumber\\
    &\quad\times W_{YZ\lvert X_1X_2}(y,z\lvert x_1,x_2).
\end{align*}
\end{subequations}Then the probability of error is
\begin{align*}
    &\bbE_\mu\mathsf{e}(C_n)\nonumber\\
    &=\bbE_\mu\bbP\pr{\pr{\hat{M}_1,\hat{M}_2}\ne\pr{M_1,M_2}}\nonumber\\
    &\le\bbE_\mu\bbP\pr{\pr{\hat{M}_1,\hat{S}_1,\hat{M}_2,\hat{S}_2}\ne\pr{M_1,S_1,M_2,S_2}}\nonumber\\
    &=\frac{1}{2^{n(R_1+R'_1+R_2+R'_2)}}\nonumber\\
    &\sum_{m_1=1}^{2^{nR_1}}\sum_{s_1=1}^{2^{nR'_1}}\sum_{m_2=1}^{2^{nR_2}}\sum_{s_2=1}^{2^{nR'_2}}\bbP\left(\pr{\hat{M}_1,\hat{S}_1,\hat{M}_2,\hat{S}_2}\ne\right.\nonumber\\
    &\quad\pr{M_1,S_1,M_2,S_2}\big\lvert(M_1,S_1,M_2,S_2)=(m_1,s_1,m_2,s_2)\Big),
\end{align*}where the inequality follows since the probability of error for decoding $(M_1,M_2)$ is upper bounded by the probability of error for decoding $(M_1,S_1,M_2,S_2)$. Now we have \eqref{eq:Union_Bound} at the bottom of the page, 
where the inequality follows from the Union Bound. 
From the law of large numbers, the first term on the \ac{RHS} of the \eqref{eq:Union_Bound} vanishes to zero when $n$ grows. To bound the second term on the \ac{RHS} of \eqref{eq:Union_Bound}, note that the codewords are generated independently, and that for any $\pr{\tilde{m}_1,\tilde{s}_1} \neq (m_1,s_1)$, the channel output $Y^n$ is independent of $\br{U_1(\tilde{m}_1,\tilde{s}_1)}_{(\tilde{m}_1,\tilde{s}_1)\in\calM_1\times\calS_1}$. Therefore, we have,
\setcounter{equation}{19}
\begin{subequations}\label{eq:Pe_m2_Sum}
\begin{align}
    &\bbP\pr{\calE_{\tilde{m}_1,\tilde{s}_1,m_2,s_2}\Big\lvert(m_1,s_1,m_2,s_2)}\nonumber\\
    &=\sum_{\pr{u_0^n,u_1^n,u_2^n,y^n}\in\calT_\epsilon^{(n)}(p_{U_0U_1U_2Y})}p_{U_0}^\on(u_0^n)p_{U_1\lvert U_0}^\on(u_1^n\lvert u_0^n)\nonumber\\
    &\qquad\times p_{U_2\lvert U_0}^\on(u_2^n\lvert u_0^n)W_{Y\lvert U_0U_2}^\on(y^n\lvert u_0^n,u_2^n)\nonumber\\
    &\mathop\le\limits^{(a)}\hspace{-3mm}\sum_{\pr{u_0^n,u_1^n,u_2^n,y^n}\in\calT_\epsilon^{(n)}(p_{U_0U_1U_2Y})}\hspace{-8mm}2^{-n(1-\epsilon)\bbH(U_0,U_1)}2^{-n(1-\epsilon)\bbH(U_2,Y\lvert U_0)}\nonumber\\
    &=\abs{\calT_\epsilon^{(n)}(p_{U_0U_1U_2Y})}2^{-n(1-\epsilon)\bbH(U_0,U_1)}\times2^{-n(1-\epsilon)\bbH(U_2,Y\lvert U_0)}\nonumber\\
    &\mathop\le\limits^{(b)}2^{n(1+\epsilon)\bbH(U_0,U_1,U_2,Y)}2^{-n(1-\epsilon)\bbH(U_0,U_1)}2^{-n(1-\epsilon)\bbH(U_2,Y\lvert U_0)}\nonumber\\
    &\mathop=\limits^{(c)}2^{-n\pr{\bbI\pr{U_1;Y\lvert U_0,U_2}-3\epsilon_n}},\label{eq:Pe_m1}
\end{align}where
\begin{itemize}
    \item[$(a)$] and $(b)$ follow from \cite[Theorem~1.1 and 1.2]{Kramer_Book};
    \item[$(c)$] follows since $U_1-U_0-U_2$ forms a Markov chain;
\end{itemize}
Similarly, one can show 
    \begin{align}
    \bbP\pr{\calE_{m_1,s_1,\tilde{m}_2,\tilde{s}_2}\Big\lvert(m_1,s_1,m_2,s_2)}&\le2^{-n\pr{\bbI\pr{U_2;Y\lvert U_0,U_1}-3\epsilon_n}},\\
    \bbP\pr{\calE_{\tilde{m}_1,\tilde{s}_1,\tilde{m}_2,\tilde{s}_2}\Big\lvert(m_1,s_1,m_2,s_2)}&\le2^{-n\pr{\bbI\pr{U_1,U_2;Y\lvert U_0}-3\epsilon_n}}.
\end{align}
\end{subequations}Substituting \eqref{eq:Pe_m2_Sum} into the \ac{RHS} of 
\eqref{eq:Union_Bound} shows that if,
\begin{subequations}\label{eq:Reliability_Constraints}
    \begin{align}
        R_1+R'_1&<\bbI(U_1;Y\lvert U_0,U_2),\\
        R_2+R'_2&<\bbI(U_2;Y\lvert U_0,U_1),\\
        R_1+R'_1+R_2+R'_2&<\bbI(U_1,U_2;Y\lvert U_0),
    \end{align}
\end{subequations}then the average error probability satisfies
\begin{align}
    \mathsf{e}(C_n)\xrightarrow[n\to\infty]{}0.\label{eq:Pe_to0}
\end{align}
\subsection{Security Analysis}First, consider \eqref{eq:Bounding_MIM1M2Z_1} at the bottom of the page,
\begin{figure*}[b!]
    \hrulefill
    \setcounter{equation}{22}
\begin{subequations}
\begin{align}
    \bbI\pr{M_1,M_2;Z^n\lvert\calC_n}&\le\bbI\pr{M_1,M_2;U_0^n,Z^n\lvert\calC_n}\nonumber\\
    &=\bbD\pr{p_{M_1,M_2,U_0^n,Z^n|\calC_n}\lVert p_{M_1,M_2|\calC_n}\,p_{U_0^n,Z^n|\calC_n}}\nonumber\\
    &\mathop=\limits^{(a)}\bbD\left(p_{M_1,M_2,U_0^n|\calC_n}\,p_{Z^n|M_1,M_2,U_0^n,\calC_n}\lVert p_{M_1,M_2U_0^n|\calC_n} p_{Z^n|U_0^n,\calC_n}\right)\label{eq:Step_a_Sem_Sec}\\
    &\mathop\le\limits^{(b)}\bbD\pr{p_{M_1,M_2,U_0^n|\calC_n}\,p_{Z^n|M_1,M_2,U_0^n,\calC_n}\lVert p_{M_1,M_2U_0^n|\calC_n}\,q_{Z|U_0}^\on}\nonumber\\
    &\mathop\le\limits^{(c)}\max_{(m_1,m_2)\in\calM_1\times\calM_2}\hspace{-1mm}\bbD\left(p_{U_0^n|\calC_n}\,p_{Z^n|M_1,M_2,U_0^n,\calC_n}\pr{\cdot|m_1,m_2,\cdot,\cdot}\lVert p_{U_0^n|\calC_n}\,q_{Z|U_0}^\on\right),\label{eq:Bounding_MIM1M2Z_1}
\end{align}
\end{subequations}
\end{figure*}
where $(a)$ and $(c)$ follow since $(M_1,M_2)$ is independent of $U_0^n$, and $(b)$ follows since
\begin{align*}
    &\bbD\pr{p_{M_1,M_2,U_0^n,Z^n|\calC_n}\lVert p_{M_1,M_2U_0^n|\calC_n}\,p_{Z^n|U_0^n,\calC_n}}\nonumber\\
    &=\bbD\pr{p_{M_1,M_2,U_0^n,Z^n|\calC_n}\lVert p_{M_1,M_2U_0^n|\calC_n}\,q_{Z|U_0}^\on}\nonumber\\
    &\qquad-\bbD\pr{p_{U_0^n,Z^n|\calC_n}\lVert p_{U_0^n|\calC_n}\,q_{Z|U_0}^\on},
\end{align*}and relative entropy is non-negative. Maximizing both sides of \eqref{eq:Bounding_MIM1M2Z_1} \ac{wrt} $p_{M_1,M_2}$ leads to
\begin{align}
    &\max_{p_{M_1,M_2}}\bbI\pr{M_1,M_2;Z^n\lvert\calC_n}\nonumber\\
    &\le\max_{(m_1,m_2)\in\calM_1\times\calM_2}\hspace{-1mm}\bbD\left(p_{U_0^n|\calC_n}\,p_{Z^n|M_1,M_2,U_0^n,\calC_n}\pr{\cdot|m_1,m_2,\cdot,\cdot}\right.\nonumber\\
    &\qquad\left.\lVert p_{U_0^n|\calC_n}\,q_{Z|U_0}^\on\right).\label{eq:Max_MI}
\end{align}Note that $p_{U_0^n|\calC_n}\,p_{Z^n|M_1,M_2,U_0^n,\calC_n}\!\pr{\cdot|m_1,m_2,\cdot,\cdot}\ll p_{U_0^n|\calC_n}\,q_{Z|U_0}^{\otimes n}$, where $\ll$ denotes the absolute continuity of the corresponding \acp{PMF}. Moreover, according to \cite[Eq.~(30)]{Cuff13}, for \acp{PMF} $p$ and $q$ with finite support such that $p\ll q$, the relative entropy $\bbD(p\|q)$ can be upper bounded in terms of their total variation distance $\bbV(p,q)$. Now applying \cite[Lemma~9]{Goldfeld18} leads to
\begin{align}
    &\max_{(m_1,m_2)\in\calM_1\times\calM_2}\bbD\left(p_{U_0^n|\calC_n}\,p_{Z^n|M_1,M_2,U_0^n,\calC_n}\pr{\cdot|m_1,m_2,\cdot,\cdot}\right.\nonumber\\
    &\qquad\left.\lVert p_{U_0^n|\calC_n}\,q_{Z|U_0}^\on\right)\nonumber\\
    &\le\max_{(m_1,m_2)\in\calM_1\times\calM_2}\delta\pr{m_1,m_2,\calC_n}\left(n\log\abs{\calZ}\right.\nonumber\\
    &\left.\qquad-\log\delta\pr{m_1,m_2,\calC_n}+n\log q_{Z|U_0}^{(\min)}\right),\label{eq:KLD_TV}
\end{align}where $q_{Z|U_0}^{(\min)}\triangleq\min\left\{q_{Z|U_0}(z|u_0),(z,u_0)\in\calZ\times\calU_0\right.$: $\left.q_{Z|U_0}(z|u_0)>0\right\}$ and
\begin{align*}
    &\delta\pr{m_1,m_2,\calC_n}\nonumber\\
    &\triangleq\bbV\pr{p_{U_0^n|\calC_n}\,p_{Z^n|M_1,M_2,U_0^n,\calC_n}\pr{\cdot|m_1,m_2,\cdot,\cdot}, p_{U_0^n|\calC_n}\,q_{Z|U_0}^\on}\nonumber\\
    &=\bbV\pr{p_{Z^n|M_1,M_2,U_0^n,\calC_n}\pr{\cdot|m_1,m_2,\cdot,\cdot}, q_{Z|U_0}^\on}.
\end{align*}
Therefore, from \eqref{eq:Max_MI} and \eqref{eq:KLD_TV}, to obtain
\begin{align}
    \max_{p_{M_1,M_2}}\bbI(M_1,M_2;Z^n)\le\exp\pr{-n\gamma_1},\quad\text{for some}\,\gamma_1>0,\label{eq:Semantic_Sec_to0}
\end{align}it suffices to show that there exist $\calC_n\in\bbC$ and $\gamma_1>0$ such that $\max\limits_{(m_1,m_2)\in\calM_1\times\calM_2}\delta\pr{m_1,m_2,\calC_n}\le\exp\pr{-n\gamma_1}$, for $n$ large enough. We have
\begin{align}
    &\bbP\left(\left\{\max\limits_{(m_1,m_2)\in\calM_1\times\calM_2}\delta\pr{m_1,m_2,\calC_n}\right.\right.\nonumber\\
    &\qquad\qquad\qquad\le\exp\pr{-n\gamma_1}\Big\}^c\Big)\nonumber\\
    &=\bbP\Big(\!\Big\{\forall (m_1,m_2)\in\calM_1\times\calM_2,\,\delta\pr{m_1,m_2,\calC_n}\nonumber\\
    &\qquad\qquad\qquad\le\exp\pr{-n\gamma_1}\Big\}^c\Big)\nonumber\\
    &=\bbP\Big(\exists (m_1,m_2)\in\calM_1\times\calM_2,\,\delta\pr{m_1,m_2,\calC_n}\nonumber\\
    &\qquad\qquad\qquad>\exp\pr{-n\gamma_1}\Big)\nonumber\\
    &=\bbP\pr{\bigcup\limits_{(m_1,m_2)\in\calM_1\times\calM_2}\br{\delta\pr{m_1,m_2,\calC_n}>\exp\pr{-n\gamma_1}}}\nonumber\\
    &\le\sum\limits_{(m_1,m_2)\in\calM_1\times\calM_2}\bbP\big(\delta\pr{m_1,m_2,\calC_n}>\exp\pr{-n\gamma_1}\big),\label{eq:Union_Bound_Sem_Sec}
\end{align}where the inequality follows from the union bound. 
Now from Lemma~\ref{lemma:Resolvability_1}, if
\begin{subequations}\label{eq:Sec_Cons_Two_Sided}
    \begin{align}
        R'_1&>\bbI(U_1;Z\lvert U_0),\label{eq:Sec_Cons_R1p}\\
        R'_2&>\bbI(U_2;Z\lvert U_0),\label{eq:Sec_Cons_R2p}\\
        R'_1+R'_2&>\bbI(U_1,U_2;Z\lvert U_0),\label{eq:Sec_Cons_R1pR2p}
    \end{align}
\end{subequations}
then, for any $m_1\in\calM_1$, $m_2\in\calM_2$, and sufficiently large $n$, there exist $\gamma_1, \gamma_2>0$ such that
\begin{align}
    \bbP\pr{\delta\pr{m_1,m_2,\calC_n}>\exp\pr{-\gamma_1n}}\le\exp\pr{-\exp(\gamma_2n)}.\label{eq:Prob_Delta_any_M}
\end{align}
\begin{figure*}[b!]
    \hrulefill
    \setcounter{equation}{33}
\begin{align}
    \bbI\pr{M_1,M_2;Z^n\lvert\calC_n}&\mathop=\limits^{(a)}\bbE_C\sbr{\bbD\pr{p_{Z^n\lvert M_1M_2C}\lVert p_{Z^n\lvert C}\lvert p_{M_1M_2}}}\nonumber\\
    &\mathop=\limits^{(a)}\bbD\pr{p_{M_1,M_2,U_0^n|\calC_n}\,p_{Z^n|M_1,M_2,U_0^n,\calC_n}\lVert p_{M_1,M_2U_0^n|\calC_n}\,p_{Z^n|U_0^n,\calC_n}}\nonumber\\
    &=\bbD\pr{p_{M_1,M_2,U_0^n|\calC_n}\,p_{Z^n|M_1,M_2,U_0^n,\calC_n}\lVert p_{M_1,M_2U_0^n|\calC_n}\,p_{Z^n|U_0^n,\calC_{2,n}}}-\bbD\pr{p_{Z^n\lvert U_0^n,\calC_n}\lVert p_{Z\lvert U_0^n,\calC_{2,n}}}\nonumber\\
    &\mathop\le\limits^{(b)}\bbD\pr{p_{M_1,M_2,U_0^n|\calC_n}\,p_{Z^n|M_1,M_2,U_0^n,\calC_n}\lVert p_{M_1,M_2,U_0^n|\calC_n}\,p_{Z^n|U_0^n,\calC_{2,n}}}\nonumber\\
    &\mathop\le\limits^{(c)}\max_{(m_1,m_2)\in\calM_1\times\calM_2}\bbD\pr{p_{U_0^n|\calC_n}\,p_{Z^n|M_1,M_2,U_0^n,\calC_n}\pr{\cdot|m_1,m_2,\cdot,\cdot}\lVert p_{U_0^n|\calC_n}\,p_{Z^n|U_0^n,\calC_{2,n}}},\label{eq:Bounding_MIM1M2Z_4}
\end{align}
    \setcounter{equation}{29}
\end{figure*}
Substituting \eqref{eq:Prob_Delta_any_M} into \eqref{eq:Union_Bound_Sem_Sec} leads to
\begin{align}
    &\bbP\pr{\br{\max\limits_{(m_1,m_2)\in\calM_1\times\calM_2}\delta\pr{m_1,m_2,\calC_n}\le\exp\pr{-n\gamma_1}}^c}\nonumber\\
    &\le\sum\limits_{(m_1,m_2)\in\calM_1\times\calM_2}\exp\pr{-\exp(\gamma_2n)}\nonumber\\
    &=2^{n(R_1+R_2)}\exp\pr{-\exp(\gamma_2n)}\nonumber\\
    &\triangleq\lambda_n\xrightarrow[n\to\infty]{}0,\nonumber
\end{align}which implies that 
\begin{align}
    &\bbP\pr{\max\limits_{(m_1,m_2)\in\calM_1\times\calM_2}\delta\pr{m_1,m_2,\calC_n}\le\exp\pr{-n\gamma_1}}\nonumber\\
    &\ge1-\lambda_n\xrightarrow[n\to\infty]{}1.\label{eq:P_Max_TV}
\end{align}Inequality~\eqref{eq:P_Max_TV}, together with~\eqref{eq:KLD_TV}, implies that if $R'_1$ and $R'_2$ satisfy~\eqref{eq:Sec_Cons_Two_Sided}, then the probability that a randomly generated sequence of codes satisfies the semantic security constraint is arbitrarily close to~$1$ for sufficiently large~$n$.

To establish the existence of a sequence of $\bigl(2^{nR_1},2^{nR_2},n\bigr)$ reliable and semantically secure codes, we employ the Selection Lemma \cite[Lemma~2.2]{BlochBarros}.

\begin{lemma}[Selection Lemma]
    \label{lemma:Selection}
    Let $\br{B_n}_{n\in\mathbb{N}}$, where $B_n\in\calB_n$, be a sequence of \acp{RV}.  
    Let $\br{f_1^{(n)}, f_2^{(n)},\ldots, f_M^{(n)}}_{n\in\bbN}$, with $M<\infty$, be a collection of $M$ sequences of bounded functions $f_i^{(n)}:\calB_n \to \bbR_+$, for $i\in[M]$.  
    If
    \begin{align*}
        \mathbb{E}\bigl[f_i^{(n)}(B_n)\bigr] \xrightarrow[n\to\infty]{} 0,\qquad \forall\, i\in[M],
    \end{align*}
    then there exists a sequence $\{b_n\}_{n\in\mathbb{N}}$, with $b_n\in\calB_n$ for every $n$, such that
    \begin{align*}
        f_i^{(n)}(b_n) \xrightarrow[n\to\infty]{} 0,\qquad \forall\, i\in[M].
    \end{align*}
\end{lemma}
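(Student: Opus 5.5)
The plan is the usual probabilistic-method argument built on Markov's inequality and a union bound over the finitely many function sequences. Set $\eta_n \triangleq \sum_{i=1}^M \mathbb{E}\bigl[f_i^{(n)}(B_n)\bigr]$. Since $M<\infty$ and each summand tends to zero, $\eta_n\to 0$. The functions are nonnegative and bounded, so every expectation is finite and $\eta_n\ge 0$ is well defined.

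Fix $n$ and choose the threshold $\tau_n \triangleq \sqrt{\eta_n}$ when $\eta_n>0$. Markov's inequality gives
\begin{align*}
\bbP\bigl(f_i^{(n)}(B_n) > M\tau_n\bigr) \le \frac{\mathbb{E}[f_i^{(n)}(B_n)]}{M\sqrt{\eta_n}}
\end{align*}
for each $i$. Summing over $i\in[M]$ with the union bound, the probability that some $f_i^{(n)}(B_n)$ exceeds $M\sqrt{\eta_n}$ is at most $\eta_n/(M\sqrt{\eta_n})=\sqrt{\eta_n}/M$.

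Once $\eta_n$ is small enough that $\sqrt{\eta_n}/M<1$, this bound is below one. Hence the complementary event has positive probability, and we can pick a realization $b_n\in\calB_n$ with $f_i^{(n)}(b_n)\le M\sqrt{\eta_n}$ for all $i$ simultaneously.

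Two cases need separate handling:
\begin{itemize}
\item \emph{$\eta_n=0$.} Each $f_i^{(n)}(B_n)=0$ almost surely, so some realization has all $f_i^{(n)}(b_n)=0$.
\item \emph{Finitely many initial $n$ where $\sqrt{\eta_n}/M\ge 1$.} Choose $b_n$ arbitrarily; this does not affect the limit.
\end{itemize}
Since $M\sqrt{\eta_n}\to 0$, we conclude $f_i^{(n)}(b_n)\to 0$ for every $i\in[M]$.

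The only delicate point is making the threshold depend on $n$ so that the tail probabilities are summable below $1$ while the threshold itself still vanishes. The choice $\sqrt{\eta_n}$ does both. Finiteness of $M$ is what makes the union bound over the $M$ sequences harmless.
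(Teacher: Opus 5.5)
Your proof is correct. Markov's inequality at the $n$-dependent threshold $M\sqrt{\eta_n}$ and the union bound over the $M$ functions do the work. Together with your separate treatment of $\eta_n=0$ and of the finitely many $n$ with $\sqrt{\eta_n}\ge M$, they produce realizations with $\max_i f_i^{(n)}(b_n)\le M\sqrt{\eta_n}\to 0$.

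The paper does not prove this lemma. It imports it from \cite[Lemma~2.2]{BlochBarros} and only applies it, to $\mathsf{e}(C_n)$ and the semantic-security indicator. Your argument is therefore a self-contained replacement for that citation, not a variant of anything in the paper.

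The argument can be streamlined, and the ``delicate point'' you flag is milder than you suggest. The nonnegative random variable $S_n\triangleq\sum_{i=1}^M f_i^{(n)}(B_n)$ has finite mean $\eta_n$, so $S_n>\eta_n$ cannot hold almost surely. Hence for \emph{every} $n$ there is a realization $b_n$ with $\max_i f_i^{(n)}(b_n)\le\sum_{i=1}^M f_i^{(n)}(b_n)\le\eta_n$. This averaging step gives the better rate $\eta_n$ instead of $M\sqrt{\eta_n}$, needs no exceptional indices, and covers $\eta_n=0$ without a separate case. If you prefer to keep Markov's inequality, any threshold $c\,\eta_n$ with $c>1$ already keeps the union bound below one, so the square root is unnecessary.
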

Applying Lemma~\ref{lemma:Selection} to the \acp{RV} $\br{C_n}_{n\in\bbN}$ and the bounded functions $\mathsf{e}(C_n)$ and $\indic{1}_{\br{\max_{p_{M_1,M_2}}\bbI(M_1,M_2;Z^n) > \exp\pr{-n\gamma_1}}}$, and using \eqref{eq:Pe_to0} and \eqref{eq:Semantic_Sec_to0}, we conclude that there exists a sequence of codes $\br{C_n}_{n\in\bbN}$ such that
\begin{subequations}
    \begin{align}
        \mathsf{e}(C_n) &\xrightarrow[n\to\infty]{} 0, \label{eq:Pe_to0_Final}\\
        \indic{1}_{\br{\max_{p_{M_1,M_2}}\bbI(M_1,M_2;Z^n) > \exp\pr{-n\gamma_1}}} &\xrightarrow[n\to\infty]{} 0. \label{eq:Semantic_Sec_Final}
    \end{align}
\end{subequations}Since the indicator function in \eqref{eq:Semantic_Sec_Final} takes values only in $\br{0,1}$, its convergence to zero implies that there exists an $n_0\in\bbN$ such that
\begin{align*}
    \indic{1}_{\br{\max_{p_{M_1,M_2}}\bbI(M_1,M_2;Z^n) > \exp\pr{-n\gamma_1}}} = 0,
\quad \forall\, n > n_0.
\end{align*}
Consequently,
\begin{align}
\max_{p_{M_1,M_2}}\bbI(M_1,M_2;Z^n)
    \le \exp\pr{-n\gamma_1}, \qquad \forall\, n > n_0.\label{eq:Sem_sec_Final}
\end{align}Finally, applying the Fourier-Motzkin elimination procedure to \eqref{eq:Reliability_Constraints} and \eqref{eq:Sec_Cons_Two_Sided} to eliminate $R'_1$ and $R'_2$ leads to the rate region $\calB_1$ in Theorem~\ref{thm:Achievable}.

Now we show that the region $\calB_2$ in Theorem~\ref{thm:Achievable} is achievable. One such situation arises when we are interested in deriving an achievable rate region for secure communication over the \ac{MAWTC} with a single confidential message. Codebook generation, encoding, and decoding are similar to those we developed for deriving the achievable region $\calB_1$ in Theorem~\ref{thm:Achievable} with $R_2=0$. Assume that 
\begin{align}
    R'_1>\bbI(U_1;Z\lvert U_0,U_2).\label{eq:Sec_Cons_One_Sided}
\end{align}If the codebook for the second transmitter is large enough, i.e., $R'_2>\bbI(U_2;Z\lvert U_0)$, then the hypothesis of Lemma~\ref{lemma:Resolvability_1} is satisfied, and the security proof is similar to that in \eqref{eq:Sem_sec_Final}. Otherwise, if $R'_2\le\bbI(U_2;Z\lvert U_0)$, we use Lemma~\ref{lemma:one_Sided_MAC_Res} as in \eqref{eq:Bounding_MIM1M2Z_4} provided at the bottom of the page, 
where $(a)$ follows similarly to \eqref{eq:Step_a_Sem_Sec}, $(b)$ follows from the non-negativity of the relative entropy, and $(c)$ follows from the independence of $(M_1,M_2)$ and $U_0^n$. Now, following the same steps used to prove \eqref{eq:Pe_to0_Final} and \eqref{eq:Sem_sec_Final}, but applying Lemma~\ref{lemma:one_Sided_MAC_Res} in place of Lemma~\ref{lemma:Resolvability_1}, one can show that there exists a sequence of codes $\br{C_n}_{n\in\bbN}$ and constants $\gamma_1,n_0>0$ such that
\begin{align*}
&\mathsf{e}(C_n) \xrightarrow[n\to\infty]{} 0,\\
&\max_{p_{M_1,M_2}}\bbI(M_1,M_2;Z^n)
    \le \exp\pr{-n\gamma_1}, \qquad \forall\, n > n_0,
\end{align*}if \eqref{eq:Reliability_Constraints}, with $R_2=0$, and \eqref{eq:Sec_Cons_One_Sided} hold.

\section{Proof of Lemma~\ref{lemma:Pure_Shared_States}}
\label{proof:lemma:Pure_Shared_States}
In this section, we show that both the inner and outer regions can be achieved using pure shared states and auxiliary variables with bounded cardinalities.
\subsection{Inner Bound}
We begin by proving Lemma~\ref{lemma:Pure_Shared_States} and showing that the achievable region in Theorem~\ref{thm:Achievable} can be attained using pure shared states together with auxiliary variables of bounded cardinality.
\subsubsection{Purification}
Consider a given Hilbert space $\calH_{E_1}\otimes\calH_{E_2}$. We show that the union over the bipartite states $\Psi_{E_1E_2}$ is exhausted by pure states. Fix a distribution $\br{p_{U_1|U_0}(\cdot|u_0)\, p_{U_2|U_0}(\cdot|u_0)}_{u_0 \in \calU_0}$, a bipartite pure state $\Psi_{E_1E_2}$, and measurement families $\calL_1(u_0,u_1)$ and $\calL_2(u_0,u_2)$ acting on $E_1$ and $E_2$, respectively. We define
\begin{align*}
\mathfrak{R}\pr{\Psi,\calL_1,\calL_2}\triangleq\bigcup\limits_{p_{U_0}p_{U_1\lvert U_0}p_{U_2\lvert U_0},\Psi_{E_1E_2},\calL_1\otimes\calL_2}\br{\calB_1\cup\calB_2\cup\calB_3},
\end{align*}where $\calB_1,\calB_2$, and $\calB_3$ are defined in \eqref{eq:Regions}. Consider a spectral decomposition of $\Psi_{E_1E_2}$ of the form
\setcounter{equation}{34}
\begin{align}
    \Psi_{E_1E_2}
    = \sum_{v \in \calV} p_V(v)\,
        \den{\psi_v}{\psi_v}_{E_1E_2}, \label{eq:Spectral_Psi}
\end{align}
where $p_V(v)$ is a probability distribution on $\calV$, such that the pure states $\br{\ket{\psi_v}_{E_1E_2}}_{v \in \calV}$ form an orthonormal basis for 
$\calH_{E_1} \otimes \calH_{E_2}$. Therefore, $\Psi_{E_1E_2}$ has the following purification,
\begin{align*}
    \ket{\phi_{E_1E_2B_1B_2}}=\sum_{v\in\calV}\sqrt{p_V(v)}\ket{\psi_v}_{E_1E_2}\otimes\ket{v}_{B_1}\otimes\ket{v}_{B_2}.
\end{align*}
Now consider the rate region $\mathfrak{R}(\Psi, \calL'_1, \calL'_2)$ corresponding to the following choice of shared state, auxiliary distributions, and measurements.  
Retain the same family of conditional distributions 
$\br{p_{U_1|U_0}(\cdot|u_0)\, p_{U_2|U_0}(\cdot|u_0)}_{u_0 \in \calU_0}$.  
Let Transmitter~1 and Transmitter~2 share the state $\ket{\phi_{E_1E_2B_1B_2}}$.  
Suppose Transmitter~1 performs a measurement on the joint system $(E_1B_1)$ and Transmitter~2 performs a measurement on $(E_2B_2)$, using the \acp{POVM}
\begin{align*}
    L'_1(x_1, v | U_0 U_1)
        &= L_1(x_1 | U_0 U_1) \otimes \den{v}{v},\\[1mm]
    L'_2(x_2, v | U_0 U_2)
        &= L_2(x_2 | U_0 U_2) \otimes \den{v}{v},
\end{align*}
for all $(u_0, u_1, u_2, x_1, x_2, v) \in 
\calU_0 \times \calU_1 \times \calU_2 \times 
\calX_1 \times \calX_2 \times \calV$.
The corresponding input distribution $\tilde{p}_{X_1 X_2 | U_0 U_1 U_2}$ is given by
\begin{align*}
    &\tilde{p}_{X_1 X_2 | U_0 U_1 U_2}
    \nonumber\\
    &= \sum_{v \in \calV} p_V(v)\,
       \tra\!\left[\bigl(L_1(x_1 | u_0, u_1) \otimes L_2(x_2 | u_0, u_2)\bigr)\right.\nonumber\\
       &\qquad\left.\times
       \ket{\psi_v}\!\bra{\psi_v}_{E_1E_2}\right] \\
    &\mathop=\limits^{(a)} \tra\!\left[\bigl(L_1(x_1 | u_0, u_1) \otimes L_2(x_2 | u_0, u_2)\bigr)\right.\nonumber\\
    &\qquad\left.\times
       \pr{\sum_{v \in \calV} p_V(v)\, \ket{\psi_v}\!\bra{\psi_v}_{E_1E_2}}\right] \\
    &\mathop=\limits^{(b)}
       \tra\!\sbr{\bigl(L_1(x_1 | u_0, u_1) \otimes L_2(x_2 | u_0, u_2)\bigr)
       \Psi_{E_1E_2}} \\
    &\mathop=\limits^{(c)} p_{X_1 X_2 | U_0 U_1 U_2},
\end{align*}
where $(a)$ follows from the linearity of trace operation, $(b)$ follows from the spectral decomposition of $\Psi_{E_1E_2}$ in~\eqref{eq:Spectral_Psi}, and $(c)$ follows from the definition of the joint distribution in~\eqref{eq:Joint_Dist_Single_Letter_1}. Therefore, $\mathfrak{R}(\Psi, \calL_1, \calL_2)= \mathfrak{R}(\Psi, \calL'_1, \calL'_2)$, which implies that the entire region $\mathfrak{R}(\Psi, \calL_1, \calL_2)$
can be achieved using only pure shared states.
\subsubsection{Cardinality Bounds}
We consider a joint state $\Psi_{E_1E_2}$ and a joint \ac{PMF} $p_{U_0U_1U_2X_1X_2YZ}$ on $\calU_0\times\calU_1\times\calU_2\times\calX_1\times\calX_2\times\calY\times\calZ$ distributed according to \eqref{eq:Joint_Dist_Single_Letter_1}. By the Fenchel-Eggleston-Carath\'eodory theorem, we can preserve the value of $\bbI(U_1;Y\lvert U_0,U_2)-\bbI(U_1;Z\lvert U_0), \bbI(U_2;Y\lvert U_0,U_1)-\bbI(U_2;Z\lvert U_0)$, and $\bbI(U_1,U_2;Y\lvert U_0)-\bbI(U_1,U_2;Z\lvert U_0)$ by restricting the alphabet size $\card{\calU_0}\le3$. 

Note that since the region $\calB_1$ in Theorem~\ref{thm:Achievable} has more constraints, it is sufficient to bound the cardinalities of $(U_1, U_2)$ only for $\calB_1$. 
To bound the cardinalities of $(U_1, U_2)$, consider the rate region in \eqref{eq:Region_1}:
\begin{subequations}\label{eq:Cardinalities}
\begin{align}
    &\calB_1\triangleq \nonumber\\
&\left\{ \begin{array}{rl}
  (R_1,R_2):\;
    R_1&\hspace{-2.5mm}<\bbI(U_1;Y\lvert U_0,U_2)-\bbI(U_1;Z|U_0),\\
    R_2&\hspace{-2.5mm}<\bbI(U_2;Y\lvert U_0,U_1)-\bbI(U_2;Z|U_0),\\
    R_1+R_2&\hspace{-2.5mm}<\bbI(U_1,U_2;Y|U_0)-\bbI(U_1,U_2;Z|U_0),
	\end{array}
\hspace{-2mm}\right\},\label{eq:R1R2_Cardinality}
\end{align}
for some,
\begin{align}
    &p(u_0,u_1,u_2,x_1,x_2,y,z)\nonumber\\
    &=p_{U_0}(u_0)p_{U_1|U_0}(u_1|u_0)p_{U_2|U_0}(u_2|u_0)\nonumber\\
    &\quad\times\tra\br{\pr{L_1(x_1\lvert u_0,u_1)\otimes L_2(x_2\lvert u_0,u_2)}\Psi_{E_1E_2}}\nonumber\\
    &\quad\times W_{YZ\lvert X_1X_2}(y,z\lvert x_1,x_2).\label{eq:Joint_Dist_Single_Letter_Cardinality}
\end{align}
\end{subequations}
Fix a joint state $\Psi_{E_1E_2}$, a collection of \acp{POVM}, the value $U_0 = u_0$, and the conditional distribution $p_{U_2|U_0}(\cdot \mid u_0)$. For each fixed $p_{U_1}$, the corresponding region in~\eqref{eq:R1R2_Cardinality} has a pentagonal structure. Therefore, to characterize the boundary of the achievable region, it suffices to maximize a linear functional of the rates. In particular, we consider maximizing the following function:
\begin{align}
G_\lambda\pr{p_{U_1}}=&(1-\lambda)\!\pr{\bbI(U_1;Y|U_2)-\bbI(U_1;Z)}\nonumber\\
&
+\lambda\!\pr{\bbI(U_2;Y)-\bbI(U_2;Z|U_1)},\label{eq:Linear_Com_rates}
\end{align}where $\lambda\in\sbr{\tfrac{1}{2},1}$ and for simplicity, we omit the conditioning on the \ac{RV} $U_0=u_0$. 
For a given \ac{PMF} $p(u_1,u_2,x_1,x_2,y,z)$, distributed according to \eqref{eq:Joint_Dist_Single_Letter_Cardinality} define the perturbation as follows,
\begin{align*}
    p_\epsilon(u_1,u_2,x_1,x_2,y,z)&=p(u_1,u_2,x_1,x_2,y,z)\pr{1+\epsilon\phi(u_1)}.
\end{align*}We assume that $1+\epsilon\phi(u_1)>0$, for all $u_1$, and $\bbE[\phi(U_1)]=\sum_{u_1}p(u_1)\phi(u_1)=0$ so that the distribution $p_\epsilon(u_1,u_2,x_1,x_2,y,z)$ is a valid \ac{PMF}. Note that $\phi(u_1)$ may also depend on $u_0$, since we are working under the conditioning $U_0 = u_0$. We further impose the condition
\begin{align}
&\bbE[\phi(U_1)| X_1 = x_1, X_2 = x_2]\nonumber\\
&= \sum_{u_1} p(u_1 | x_1, x_2)\,\phi(u_1)
= 0,\,\, \forall (x_1,x_2)\in\calX_1\times\calX_2,
\label{eq:Second_Assumption_Purtub}
\end{align}
which ensures that $p_\epsilon(x_1,x_2) = p(x_1,x_2)$.  
In other words, the perturbation preserves the marginal \acp{PMF} of $X_1$ and $X_2$, and consequently those of $Y$ and $Z$ as well; that is, it leaves $p_{X_1X_2YZ}$ unchanged. 
Finally, note that, since there are $\abs{\calX_1}\abs{\calX_2} + 1$ linear constraints, a perturbation satisfying the linear equation system \eqref{eq:Second_Assumption_Purtub} together with the constraint $\bbE[\phi(U_1)]=0$ exists whenever $\abs{\calU_1} \ge \abs{\calX_1}\abs{\calX_2} + 1$. Now we prove that the perturbed distribution $p_\epsilon(\cdot)$ also preserves the \ac{PMF} structure in \eqref{eq:Joint_Dist_Single_Letter_Cardinality}. Consider
\begin{align*}
   &p_\epsilon(u_1,u_2,x_1,x_2,y,z)=p(u_1,u_2,x_1,x_2,y,z)\pr{1+\epsilon\phi(u_1)}\nonumber\\
    &=\pr{p_{U_1}(u_1)\pr{1+\epsilon\phi(u_1)}}p_{U_2}(u_2)\nonumber\\
    &\quad\times\tra\br{\pr{L_1(x_1\lvert u_1)\otimes L_2(x_2\lvert u_2)}\Psi_{E_1E_2}}\nonumber\\
    &\quad \times W_{YZ\lvert X_1X_2}(y,z\lvert x_1,x_2)\nonumber\\
    &=p_\epsilon(u_1)p_{U_2}(u_2)\nonumber\\
    &\quad\times\tra\br{\pr{L_1(x_1\lvert u_1)\otimes L_2(x_2\lvert u_2)}\Psi_{E_1E_2}}\nonumber\\
    &\quad\times W_{YZ\lvert X_1X_2}(y,z\lvert x_1,x_2),
\end{align*}where $p_\epsilon(u_1)\triangleq p_{U_1}(u_1)\pr{1+\epsilon\phi(u_1)}$. Now, define the \acp{RV} $\pr{\bar{U}_1,\bar{U}_2,\bar{X}_1,\bar{X}_2,\bar{Y},\bar{Z}}\sim p_\epsilon$. We have,
\begin{subequations}\label{eq:MIs_Cardinality}
\begin{align}
    &\bbI\pr{\bar{U}_1;\bar{Y}|\bar{U}_2}\nonumber\\
    &=\bbH\pr{\bar{U}_2,\bar{Y}}-\bbH\pr{\bar{U}_2}+\bbH\pr{\bar{U}_1,\bar{U}_2}-\bbH\pr{\bar{U}_1,\bar{U}_2,\bar{Y}}\nonumber\\
    &=\bbH\pr{\bar{U}_2,\bar{Y}}-\bbH\pr{U_2}+\bbH\pr{U_1,U_2}-\bbH\pr{U_1,U_2,Y}\nonumber\\
    &\quad+\epsilon\pr{\bbH_\phi\pr{U_1,U_2}-\bbH_\phi\pr{U_1,U_2,Y}}\label{eq:MI_U2YgU1}\\
    &\bbI\pr{\bar{U}_1;\bar{Z}}\nonumber\\
    &=\bbH\pr{\bar{Z}}+\bbH\pr{\bar{U}_1}-\bbH\pr{\bar{U}_1,\bar{Z}}\nonumber\\
    &=\bbH\pr{Z}+\bbH\pr{U_1}-\bbH\pr{U_1,Z}\nonumber\\
    &\quad+\epsilon\pr{\bbH_\phi\pr{U_1}-\bbH_\phi\pr{U_1,Z}}\label{eq:MI_U1Z}\\
    &\bbI\pr{\bar{U}_2;\bar{Y}}\nonumber\\
    &=\bbH\pr{\bar{Y}}+\bbH\pr{\bar{U}_2}-\bbH\pr{\bar{U}_2,\bar{Y}}\nonumber\\
    &=\bbH\pr{Y}+\bbH\pr{U_2}-\bbH\pr{\bar{U}_2,\bar{Y}}\label{eq:MI_U2Z}\\
    &\bbI\pr{\bar{U}_2;\bar{Z}|\bar{U}_1}\nonumber\\
    &=\bbH\pr{\bar{U}_1,\bar{Z}}-\bbH\pr{\bar{U}_1}+\bbH\pr{\bar{U}_1,\bar{U}_2}-\bbH\pr{\bar{U}_1,\bar{U}_2,\bar{Z}}\nonumber\\
    &=\bbH\pr{U_1,Z}-\bbH\pr{U_1}+\bbH\pr{U_1,U_2}-\bbH\pr{U_1,U_2,Z}\nonumber\\
    &\quad+\epsilon\big(\bbH_\phi\pr{U_1,Z}-\bbH_\phi\pr{U_1}\nonumber\\
    &\quad+\bbH_\phi\pr{U_1,U_2}-\bbH_\phi\pr{U_1,U_2,Z}\big),\label{eq:MI_U1YgU2}
\end{align}where
\begin{align}
    &\bbH_\phi(U_1,U_2)\triangleq-\sum_{u_1,u_2}p(u_1)p(u_2)\phi(u_1)\log p(u_1)p(u_2),\nonumber\\
    &\bbH_\phi(U_1,U_2,Y)\triangleq-\sum_{u_1,u_2,y}p(u_1)p(u_2)W(y|u_1,u_2)\phi(u_1)\nonumber\\
    &\quad\times\log p(u_1)p(u_2)W(y|u_1,u_2),\nonumber\\
    &\bbH_\phi(U_1)\triangleq-\sum_{u_1}p(u_1)\phi(u_1)\log p(u_1),\nonumber\\
    &\bbH_\phi(U_1,Z)\triangleq-\sum_{u_1,z}p(u_1)W(z|u_1)\phi(u_1)\log p(u_1)W(z|u_1),\nonumber\\
    &\bbH_\phi(U_1,U_2,Z)\triangleq-\sum_{u_1,u_2,z}p(u_1)p(u_2)W(z|u_1,u_2)\phi(u_1)\nonumber\\
    &\quad\times\log p(u_1)p(u_2)W(z|u_1,u_2).\nonumber
\end{align}
\end{subequations}
Suppose that $p\pr{u_1,u_2,x_1,x_2,y,z}$ attains the maximum of \eqref{eq:Linear_Com_rates}, therefore
\begin{subequations}
\begin{align}
 &\frac{\partial}{\partial\epsilon}\big[(1-\lambda)\!\pr{\bbI(\bar{U}_1;\bar{Y}|\bar{U}_2)-\bbI(\bar{U}_1;\bar{Z})}\nonumber\\
 &\quad
+\lambda\!\pr{\bbI(\bar{U}_2;\bar{Y})-\bbI(\bar{U}_2;\bar{Z}|\bar{U}_1)}\!\big]\Big|_{\epsilon=0}=0,\label{eq:First_Derivative2}\\
 &\frac{\partial^2}{\partial\epsilon^2}\big[(1-\lambda)\!\pr{\bbI(\bar{U}_1;\bar{Y}|\bar{U}_2)-\bbI(\bar{U}_1;\bar{Z})}\nonumber\\
 &\quad
+\lambda\!\pr{\bbI(\bar{U}_2;\bar{Y})-\bbI(\bar{U}_2;\bar{Z}|\bar{U}_1)}\!\big]\Big|_{\epsilon=0}\le0.\label{eq:Second_Derivative2}
\end{align}
Considering \eqref{eq:MIs_Cardinality}, \eqref{eq:First_Derivative2} reduces to
\begin{align}
    &(1-\lambda)((\bbH_\phi\pr{U_1,U_2}-\bbH_\phi\pr{U_1,U_2,Y}-\bbH_\phi\pr{U_1}\nonumber\\
    &\quad+\bbH_\phi\pr{U_1,Z}))+\lambda(-\bbH_\phi\pr{U_1,Z}+\bbH_\phi\pr{U_1}\nonumber\\
    &\quad-\bbH_\phi\pr{U_1,U_2}+\bbH_\phi\pr{U_1,U_2,Z})=0,\label{eq:First_Drevitive_Equality2}
\end{align}and \eqref{eq:Second_Derivative2} leads to
\begin{align}
    -(2\lambda-1)\frac{\partial^2}{\partial\epsilon^2}\bbH\pr{\bar{U}_2,\bar{Y}}\Big|_{\epsilon=0}\le0.\label{eq:Second_Derivative3}
\end{align}From \cite[Lemma~2,~Part~2]{Perurbation}, \eqref{eq:Second_Derivative3} is equivalent to $\bbE\sbr{\bbE\sbr{\phi(U_1)|U_2,Y}^2}\le0$, which holds with equality if $\bbE\sbr{\phi(U_1)|U_2=u_2,Y=y}$, for all $(u_2,y)$ in the support of $p(u_2,y)$. This implies that
\begin{align}
    p_\epsilon(u_2,y)=p(u_2,y).\label{eq:Perserving_Joint_Dist}
\end{align}
\end{subequations}Therefore, considering \eqref{eq:MIs_Cardinality}, \eqref{eq:First_Drevitive_Equality2}, and \eqref{eq:Perserving_Joint_Dist}, we have
\begin{align*}
&(1-\lambda)\!\pr{\bbI(\bar{U}_1;\bar{Y}|\bar{U}_2)-\bbI(\bar{U}_1;\bar{Z})}
+\lambda\!(\bbI(\bar{U}_2;\bar{Y})\nonumber\\
&-\bbI(\bar{U}_2;\bar{Z}|\bar{U}_1))=(1-\lambda)\!\pr{\bbI(U_1;Y|U_2)-\bbI(U_1;Z)}\nonumber\\
&+\lambda\!\pr{\bbI(U_2;Y)-\bbI(U_2;Z|U_1)}.
\end{align*}
Thus, if $p(u_1,u_2,x_1,x_2,y,z)$ attains the minimum of the weighted sum rate, then this minimum is preserved under any valid perturbation $p_\epsilon(u_1,u_2,x_1,x_2,y,z)$ that satisfies~\eqref{eq:Second_Assumption_Purtub}. We may choose $\epsilon$ such that $\min\limits_{u_1}\bigl(1+\epsilon\,\phi(u_1)\bigr)=0$. 
Let $u_1^\star$ denote the value achieving this minimum. Then it follows immediately that
$p_\epsilon(u_1^\star)=0$, and hence one symbol of $U_1$ can be removed without affecting
\eqref{eq:Linear_Com_rates}. In other words, the cardinality of the auxiliary \ac{RV}
$U_1$ decreases by one. This reduction can proceed until
$\abs{\calU_1}=\abs{\calX_1}\abs{\calX_2}$, at which point a nonzero perturbation
$\phi(u_1)$ satisfying \eqref{eq:Second_Assumption_Purtub} may no longer exist.
Consequently, the alphabet size of $U_1$ can be constrained to satisfy
$\abs{\calU_1}\le \abs{\calX_1}\abs{\calX_2}$.
When the time-sharing \ac{RV} $U_0$ is incorporated, this bound becomes
$\abs{\calU_1} \le 3\,\abs{\calX_1}\abs{\calX_2}$.
By symmetry, the same conclusion applies to the auxiliary \ac{RV} $U_2$.

\section{Proof of Theorem~\ref{thm:NLetter_Capacity}}
\label{proof:thm:NLetter_Capacity}
Consider any sequence of $\pr{2^{nR_1},2^{nR_2},n}$ codes for the \ac{MAWTC} $W_{YZ|X_1X_2}$, with a bipartite state $\Psi_{E_1E_2}^\on$ shared among the transmitters, that simultaneously satisfy  
(i) the reliability requirement $\mathsf{e}(C_n) \le \delta_n$ with $\delta_n \to 0$ as $n \to \infty$, and  
(ii) the weak secrecy condition $\tfrac{1}{n}\, \bbI(M_1,M_2; Z^n) \le \epsilon_n$ with $\epsilon_n \to 0$ as $n \to \infty$. Note that semantic security is strictly stronger than weak security, and therefore, any outer bound proven for the weak security setting also serves as an outer bound for the semantic security setting. By the Fano inequality, it follows that
\begin{subequations}\label{eq:Fanos}
    \begin{align}
        \bbH\pr{M_1,M_2|Y^n}&\le n\delta_n.\label{eq:Fano}
    \end{align}Therefore, we also have
    \begin{align}
        \bbH\pr{M_1|M_2,Y^n}&\le n\delta_n,\label{eq:Fano_1}\\
        \bbH\pr{M_2|M_1,Y^n}&\le n\delta_n.\label{eq:Fano_2}
    \end{align}
\end{subequations}
\begin{figure*}[b!]
    \hrulefill
    \setcounter{equation}{47}
\begin{align}
    &\calL_{E_1\to X_1}^{(t,m_1,s_1)}\otimes\calL_{E_2\to X_2}^{(t,m_2,s_2)}\pr{\Psi_{E_1E_2}}=\sum_{k,\ell}\calL_{E_1\to X_1}^{(t,m_1,s_1)}\otimes\calL_{E_2\to X_2}^{(t,m_2,s_2)}\pr{Q_{1,k}\otimes Q_{2,\ell}}\nonumber\\
    &=\sum_{k,\ell}\sum_{a_1\in\calX_1}\br{\sum_{x_1^n\in\calX_1^n:\,x_1(t)=a_1}\tra\pr{L_{x_1^n}^{\pr{m_1,s_1}}Q_{1,k}}}\den{a_1}{a_1}\otimes\sum_{a_2\in\calX_2}\br{\sum_{x_2^n\in\calX_2^n:\,x_2(t)=a_2}\tra\pr{L_{x_2^n}^{\pr{m_2,s_2}}Q_{2,\ell}}}\den{a_2}{a_2}\nonumber\\
    &=\sum_{\pr{a_1,a_2}\in\calX_1\times\calX_2}\sbr{\sum_{\substack{\pr{x_1^n,x_2^n}\in\calX_1^n\times\calX_2^n:\\x_1(t)=a_1,x_2(t)=a_2}}\sum_{k,\ell}\tra\pr{\pr{L_{x_1^n}^{\pr{m_1,s_1}}\otimes L_{x_2^n}^{\pr{m_2,s_2}}}\pr{Q_{1,k}\otimes Q_{2,\ell}}}}\den{a_1a_2}{a_1a_2}\nonumber\\
    &=\sum_{\pr{a_1,a_2}\in\calX_1\times\calX_2}\sbr{\sum_{\substack{\pr{x_1^n,x_2^n}\in\calX_1^n\times\calX_2^n:\\x_1(t)=a_1,x_2(t)=a_2}}\tra\pr{\pr{L_{x_1^n}^{\pr{m_1,s_1}}\otimes L_{x_2^n}^{\pr{m_2,s_2}}}\Psi_{E_1E_2}}}\den{a_1a_2}{a_1a_2}.\label{eq:additional_1}
\end{align}
    \setcounter{equation}{41}
\end{figure*}
Now we have,
\begin{align}
    nR_1&=\bbH(M_1)\nonumber\\
    &\mathop=\limits^{(a)}\bbH(M_1|M_2)\nonumber\\
    &\mathop\le\limits^{(b)}\bbI(M_1;Y^n|M_2)+n\delta_n\nonumber\\
    &\mathop\le\limits^{(c)}\bbI(M_1;Y^n|M_2)-\bbI(M_1;Z^n)+n(\delta_n+\epsilon_n),\label{eq:proof_Upper_R1}
\end{align}where
\begin{itemize}
    \item[$(a)$] follows since $M_1$ and $M_2$ are independent;
    \item[$(b)$] follows from Fano's inequality in \eqref{eq:Fano_1};
    \item[$(c)$] follows from the weak security constraint and since $\bbI(M_1;Z^n)\le\bbI(M_1,M_2;Z^n)\le n\epsilon_n$.
\end{itemize}Similarly, one can show that
\begin{subequations}
    \begin{align}
        &nR_2\le\bbI(M_2;Y^n|M_1)-\bbI(M_2;Z^n)+n(\delta_n+\epsilon_n),\label{eq:proof_Upper_R2}\\
        &n(R_1+R_2)\le\bbI(M_1,M_2;Y^n)-\bbI(M_1,M_2;Z^n)\nonumber\\
        &\qquad+n(\delta_n+\epsilon_n).\label{eq:proof_Upper_R1R2}
    \end{align}
\end{subequations}
Also, when $R_1$ or $R_2$ are not intended to be transmitted securely, similar to \eqref{eq:proof_Upper_R1}, one can show that
\begin{subequations}\label{eq:proof_Upper_R1R22}
    \begin{align}
        &nR_1\le\bbI(M_1;Y^n|M_2)-\bbI(M_1;Z^n|M_2)+n(\delta_n+\epsilon_n),\label{eq:proof_Upper_R12}\\
        &nR_2\le\bbI(M_2;Y^n|M_1)-\bbI(M_2;Z^n|M_1)+n(\delta_n+\epsilon_n),\label{eq:proof_Upper_R22}
    \end{align}where we use $\bbI(M_1;Z^n|M_2)\le\bbI(M_1,M_2;Z^n)\le n\epsilon_n$ and $\bbI(M_2;Z^n|M_1)\le\bbI(M_1,M_2;Z^n)\le n\epsilon_n$ instead of $\bbI(M_1;Z^n)\le\bbI(M_1,M_2;Z^n)\le n\epsilon_n$ and $\bbI(M_2;Z^n)\le\bbI(M_1,M_2;Z^n)\le n\epsilon_n$, respectively.
\end{subequations}
The proof follows by identifying $U_1^n$ and $U_2^n$ as $M_1$ and $M_2$, respectively.
\section{Proof of Theorem~\ref{thm:Single_Letter_Outter}}
\label{proof:thm:Single_Letter_Outter}
Consider any sequence of $\pr{2^{nR_1},2^{nR_2},n}$ codes for the \ac{MAWTC} $W_{YZ|X_1X_2}$, with a bipartite state $\Psi_{E_1E_2}^\on$ shared among the transmitters, that simultaneously satisfy (i) the reliability requirement $\mathsf{e}(C_n) \le \delta_n$ with $\delta_n \to 0$ as $n \to \infty$, and (ii) the weak secrecy condition $\tfrac{1}{n}\, \bbI(M_1,M_2; Z^n) \le \epsilon_n$ with $\epsilon_n \to 0$ as $n \to \infty$.
Given a uniformly distributed message $M_i$ and uniformly distributed local randomness $S_i$, for $i\in\{1,2\}$, the Transmitter~$i$ encodes its message by performing the measurement $\calL_i^{(M_i,S_i)}=\br{L_{x_i^n}^{(M_i,S_i)}}_{x_i^n\in\calX_i^n}$ on her share of the entangled system $E_i$, producing an outcome $X_i^n$, which is then sent over the channel.

Similar to the proof of Theorem~\ref{thm:NLetter_Capacity} in Appendix~\ref{proof:thm:NLetter_Capacity}, we establish Theorem~\ref{thm:Single_Letter_Outter} under the weak security criterion. Since weak security is implied by semantic security, the resulting region also serves as a valid outer bound for the semantic security setting.

\setcounter{equation}{44}
Now we have,
\begin{align}
    &nR_1=\bbH(M_1)\nonumber\\
    &\mathop\le\limits^{(a)}\bbI(M_1;Y^n)+n\delta_n\nonumber\\
    &\mathop\le\limits^{(b)}\bbI(M_1;Y^n)-\bbI(M_1;Z^n)+n(\delta_n+\epsilon_n)\nonumber\\
    &=\sum_{t=1}^n\sbr{\bbI(M_1;Y(t)|Y^{t-1})-\bbI(M_1;Z(t)|Z_{t+1}^n)}+n(\delta_n+\epsilon_n)\nonumber\\
    &\mathop=\limits^{(c)}\sum_{t=1}^n\sbr{\bbI(M_1;Y(t)|Y^{t-1},Z_{t+1}^n)-\bbI(M_1;Z(t)|Y^{t-1},Z_{t+1}^n)}\nonumber\\
    &\qquad+n(\delta_n+\epsilon_n),\label{eq:proof_Upper_SLetter_R11}
\end{align}where
\begin{itemize}
    \item[$(a)$] follows from Fano's inequality in \eqref{eq:Fano} and since $\bbH(M_1|Y^n)\le\bbH(M_1,M_2|Y^n)\le n\delta_n$;
    \item[$(b)$] follows from the weak security constraint and since $\bbI(M_1;Z^n)\le\bbI(M_1,M_2;Z^n)\le n\epsilon_n$;
    \item[$(c)$] follows from K\"orner-Marton-Csisz\'{a}r sum identity \cite[Lemma~17.12]{BCC:IT78}.
\end{itemize}
We also have,
\begin{align}
    &nR_1=\bbH(M_1)\nonumber\\
    &\mathop=\limits^{(a)}\bbH(M_1|M_2)\nonumber\\
    &\mathop\le\limits^{(b)}\bbI(M_1;Y^n|M_2)+n\delta_n\nonumber\\
    &\mathop\le\limits^{(c)}\bbI(M_1;Y^n|M_2)-\bbI(M_1;Z^n|M_2)+n(\delta_n+\epsilon_n)\nonumber\\
    &=\sum_{t=1}^n\big[\bbI(M_1;Y(t)|M_2,Y^{t-1})\nonumber\\
    &\qquad-\bbI(M_1;Z(t)|M_2,Z_{t+1}^n)\big]+n(\delta_n+\epsilon_n)\nonumber\\
    &\mathop=\limits^{(d)}\sum_{t=1}^n\big[\bbI(M_1;Y(t)|M_2,Y^{t-1},Z_{t+1}^n)\nonumber\\
    &\qquad-\bbI(M_1;Z(t)|M_2,Y^{t-1},Z_{t+1}^n)\big]+n(\delta_n+\epsilon_n),\label{eq:proof_Upper_SLetter_R11_2}
\end{align}where
\begin{itemize}
    \item[$(a)$] follows since $M_1$ and $M_2$ are independent;
    \item[$(b)$] follows from Fano's inequality in \eqref{eq:Fano_1} and since $\bbH(M_1|M_2,Y^n)\le\bbH(M_1,M_2|Y^n)\le n\delta_n$;
    \item[$(c)$] follows from the weak security constraint and since $\bbI(M_1;Z^n|M_2)\le\bbI(M_1,M_2;Z^n)\le n\epsilon_n$;
    \item[$(d)$] follows from K\"orner-Marton sum identity \cite[Lemma~17.12]{BCC:IT78}.
\end{itemize}
\begin{figure*}[b!]
    \hrulefill
    \setcounter{equation}{51}
\begin{subequations}\label{eq:Separating}
    \begin{align}
        p_{\text{atyp}}^{(1)}&\triangleq\frac{1}{2^{n(R_1+R_2)}}\sum_{z^n\in\calZ^n}\sum_{(s_1,s_2)\in\calS_1\times\calS_2}W_{Z|U_0U_1U_2}^\on\pr{z^n|U_0^n,U_1^n(s_1),U_2^n(s_2)}\indic{1}_{\br{\pr{U_0^n,U_1^n(s_1),z^n}\notin\calA_{1,\epsilon}^{(n)}}},\label{eq:ATypical_U1}\\
        p_{\text{atyp}}^{(2)}&\triangleq\frac{1}{2^{n(R_1+R_2)}}\sum_{z^n\in\calZ^n}\sum_{(s_1,s_2)\in\calS_1\times\calS_2}W_{Z|U_0U_1U_2}^\on\pr{z^n|U_0^n,U_1^n(s_1),U_2^n(s_2)}\indic{1}_{\br{\pr{U_0^n,U_1^n(s_1),U_2^n(s_2),z^n}\notin\calA_{2,\epsilon}^{(n)}}},\label{eq:ATypical_U1U2}\\
        p_{\text{typ}}(z^n)&\triangleq\frac{1}{2^{n(R_1+R_2)}}\sum_{(s_1,s_2)\in\calS_1\times\calS_2}\frac{W_{Z|U_0U_1U_2}^\on\pr{z^n|U_0^n,U_1^n(s_1),U_2^n(s_2)}}{q_{Z|U_0}^\on(z^n|U_0^n)}\nonumber\\
        &\quad\times\indic{1}_{\br{\pr{U_0^n,U_1^n(s_1),z^n}\in\calA_{1,\epsilon}^{(n)}}\bigcap\br{\pr{U_0^n,U_1^n(s_1),U_2^n(s_2),z^n}\in\calA_{2,\epsilon}^{(n)}}}.\label{eq:Typical_U1U2}
    \end{align}
    \end{subequations}
        \setcounter{equation}{46}
\end{figure*}

Next, in order to derive a single-letter upper bound, we further examine the encoding \acp{POVM} 
$\calL_1 \otimes \calL_2$ acting across $n$ consecutive channel uses. Owing to the use of the same coding technique, this analysis closely parallels the proof in \cite[Section~VIII]{Uzi_Entangled_MAC}. Since Transmitter~$i$, for $i\in\br{1,2}$, performs a measurement that depends on its message $M_i$ and local randomness $S_i$, the resulting channel inputs $X_1(t)$ and $X_2(t)$ can be viewed as the outcomes of the corresponding measurements, as formally defined below. Consider $\calL_{E_1\to X_1}^{(t,m_1,s_1)}\otimes\calL_{E_2\to X_2}^{(t,m_2,s_2)}$ such that
    \begin{align}
        \calL_{E_i\to X_i}^{(t,m_i,s_i)}&=\sum_{a_i\in\calX_i}\br{\sum_{x_i^n\in\calX_i^n:\,x_i(t)=a_i}\hspace{-3mm}\tra\pr{L_{x_i^n}^{\pr{m_i,s_i}}Q_i}}\den{a_i}{a_i},\label{eq:Measurment_t}
    \end{align}
for $i\in\br{1,2}$ and every pair of $Q_1$ and $Q_2$ on $\calH_{E_1}$ and $\calH_{E_2}$, respectively. Now, let $\Psi_{E_1E_2}=\sum_{k,\ell}Q_{1,k}\otimes Q_{2,\ell}$ be an arbitrary decomposition of the bipartite state. Therefore, by linearity, we have \eqref{eq:additional_1} at the bottom of the previous page. 
Therefore, the channel inputs $X_1(t)$ and $X_2(t)$ can be obtained as the outcomes of product measurements of the form $L_1\!\pr{x_1 | t, U_0(t), U_1(t)}$ and $L_2\!\pr{x_2 | t, U_0(t), U_2(t)}$, where we define 
$U_0(t)\triangleq\pr{Y^{t-1},Z_{t+1}^n}$, $U_1(t)\triangleq\pr{M_1,U_0(t)}$, and $U_2(t)\triangleq\pr{M_2,U_0(t)}$. Observe that $\pr{Y^{t-1},Z_{t+1}^n}$ is a function of $S_1$ and $S_2$, and therefore $L_1\!\pr{x_1 | t, U_0(t), U_1(t)}$ and $L_2\!\pr{x_2 | t, U_0(t), U_2(t)}$ also are functions of $S_1$ and $S_2$.

Now, from \eqref{eq:proof_Upper_SLetter_R11} we have,
\begin{align*}
     nR_1&\le\sum_{t=1}^n\sbr{\bbI(U_1(t);Y(t)|U_0(t))-\bbI(U_1(t);Z(t)|U_0(t))}\nonumber\\
     &\qquad+n(\delta_n+\epsilon_n)\nonumber\\
     &\mathop=\limits^{(a)}\bbI(U_1(T);Y(T)|U_0(T))-\bbI(U_1(T);Z(T)|U_0(T))\nonumber\\
     &\qquad+n(\delta_n+\epsilon_n)\nonumber\\
     &\mathop=\limits^{(b)}\bbI(U_1;Y|U_0)-\bbI(U_1;Z|U_0)+n(\delta_n+\epsilon_n),
\end{align*}where $(a)$ since the index $T$ is independent of all the involved \acp{RV} and is uniformly at random drawn over $[n]$, and $(b)$ follows by defining $U_0\triangleq\pr{T,U_0(T)}$, $U_1\triangleq U_1(T)$, $U_2\triangleq U_2(T)$, $Y\triangleq Y(T)$, and $Z\triangleq Z(T)$. Similarly, from \eqref{eq:proof_Upper_SLetter_R11_2} we have,
\begin{align*}
     nR_1&\le\bbI(U_1;Y|U_0,U_2)-\bbI(U_1;Z|U_0,U_2)+n(\delta_n+\epsilon_n).
\end{align*}
By an analogous argument, one can show that
    \begin{align*}
     &nR_2\le\bbI(U_2;Y|U_0)-\bbI(U_2;Z|U_0)+n(\delta_n+\epsilon_n),\\
      &nR_2\le\bbI(U_2;Y|U_0,U_1)-\bbI(U_2;Z|U_0,U_1)+n(\delta_n+\epsilon_n),\\
     &n(R_1+R_2)\le\bbI(U_1,U_2;Y|U_0)-\bbI(U_1,U_2;Z|U_0)\nonumber\\
     &\qquad+n(\delta_n+\epsilon_n).
\end{align*}
This completes the proof of Theorem~\ref{thm:Single_Letter_Outter}.

    \begin{figure*}[b!]
    \hrulefill
    \setcounter{equation}{55}
    \begin{align}
        &\bbP_{B_n}\left(\frac{1}{2^{n(R_1+R_2)}}\sum_{\pr{j_1,j_2}\in\calJ_1\times\calJ_2}\sum_{z^n\in\calZ^n}p_{Z|V_0V_1V_2}^\on\pr{z^n|V_0^n,V_1^n(j_1),V_2^n(j_2)}\right.\nonumber\\
        &\qquad\qquad\times\indic{1}_{\br{\pr{V_0^n,V_1^n(j_1),V_2^n(j_2),z^n}\in\calA}}>\mu(1+\delta)\Big)\le\exp\br{-2\delta^2\mu^22^{n\min\br{R_1,R_2}}}.\label{eq:Additional_2}
    \end{align}
    \hrulefill
    \begin{align}
            &\bbP_{B_n}\left(\frac{1}{2^{n(R_1+R_2)}}\sum_{\pr{j_1,j_2}\in\calJ_1\times\calJ_2}\sum_{z^n\in\calZ^n}p_{Z|V_0V_1V_2}^\on\pr{z^n|V_0^n,V_1^n(j_1),V_2^n(j_2)}\right.\nonumber\\
             &\qquad\qquad\times\indic{1}_{\br{\pr{V_0^n,V_1^n(j_1),V_2^n(j_2),z^n}\in\calA}}>\mu(1+\delta)\Big)\nonumber\\
             &=\bbP_{B_n}\left(\sum_{\pr{j_1,j_2}\in\calJ_1\times\calJ_2}\sum_{z^n\in\calZ^n}p_{Z|V_0V_1V_2}^\on\pr{z^n|V_0^n,V_1^n(j_1),V_2^n(j_2)}\right.\nonumber\\
             &\qquad\qquad\times\indic{1}_{\br{\pr{V_0^n,V_1^n(j_1),V_2^n(j_2),z^n}\in\calA}}>2^{n(R_1+R_2)}\mu(1+\delta)\Big)\nonumber\\
             &\le\bbP_{B_n}\left(\sum_{\pr{j_1,j_2}\in\calJ_1\times\calJ_2}\sum_{z^n\in\calZ^n}p_{Z|V_0V_1V_2}^\on\pr{z^n|V_0^n,V_1^n(j_1),V_2^n(j_2)}\right.\nonumber\\
             &\qquad\qquad\times\indic{1}_{\br{\pr{V_0^n,V_1^n(j_1),V_2^n(j_2),z^n}\in\calA}}>2^{n(R_1+R_2)}\pr{\bbP\pr{\pr{V_0^n,V_1^n,V_2^n,Z^n}\in\calA}+\mu\delta}\Big)\nonumber\\
             &=\sum_{v_0^n}\bbP(V_0^n=v_0^n)\bbP_{B_n|V_0^n=v_0^n}\left(\sum_{\pr{j_1,j_2}\in\calJ_1\times\calJ_2}\sum_{z^n\in\calZ^n}p_{Z|V_0V_1V_2}^\on\pr{z^n|V_0^n,V_1^n(j_1),V_2^n(j_2)}\right.\nonumber\\
             &\qquad\qquad\times\indic{1}_{\br{\pr{V_0^n,V_1^n(j_1),V_2^n(j_2),z^n}\in\calA}}>2^{n(R_1+R_2)}\pr{\bbP\pr{\pr{V_0^n,V_1^n,V_2^n,Z^n}\in\calA}+\mu\delta}|V_0^n=v_0^n\Big)\nonumber\\
             &\mathop\le^{(a)}\sum_{v_0^n}\bbP(V_0^n=v_0^n)\exp\pr{-2\frac{2^{2n(R_1+R_2)}\mu^2\delta^2}{2^{n\max\br{R_1,R_2}}2^{n(R_1+R_2)}}}\nonumber\\
             &\le\exp\pr{-2\mu^2\delta^22^{n\min\br{R_1,R_2}}}.\label{eq:Additional_3}
        \end{align}
\end{figure*}

\setcounter{equation}{48}
\section{Proof of Lemma~\ref{lemma:Resolvability_1}}
\label{proof:thm:Resolvability_1}
We begin by establishing a slightly weaker version of Lemma~\ref{lemma:Resolvability_1}, which is stated as follows.
\begin{lemma}
    \label{lemma:Resolvability_12}
For a discrete memoryless \ac{MAC} $W_{Z|X_1,X_2}$, with entangled transmitters if $(R_1,R_2)$ belongs to
\begin{subequations}\label{eq:Resolvability_12}
    \begin{align}
        \bigcup\limits_{p_{U_0}p_{U_1\lvert U_0}p_{U_2\lvert U_0},\,\calL_1\otimes\calL_2,\,\Psi_{E_1E_2}}\left\{ \begin{array}{l}
 (R_1,R_2)\in\bbR_+^2: \\ 
 R_1 > \bbI(U_1;Z\lvert U_0) \\ 
 R_2 > \bbI(U_2;Z\lvert U_0,U_1) \\
 \end{array} \right\}.\label{eq:Resolvability_121}
    \end{align}  
    Then, for any $u_0^n\in\calU_0^n$ generated \ac{iid} according to $P_{U_0}$, there exist $\gamma_1,\gamma_2>0$ such that for block length $n$ large enough we have
    \begin{align}
        &\bbP_\mu\pr{\bbV\pr{p_{Z^n|C_n},q_{Z|U_0}^\on(\cdot|u_0^n)}>\exp\pr{-\gamma_1n}}\nonumber\\
        &\qquad\le\exp\pr{-\exp(\gamma_2n)},\label{eq:Resolvability_122}
    \end{align}where $q_{Z|U_0}$ is defined in \eqref{eq:Target_Dist}.
    \end{subequations}
\end{lemma}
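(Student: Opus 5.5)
The plan is to reduce Lemma~\ref{lemma:Resolvability_12} to the classical strong soft-covering argument of \cite{Cuff15,Frey18} for a superposition-type MAC. The key observation is that the entangled encoding only enters through the effective channel $W_{Z|U_0U_1U_2}$. By \eqref{eq:Encoding_Res}, for fixed $(u_0^n,u_1^n(s_1),u_2^n(s_2))$ the induced output law is the product $\prod_t W_{Z|U_0U_1U_2}(z_t|u_{0,t},u_{1,t},u_{2,t})$. Here $W_{Z|U_0U_1U_2}(z|u_0,u_1,u_2)=\sum_{x_1,x_2}\tra\{(L_1(x_1|u_0,u_1)\otimes L_2(x_2|u_0,u_2))\Psi_{E_1E_2}\}W_{Z|X_1X_2}(z|x_1,x_2)$ is a legitimate classical DMC. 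The codewords $U_1^n(s_1)$ and $U_2^n(s_2)$ are drawn independently given $u_0^n$, and $q_{Z|U_0}$ in \eqref{eq:Target_Dist} is exactly its output under $p_{U_1|U_0}p_{U_2|U_0}$. So the problem becomes soft covering of a classical MAC with independent codebooks, conditioned on $u_0^n$.

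\textbf{Step 1: typical/atypical split.} I would decompose the total variation distance following \eqref{eq:Separating}. I define $\calA_{1,\epsilon}^{(n)}$ as the set of $(u_0^n,u_1^n,z^n)$ with information density $\log\frac{W_{Z|U_0U_1}^\on(z^n|u_0^n,u_1^n)}{q_{Z|U_0}^\on(z^n|u_0^n)}\le n(\bbI(U_1;Z|U_0)+\epsilon)$. I define $\calA_{2,\epsilon}^{(n)}$ analogously, with $\log\frac{W^\on_{Z|U_0U_1U_2}}{W^\on_{Z|U_0U_1}}\le n(\bbI(U_2;Z|U_0,U_1)+\epsilon)$. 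Then $\bbV(p_{Z^n|C_n},q^\on_{Z|U_0})\le p^{(1)}_{\text{atyp}}+p^{(2)}_{\text{atyp}}+\bbV$-contribution of $p_{\text{typ}}$. The last term is controlled by $\sum_{z^n}q^\on(z^n)\,|p_{\text{typ}}(z^n)-\bbE p_{\text{typ}}(z^n)|$ plus the deficit $1-\bbE[\cdot]$, which equals the expected atypical mass.

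\textbf{Step 2: atypical terms.} Each $p^{(i)}_{\text{atyp}}$ is an average over $(s_1,s_2)$ of bounded terms. Its mean is exponentially small by a Chernoff bound on sums of i.i.d.\ log-likelihood ratios; I would use Rényi divergence of order $1+\alpha$ to obtain an explicit exponent. The concentration is doubly exponential via the bounded-differences inequality \eqref{eq:Additional_2}--\eqref{eq:Additional_3}, applied conditionally on $U_0^n=u_0^n$. Changing one $U_1^n(j_1)$ alters the sum by at most $2^{nR_2}$, and changing one $U_2^n(j_2)$ alters it by at most $2^{nR_1}$. This gives the exponent $2^{n\min\{R_1,R_2\}}$. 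For $p^{(1)}_{\text{atyp}}$, which depends only on $U_1^n$, the exponent is $2^{nR_1}$.

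\textbf{Step 3: typical term, the main obstacle.} The typical term must be handled for each fixed $z^n$, using a two-layer concentration. First, conditioned on $U_1^n(s_1)$, the inner average $\frac{1}{2^{nR_2}}\sum_{s_2}\frac{W^\on}{W^\on_{Z|U_0U_1}}\indic{1}_{\calA_2}$ consists of terms bounded by $2^{n(\bbI(U_2;Z|U_0,U_1)+\epsilon)}$. Chernoff/Hoeffding then gives deviation $2^{-n\epsilon'}$ with failure probability $\exp(-2^{n(R_2-\bbI(U_2;Z|U_0,U_1)-2\epsilon)})$. Note that the $U_2$ codebook is shared across $s_1$, so I would either union bound over $s_1$ (at cost $2^{nR_1}$) or apply the bounded-differences argument directly. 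Second, the outer average over $s_1$ of $\frac{W^\on_{Z|U_0U_1}}{q^\on}\indic{1}_{\calA_1}$ has terms bounded by $2^{n(\bbI(U_1;Z|U_0)+\epsilon)}$, which gives deviation with failure probability $\exp(-2^{n(R_1-\bbI(U_1;Z|U_0)-2\epsilon)})$. Finally, I would union bound over the at most $|\calZ|^n$ sequences $z^n$, which keeps the bound doubly exponential. Summing the contributions with the rate conditions $R_1>\bbI(U_1;Z|U_0)$ and $R_2>\bbI(U_2;Z|U_0,U_1)$ and small $\epsilon$ yields $\gamma_1,\gamma_2>0$. I expect the care to go into handling the dependence introduced by the shared $U_2$ codebook in the inner layer, and into making the conditioning on $u_0^n$ uniform.
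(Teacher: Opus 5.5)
Your proposal is correct and follows essentially the same route as the paper. Both use the same split via $\calA_{1,\epsilon}^{(n)},\calA_{2,\epsilon}^{(n)}$, a R\'enyi/Markov bound on the atypical mass, and a doubly-exponential concentration over the $2^{n(R_1+R_2)}$ codeword pairs with exponent $2^{n\min\{R_1,R_2\}}$: the paper obtains this from Janson's inequality with a partition into $2^{n\max\{R_1,R_2\}}$ independent subsets, where you use bounded differences. Both then run a two-layer Chernoff argument on the typical part, handling the shared $U_2$ codebook by a union bound: the paper takes it over all $(u_0^n,u_1^n)$ via the set $\mathfrak{C}_{z^n}$, and it controls only the one-sided term $\sum_{z^n}q(z^n)[p_{\text{typ}}(z^n)-1]^+$ rather than centering at $\bbE\, p_{\text{typ}}$. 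One harmless slip: $p^{(1)}_{\text{atyp}}$ in \eqref{eq:ATypical_U1} is weighted by $W_{Z|U_0U_1U_2}$, so it also depends on the $U_2$ codebook and its concentration exponent is $2^{n\min\{R_1,R_2\}}$, not $2^{nR_1}$; the final bound is unaffected.
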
Lemma~\ref{lemma:Resolvability_1} is then proved using a time-sharing argument. 
\begin{proof}
Before proceeding with the proof of Lemma~\ref{lemma:Resolvability_12}, we first present two lemmas on the concentration bounds introduced in \cite{Concentration}, which play a key role in our distribution approximation analysis.
\begin{lemma}[Chernoff-Hoeffding Bound {\cite[Example~1.1]{Concentration_Book}}]
\label{lemma:Hoeffding}
Let $A \triangleq \sum_{i=1}^n A_i$, where the \acp{RV} $\br{A_i}_{i=1}^n$ are independent and take values in $[0,1]$. Suppose that $\bbE[A] \le \mu$. Then, for any $0 < \delta < 1$,
    \begin{align*}
        \bbP\pr{A>\mu(1+\delta)}\le\exp\pr{-\frac{\mu\delta^2}{3}}.
    \end{align*}
\end{lemma}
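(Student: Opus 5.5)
The plan is the classical exponential-moment (Chernoff) argument, with the bound $\bbE[A]\le\mu$ inserted at the moment-generating-function step. Fix $\lambda>0$. Markov's inequality applied to $e^{\lambda A}$, together with the independence of $\br{A_i}_{i=1}^n$, gives
\begin{align*}
\bbP\pr{A>\mu(1+\delta)}\le e^{-\lambda\mu(1+\delta)}\,\bbE\sbr{e^{\lambda A}}=e^{-\lambda\mu(1+\delta)}\prod_{i=1}^n\bbE\sbr{e^{\lambda A_i}}.
\end{align*}
Each factor is bounded using convexity of $a\mapsto e^{\lambda a}$ on $[0,1]$. This gives $e^{\lambda a}\le 1+a(e^\lambda-1)$ for $a\in[0,1]$, and hence
\begin{align*}
\bbE\sbr{e^{\lambda A_i}}\le 1+p_i(e^\lambda-1)\le\exp\pr{p_i(e^\lambda-1)},
\end{align*}
where $p_i\triangleq\bbE[A_i]$ and the last step uses $1+x\le e^x$. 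Multiplying over $i$ yields $\bbE\sbr{e^{\lambda A}}\le\exp\pr{\bbE[A](e^\lambda-1)}$.

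The hypothesis is only $\bbE[A]\le\mu$, not equality, and this is the one place where it must be used. Because $\lambda>0$ we have $e^\lambda-1>0$, so the exponent is increasing in $\bbE[A]$ and we may replace $\bbE[A]$ by $\mu$. This gives
\begin{align*}
\bbP\pr{A>\mu(1+\delta)}\le\exp\pr{\mu(e^\lambda-1)-\lambda\mu(1+\delta)}.
\end{align*}
We then optimize by choosing $\lambda=\ln(1+\delta)>0$. This turns the bound into $\exp\pr{-\mu\sbr{(1+\delta)\ln(1+\delta)-\delta}}$.

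It remains to show the elementary inequality $(1+\delta)\ln(1+\delta)-\delta\ge\delta^2/3$ for $0<\delta<1$. This step requires a small amount of care. Let $f(\delta)\triangleq(1+\delta)\ln(1+\delta)-\delta-\delta^2/3$. Then $f(0)=0$ and $f'(\delta)=\ln(1+\delta)-2\delta/3$, with $f'(0)=0$. Furthermore $f''(\delta)=\frac{1}{1+\delta}-\frac{2}{3}$, which is positive on $[0,1/2)$ and negative on $(1/2,1]$. Hence $f'$ first increases and then decreases on $[0,1]$. Since $f'(0)=0$ and $f'(1)=\ln 2-2/3>0$, the minimum of $f'$ on $[0,1]$ is attained at an endpoint, so $f'\ge 0$ on $[0,1]$. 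It follows that $f\ge 0$ on $[0,1]$, which gives the stated bound $\exp\pr{-\mu\delta^2/3}$.
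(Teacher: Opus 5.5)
Your proof is correct and complete. It is the standard exponential-moment argument: the convexity bound $e^{\lambda a}\le 1+a(e^\lambda-1)$, then $1+x\le e^x$, then monotonicity in $\bbE[A]$ to pass to the upper bound $\mu$, the choice $\lambda=\ln(1+\delta)$, and the inequality $(1+\delta)\ln(1+\delta)-\delta\ge\delta^2/3$ on $(0,1)$. The paper gives no proof of its own and simply cites the bound from Dubhashi--Panconesi, whose derivation is this same Chernoff argument.
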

\begin{lemma}[\hspace{-0.01mm}{\cite[Theorem~2.1]{Concentration_2}}]
\label{lemma:Janson}
    Let $A \triangleq \sum_{i=1}^n A_i$, where the \acp{RV} $\br{A_i}_{i=1}^n$ take values in $[0,1]$ and can be partitioned into $\chi$ subsets such that the \acp{RV} within each subset are mutually independent. Then, for any $\delta > 0$,
        \begin{align*}
        \bbP\pr{A>(\bbE[A]+\delta)}\le\exp\pr{-\frac{2\delta^2}{n\cdot\chi}}.
    \end{align*}
\end{lemma}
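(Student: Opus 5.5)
The plan is the standard Chernoff argument, with the dependence between summands handled by convexity across the independent classes (Janson's fractional-cover idea). Let $I_1,\ldots,I_\chi$ be the partition of $[n]$ into classes of mutually independent variables, with $n_j\triangleq\abs{I_j}$, so that $\sum_j n_j=n$. Write $B_j\triangleq\sum_{i\in I_j}(A_i-\bbE[A_i])$, so that $A-\bbE[A]=\sum_{j=1}^\chi B_j$. For any $s>0$, Markov's inequality applied to $e^{s(A-\bbE[A])}$ gives
\begin{align*}
\bbP\pr{A>\bbE[A]+\delta}\le e^{-s\delta}\,\bbE\sbr{e^{s\sum_j B_j}}.
\end{align*}
The whole task is then to bound this moment generating function even though the $B_j$ are not independent of one another.

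The key step is to decouple the classes by convexity rather than independence. Fix weights $w_j>0$ with $\sum_j w_j=1$ and write $s\sum_j B_j=\sum_j w_j\,(s/w_j)B_j$. Jensen's inequality for the convex function $\exp$ then gives $e^{s\sum_j B_j}\le\sum_j w_j\,e^{(s/w_j)B_j}$. Within a single class the summands are independent and each is centered in an interval of length $1$, so Hoeffding's lemma yields
\begin{align*}
\bbE\sbr{e^{(s/w_j)B_j}}\le\exp\pr{\frac{s^2 n_j}{8w_j^2}}.
\end{align*}

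Next I choose the weights to equalize these exponents. Take $w_j=\sqrt{n_j}/\sum_k\sqrt{n_k}$. Then every term has the same exponent $s^2\pr{\sum_k\sqrt{n_k}}^2/8$, and by Cauchy--Schwarz $\pr{\sum_k\sqrt{n_k}}^2\le\chi\sum_k n_k=\chi n$. Hence
\begin{align*}
\bbE\sbr{e^{s(A-\bbE[A])}}\le\exp\pr{\frac{s^2\chi n}{8}}.
\end{align*}
Combining with the Markov step and optimizing at $s=4\delta/(\chi n)$ gives $\exp\pr{-2\delta^2/(n\chi)}$, which is the claimed bound.

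I expect the main obstacle to be the choice of weights. The naive choice $w_j=n_j/n$ leaves terms of the form $n^2/n_j$ in the exponent, which can be as large as $n^2$ when some class is small, and so does not give the $n\chi$ scaling. The square-root weights together with Cauchy--Schwarz are exactly what convert the bound to $n\chi$. Everything else is the routine Hoeffding/Chernoff computation.
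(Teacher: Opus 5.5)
Your argument is correct, and it is essentially Janson's own proof of the cited Theorem~2.1, which the paper invokes without reproducing, specialized from a fractional cover to a partition: Jensen's inequality across the independent classes, Hoeffding's lemma within each class, weights $w_j\propto\sqrt{n_j}$, and Cauchy--Schwarz to get $\bigl(\sum_j\sqrt{n_j}\bigr)^2\le\chi n$, with $s=4\delta/(\chi n)$ then giving exactly $\exp\bigl(-2\delta^2/(n\chi)\bigr)$. The only cosmetic point is that empty classes, where $n_j=0$ would force $w_j=0$, should be discarded first; this only lowers the effective $\chi$, so the bound is unaffected.
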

To describe the typical sets, we first define the information density $i_{p_{U_1,Z|U_0}}:\calU_0\times\calU_1\times\calZ\to\bbR_+$, and $i_{p_{U_2,Z|U_0,U_1}}:\calU_0\times\calU_1\times\calU_2\times\calZ\to\bbR_+$ as
\begin{subequations}
    \begin{align}
        &i_{p_{U_1,Z|U_0}}\pr{u_1;z|u_0}\triangleq\log\pr{\frac{p_{Z|U_0=u_0,U_1=u_1}}{p_{Z|U_0=u_0}}(z)},\label{eq:Info_Density_XtZ}\\
        &i_{p_{U_2,Z|U_0,U_1}}\pr{u_2;z|u_0,u_1}\triangleq\log\pr{\frac{p_{Z|U_0=u_0,U_1=u_1,U_2=u_2}}{p_{Z|U_0=u_0,U_1=u_1}}(z)\!}\!.\label{eq:Info_Density_X1X2Z}
    \end{align}
    \end{subequations}Now, let $\epsilon\ge0$ be arbitrary, to be defined later, and define 
    \begin{subequations}\label{eq:Typical}
        \begin{align}
        &\calA_{1,\epsilon}^{(n)}\triangleq\Bigg\{(u_0^n,u_1^n,z^n):\frac{1}{n}i_{p_{U_1,Z|U_0}^\on}\pr{u_1^n;z^n|u_0^n}\nonumber\\
        &\qquad\quad<\bbI(U_1;Z|U_0)+\epsilon\Bigg\},\label{eq:Typical_1}\\
        &\calA_{2,\epsilon}^{(n)}\triangleq\Bigg\{(u_0^n,u_1^n,u_2^n,z^n):\frac{1}{n}i_{p_{U_2,Z|U_0,U_1}^\on}\pr{u_2^n;z^n|u_0^n,u_1^n}\nonumber\\
        &\qquad\quad<\bbI(U_2;Z|U_0,U_1)+\epsilon\Bigg\}.\label{eq:Typical_2}
    \end{align}
    \end{subequations}
By using the indicator functions, we split the $p_{Z^n|\calC_n}$ into atypical and typical parts as provided in \eqref{eq:Separating} at the bottom of the previous page. Using \eqref{eq:Separating} and the triangle inequality, we now split the total variation distance into atypical and typical parts as follows
    \setcounter{equation}{52}
    \begin{align}
        &\bbV\pr{p_{Z^n|\calC_n},q_{Z|U_0}^\on\pr{\cdot|U_0^n}}\nonumber\\
        &=\sum_{z^n\in\calZ^n}q_{Z|U_0}^\on(z^n|U_0^n)\abs{\frac{p_{Z^n|\calC_n}(z^n)}{q_{Z|U_0}^\on(z^n|U_0^n)}-1}\nonumber\\
        &\le p_{\text{atyp}}^{(1)}+p_{\text{atyp}}^{(2)}+\sum_{z^n\in\calZ^n}q_{Z|U_0}^\on(z^n|U_0^n)\sbr{p_{\text{typ}}(z^n)-1}^+.\label{eq:Total_TV_Bound}
    \end{align}
    Note that we assume that $p_{Z^n|\calC_n}$ is absolutely continuous \ac{wrt} $q_{Z|U_0}^\on(\cdot|U_0^n)$, therefore the summation in \eqref{eq:Total_TV_Bound} is bounded. The following two lemmas are essential in our proof.
    \begin{lemma}[Bound on Typical Sets]
    \label{lemma:Bounding_Typical}
        Let $\epsilon>0$ and $0<\delta<1$, $V_0\in\calV_0$, $V_1\in\calV_1$, $V_2\in\calV_2$, and $Z\in\calZ$ be four finite \acp{RV} with joint \ac{PMF} $p_{V_0V_1V_2Z}$. Also, for any $v_0^n\in\calV_0^n$, let  $B_n=\br{V_1^n(j)}_{j\in\calJ}$, where $\calJ\triangleq[2^{nR}]$, be a set of codewords of length $n$ generated \ac{iid} according to $p_{V_1|V_0}(\cdot|v_0^n)$. For any $v_2^n\in\calV_2^n$ and $z^n\in\calZ^n$, we have
        \begin{align}
            &\bbP_{B_n}\Bigg(\frac{1}{2^{nR}}\sum_{j\in\calJ}\frac{p_{Z|V_0V_1V_2}^\on\pr{z^n|v_0^n,V_1^n(j),v_2^n}}{p_{Z|V_0V_2}^\on(z^n|v_0^n,v_2^n)}\nonumber\\
            &\qquad\times\indic{1}_{\br{\pr{v_0^n,V_1^n(j),v_2^n,z^n}\in\calA_\epsilon^n}}>1+\delta\Bigg)\nonumber\\
            &\le\exp\pr{-\frac{\delta^2}{3}2^{n\pr{R-\bbI(V_1;Z|V_0,V_2)-\epsilon}}},\nonumber
        \end{align}where
        \begin{align}
            \calA_{\epsilon,n}&\triangleq\big\{(v_0^n,v_1^n,v_2^n,z^n):i(v_1^n;z^n|v_0^n,v_2^n)\nonumber\\
            &\qquad\le n\pr{\bbI(V_1;Z|V_0,V_2)+\epsilon}\big\}.\label{eq:Tipical_Set_First_Lemma}
        \end{align}
    \end{lemma}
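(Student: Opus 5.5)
We apply the Chernoff--Hoeffding bound of Lemma~\ref{lemma:Hoeffding} to a normalized sum of independent bounded terms. The independence comes from the codebook construction; the boundedness comes from the truncation to the set $\calA_{\epsilon,n}$ in \eqref{eq:Tipical_Set_First_Lemma}.

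Fix $v_0^n$, $v_2^n$, $z^n$. For each $j\in\calJ$ define
\begin{align*}
A_j\triangleq \frac{p_{Z|V_0V_1V_2}^\on\pr{z^n|v_0^n,V_1^n(j),v_2^n}}{p_{Z|V_0V_2}^\on(z^n|v_0^n,v_2^n)}\indic{1}_{\br{\pr{v_0^n,V_1^n(j),v_2^n,z^n}\in\calA_{\epsilon,n}}}.
\end{align*}
The ratio equals $2^{i(V_1^n(j);z^n|v_0^n,v_2^n)}$. On $\calA_{\epsilon,n}$ this is at most $2^{n(\bbI(V_1;Z|V_0,V_2)+\epsilon)}$. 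Hence the rescaled variables $\tilde A_j\triangleq 2^{-n(\bbI(V_1;Z|V_0,V_2)+\epsilon)}A_j$ lie in $[0,1]$. They are mutually independent, because the codewords $V_1^n(j)$ are drawn independently given $v_0^n$.

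Next we bound the mean. Since the indicator is at most one,
\begin{align*}
\bbE[A_j]\le\sum_{v_1^n}p^\on_{V_1|V_0}(v_1^n|v_0^n)\frac{p^\on_{Z|V_0V_1V_2}(z^n|v_0^n,v_1^n,v_2^n)}{p^\on_{Z|V_0V_2}(z^n|v_0^n,v_2^n)}.
\end{align*}
We need the right-hand side to equal $1$. This holds provided $V_1$ and $V_2$ are conditionally independent given $V_0$, so that $p_{V_1|V_0}=p_{V_1|V_0V_2}$ and the sum marginalizes to $p_{Z|V_0V_2}$. This Markov structure ($U_1-U_0-U_2$) is exactly the one in the code construction, and I will make it an explicit assumption on $p_{V_0V_1V_2Z}$. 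If it fails, the denominator should be read with the induced mixture, and the argument is unchanged.

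With this, $A\triangleq\sum_j\tilde A_j$ satisfies
\begin{align*}
\bbE[A]\le\mu\triangleq 2^{nR}2^{-n(\bbI(V_1;Z|V_0,V_2)+\epsilon)}.
\end{align*}
The event in the lemma is $\br{\frac{1}{2^{nR}}\sum_j A_j>1+\delta}$, which is identical to $\br{A>\mu(1+\delta)}$. Lemma~\ref{lemma:Hoeffding} then gives the bound
\begin{align*}
\exp\pr{-\frac{\delta^2}{3}2^{n(R-\bbI(V_1;Z|V_0,V_2)-\epsilon)}}.
\end{align*}
This is the claim.

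The only delicate step is the expectation bound: it relies on the conditional independence that makes $p_{Z|V_0V_2}$ the correct averaging denominator. Everything else is bookkeeping. Note also that Lemma~\ref{lemma:Hoeffding} requires $\bbE[A]\le\mu$ rather than equality, and that is all we have here.
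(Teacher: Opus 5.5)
Your proposal is correct and follows the paper's proof step for step. You rescale each summand by $2^{-n(\bbI(V_1;Z|V_0,V_2)+\epsilon)}$ so that the truncation to $\calA_{\epsilon,n}$ places it in $[0,1]$, bound the mean of the sum by $2^{n(R-\bbI(V_1;Z|V_0,V_2)-\epsilon)}$, and apply the Chernoff--Hoeffding bound of Lemma~\ref{lemma:Hoeffding}, exactly as the paper does. Your explicit assumption that $V_1$ and $V_2$ are conditionally independent given $V_0$ is a worthwhile clarification: the paper relies on it silently when it sets the expected likelihood ratio equal to one, and it holds in every place the lemma is invoked (through the Markov structure $U_1-U_0-U_2$, or with $V_2=\emptyset$).
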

    \begin{proof}
        We have
        \begin{align}
            &\bbP_{B_n}\Bigg(\frac{1}{2^{nR}}\sum_{j\in\calJ}\frac{p_{Z|V_0V_1V_2}^\on\pr{z^n|v_0^n,V_1^n(j),v_2^n}}{p_{Z|V_0V_2}^\on(z^n|v_0^n,v_2^n)}\nonumber\\
            &\quad\times\indic{1}_{\br{\pr{v_0^n,V_1^n(j),v_2^n,z^n}\in\calA_{\epsilon,n}}}>1+\delta\Bigg)\nonumber\\
            &=\bbP_{B_n}\Bigg(\sum_{j\in\calJ}2^{-n(\bbI(V_1;Z|V_0,V_2)+\epsilon)}\nonumber\\
            &\quad\times\frac{p_{Z|V_0V_1V_2}^\on\pr{z^n|v_0^n,V_1^n(j),v_2^n}}{p_{Z|V_0V_2}^\on(z^n|v_0^n,v_2^n)}\indic{1}_{\br{\pr{v_0^n,V_1^n(j),v_2^n,z^n}\in\calA_{\epsilon,n}}}\nonumber\\
            &\quad>2^{-n(\bbI(V_1;Z|V_0,V_2)+\epsilon-R)}(1+\delta)\Bigg).\label{eq:P_hat}
        \end{align}Note that, from \eqref{eq:Tipical_Set_First_Lemma}, each summand on the \ac{RHS} of \eqref{eq:P_hat} is bounded above by 1, and the expectation of the sum can be bounded as follows,
        \begin{align}
            &\bbE_{B_n}\Bigg[\sum_{j\in\calJ}2^{-n(\bbI(V_1;Z|V_0,V_2)+\epsilon)}\frac{p_{Z|V_0V_1V_2}^\on\pr{z^n|v_0^n,V_1^n(j),v_2^n}}{p_{Z|V_0V_2}^\on(z^n|v_0^n,v_2^n)}\nonumber\\
            &\qquad\times\indic{1}_{\br{\pr{v_0^n,V_1^n(j),v_2^n,z^n}\in\calA_\epsilon^n}}\Bigg]\nonumber\\
            &\le2^{-n(\bbI(V_1;Z|V_0,V_2)+\epsilon)}\nonumber\\
            &\qquad\times\sum_{j\in\calJ}\bbE_{B_n}\sbr{\frac{p_{Z|V_0V_1V_2}^\on\pr{z^n|v_0^n,V_1^n(j),v_2^n}}{p_{Z|V_0V_2}^\on(z^n|v_0^n,v_2^n)}}\nonumber\\
            &=2^{-n(\bbI(V_1;Z|V_0,V_2)+\epsilon-R)}.\nonumber
        \end{align}Now applying Lemma~\ref{lemma:Hoeffding} to \eqref{eq:P_hat} completes the proof of Lemma~\ref{lemma:Bounding_Typical}.
    \end{proof}
\begin{figure*}[b!]
    \hrulefill
    \setcounter{equation}{61}
\begin{align}
    &\bbP_\mu\pr{p_{\text{typ}}(z^n)>1+3\cdot2^{-n\beta}\big|\calC_{2,n}\in\mathfrak{C}_{z^n}}\nonumber\\
    &\mathop=\limits^{(a)}\sum_{u_0^n}\sum_{\calC_{1,n}}\bbP_\mu\pr{U_0^n=u_0^n}\bbP_\mu\pr{C_{1,n}=\calC_{1,n}|U_0^n=u_0^n}\bbP_\mu\left(\frac{1}{2^{nR_1}}\sum_{s_1\in\calS_1}\frac{W_{Z|U_0U_1}^\on\pr{z^n|U_0^n,U_1^n(s_1)}}{q_{Z|U_0}^\on\pr{z^n|U_0^n}}\right.\nonumber\\
    &\quad\left.\times \indic{1}_{\br{\pr{U_0^n,U_1^n(s_1),z^n}\in\calA_{1,\epsilon}^{(n)}}}\,p_{\text{typ}}^{(1)}(U_0^n,U_1^n(s_1),z^n)>1+3\cdot2^{-n\beta}\big|
    U_0^n=u_0^n,C_{1,n}=\calC_{1,n},C_{2,n}\in\mathfrak{C}_{z^n}\right)\nonumber\\
    &\mathop\le\limits^{(b)}\sum_{u_0^n}\sum_{\calC_{1,n}}\bbP_\mu\pr{U_0^n=u_0^n}\bbP_\mu\pr{C_{1,n}=\calC_{1,n}|U_0^n=u_0^n}\bbP_\mu\left(\frac{1}{2^{nR_1}}\sum_{s_1\in\calS_1}\frac{W_{Z|U_0U_1}^\on\pr{z^n|U_0^n,U_1^n(s_1)}}{q_{Z|U_0}^\on\pr{z^n|U_0^n}}\right.\nonumber\\
    &\quad\left.\times \indic{1}_{\br{\pr{U_0^n,U_1^n(s_1),z^n}\in\calA_{1,\epsilon}^{(n)}}}>\frac{1+3\cdot2^{-n\beta}}{1+2^{-n\beta}}\Bigg|
    U_0^n=u_0^n,C_{1,n}=\calC_{1,n},C_{2,n}\in\mathfrak{C}_{z^n}\right)\nonumber\\
    &\mathop\le\limits^{(c)}\sum_{u_0^n}\bbP_\mu\pr{U_0^n=u_0^n}\bbP_\mu\left(\frac{1}{2^{nR_1}}\sum_{s_1\in\calS_1}\frac{W_{Z|U_0U_1}^\on\pr{z^n|U_0^n,U_1^n(s_1)}}{q_{Z|U_0}^\on\pr{z^n|U_0^n}}\indic{1}_{\br{\pr{U_0^n,U_1^n(s_1),z^n}\in\calA_{1,\epsilon}^{(n)}}}\right.\nonumber\\
    &\qquad\qquad>1+2^{-n\beta}|U_0^n=u_0^n\Big)\nonumber\\
    &\mathop\le\limits^{(d)}\sum_{u_0^n}\bbP_\mu\pr{U_0^n=u_0^n}\exp\pr{-\frac{1}{3}2^{-n\pr{\bbI(U_1;Z|U_0)+\epsilon+2\beta-R_1}}}\nonumber\\
    &\le\exp\pr{-\frac{1}{3}2^{-n\pr{\bbI(U_1;Z|U_0)+\epsilon+2\beta-R_1}}},\label{eq:Ptilde}
\end{align}
    \setcounter{equation}{57}
\end{figure*}
\begin{lemma}[Bound on Atypical Sets]
\label{lemma:Bounding_Atypical}
    Let $0<\delta$, $V_0\in\calV_0$, $V_1\in\calV_1$, $V_2\in\calV_2$, and $Z\in\calZ$ be four finite \acp{RV} with joint \ac{PMF} $p_{V_0V_1V_2Z}$. Also, let $V_0^n\in\calV_0^n$ be a random sequence generated \ac{iid} according to $P_{V_0}$, and for any realization $V_0^n=v_0^n$, let $B_{1,n}=\br{V_1^n(j_1)}_{j_1\in\calJ_1}$, where $\calJ_1\triangleq[2^{nR_1}]$, and $B_{2,n}=\br{V_2^n(j_2)}_{j_2\in\calJ_2}$, where $\calJ_2\triangleq[2^{nR_2}]$, be sets of codewords of length $n$ generated \ac{iid} according to $p_{V_1|V_0}(\cdot|v_0^n)$ and $p_{V_2|V_0}(\cdot|v_0^n)$, respectively. We define $B_n\triangleq\br{V_0^n,B_{1,n},B_{2,n}}$. Now, consider a set $\calA\subseteq\calV_0^n\times\calV_1^n\times\calV_2^n\times\calZ^n$, with $\bbP\pr{\pr{V_0^n,V_1^n,V_2^n,Z^n}\in\calA}\le\mu$. We have \eqref{eq:Additional_2} at the bottom of the previous page.
    \end{lemma}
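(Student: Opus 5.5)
The plan is to condition on the time-sharing sequence, reduce the claim to a sum of bounded, partially dependent random variables, and then apply the concentration bound of Lemma~\ref{lemma:Janson}. The factor $2^{n\min\{R_1,R_2\}}$ in the exponent should come from splitting the index set $\calJ_1\times\calJ_2$ into classes of mutually independent terms.

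\emph{Setup.} Write $N\triangleq 2^{n(R_1+R_2)}$. For each pair $(j_1,j_2)\in\calJ_1\times\calJ_2$ define
\begin{align*}
A_{j_1,j_2}\triangleq\sum_{z^n\in\calZ^n}p_{Z|V_0V_1V_2}^\on\pr{z^n|V_0^n,V_1^n(j_1),V_2^n(j_2)}\indic{1}_{\br{\pr{V_0^n,V_1^n(j_1),V_2^n(j_2),z^n}\in\calA}}.
\end{align*}
This is the conditional probability that $(V_0^n,V_1^n(j_1),V_2^n(j_2),Z^n)\in\calA$, so $A_{j_1,j_2}\in[0,1]$. The event in \eqref{eq:Additional_2} is exactly $\sum_{j_1,j_2}A_{j_1,j_2}>N\mu(1+\delta)$.

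\emph{Conditioning and the mean.} Fix $V_0^n=v_0^n$. The codewords $\{V_1^n(j_1)\}$ and $\{V_2^n(j_2)\}$ are then independent across indices and across users. Hence $\bbE[A_{j_1,j_2}\mid v_0^n]=\bbP\pr{(V_0^n,V_1^n,V_2^n,Z^n)\in\calA\mid V_0^n=v_0^n}$. I will use $\mu$ as an upper bound on this conditional mean, which gives $\bbE[\sum A_{j_1,j_2}\mid v_0^n]\le N\mu$. This is the form in which the lemma is applied later, since the sets $\calA_{i,\epsilon}^{(n)}$ are evaluated conditionally on $u_0^n$. Consequently the event $\sum A>N\mu(1+\delta)$ is contained in $\sum A>\bbE[\sum A\mid v_0^n]+N\mu\delta$.

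\emph{Partition into independent classes.} Assume without loss of generality $R_1\ge R_2$, and identify $\calJ_1$ with $\bbZ_{|\calJ_1|}$. For each $k\in\calJ_1$, let class $k$ be
\begin{align*}
\br{(j_2+k \bmod |\calJ_1|,\;j_2):j_2\in\calJ_2}.
\end{align*}
Within a class, distinct pairs share neither a $j_1$ index nor a $j_2$ index. The corresponding $A_{j_1,j_2}$ are therefore functions of disjoint sets of independent codewords, hence mutually independent given $v_0^n$. The classes cover all $N$ pairs, and there are $\chi=2^{n\max\{R_1,R_2\}}$ of them.

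\emph{Concentration.} Apply Lemma~\ref{lemma:Janson} with $N$ summands, $\chi$ classes, and deviation $N\mu\delta$. The conditional probability is then at most
\begin{align*}
\exp\pr{-\frac{2N^2\mu^2\delta^2}{N\chi}}=\exp\pr{-2\mu^2\delta^2 2^{n\min\{R_1,R_2\}}}.
\end{align*}
This bound is uniform in $v_0^n$, so averaging over $V_0^n$ yields \eqref{eq:Additional_2}.

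The main obstacle is the mean step. The hypothesis controls only the unconditional probability, while the concentration bound needs the mean conditional on $v_0^n$. If $\mu$ is strictly only an unconditional bound, I would restrict to typical $v_0^n$ and absorb the remaining mass into an atypical term.
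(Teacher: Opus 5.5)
\paragraph{Comparison with the paper.} Your argument is the paper's argument. Both condition on $V_0^n=v_0^n$, split the $2^{n(R_1+R_2)}$ summands into $2^{n\max\br{R_1,R_2}}$ classes of $2^{n\min\br{R_1,R_2}}$ conditionally independent terms, and apply Lemma~\ref{lemma:Janson} with deviation $2^{n(R_1+R_2)}\mu\delta$. Your cyclic-shift construction makes explicit a partition that the paper only asserts, and the exponent you obtain is the same.

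\paragraph{The mean step.} The step you flagged is a genuine gap, and the paper's proof has the same gap without comment. The paper lowers the threshold to $2^{n(R_1+R_2)}\pr{\bbP\pr{\pr{V_0^n,V_1^n,V_2^n,Z^n}\in\calA}+\mu\delta}$, which uses the \emph{unconditional} probability. In its step $(a)$, however, the mean that Lemma~\ref{lemma:Janson} subtracts is $2^{n(R_1+R_2)}\bbP\pr{\pr{V_0^n,V_1^n,V_2^n,Z^n}\in\calA\mid V_0^n=v_0^n}$. This can be larger, so the deviation actually available is not $2^{n(R_1+R_2)}\mu\delta$.

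No argument can close this gap, because with only the unconditional hypothesis the statement is false whenever $V_0$ is non-degenerate. Take $\calS\triangleq\br{v_0^n: v_0(1)=a}$ with $0<P_{V_0}(a)<1$, set $\mu\triangleq P_{V_0}(a)$ and $\calA\triangleq\calS\times\calV_1^n\times\calV_2^n\times\calZ^n$, and choose $\delta$ with $\mu(1+\delta)<1$. Every summand then equals $\indic{1}_{\br{V_0^n\in\calS}}$, so the event in \eqref{eq:Additional_2} is exactly $\br{V_0^n\in\calS}$. Its probability is the constant $\mu$, while the claimed bound tends to zero.

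For the same reason, your fallback of discarding atypical $v_0^n$ cannot reproduce the stated bound. It yields that bound plus $\bbP\pr{V_0^n\notin\calT_\epsilon^{(n)}}$, which is only exponentially small, not doubly exponentially small.

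\paragraph{What is actually proved.} Your argument, and the paper's, establishes the conditional statement. If $\bbP\pr{\pr{V_0^n,V_1^n,V_2^n,Z^n}\in\calA\mid V_0^n=v_0^n}\le\mu$, then the probability of the event given $V_0^n=v_0^n$ is at most $\exp\pr{-2\delta^2\mu^22^{n\min\br{R_1,R_2}}}$. That is the form the lemma should take; it coincides with the stated lemma when $V_0$ is constant.

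\paragraph{The later application.} Your remark that the conditional form matches how the lemma is used later is not accurate. The bounds $2^{-n\beta_1}$ and $2^{-n\beta_2}$ are derived for unconditional probabilities, via $\bbD_\alpha\pr{P_{U_0U_1Z}\lVert P_{U_0}P_{U_1|U_0}P_{Z|U_0}}$. For atypical $u_0^n$ the conditional probability of leaving $\calA_{1,\epsilon}^{(n)}$ can tend to one. An example is $u_0^n$ constant at a letter $u_0$ with $\bbI(U_1;Z|U_0=u_0)>\bbI(U_1;Z|U_0)+\epsilon$.

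The conditional hypothesis holds, with slightly adjusted exponents, uniformly over typical $u_0^n$. The atypical mass must therefore be paid where the lemma is applied, as an additive, merely exponentially small term. That is still enough for the union-bound and selection-lemma step in the proof of Theorem~\ref{thm:Achievable}, but it is not doubly exponential.
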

    \begin{proof}
        We have \eqref{eq:Additional_3} at the bottom of the previous page, 
        where $(a)$ follows from Lemma~\ref{lemma:Janson}, noting that $(i)$ the innermost sum is nonnegative and bounded by $1$; $(ii)$ the summation over $\pr{j_1,j_2}$ contains $2^{n(R_1 + R_2)}$ terms, which can be partitioned into $2^{n\max\br{R_1, R_2}}$ subsets, each consisting of $2^{n\min\br{R_1, R_2}}$ independently distributed terms; and $(iii)$ the overall expectation of the two inner summations equals $2^{n(R_1 + R_2)} \mathbb{P}\!\left((V_0^n,V_1^n, V_2^n, Z^n) \in \calA|V_0^n=v_0^n\right)$.
    \end{proof}
We now proceed to the proof of Lemma~\ref{lemma:Resolvability_12}. To bound $p_{\text{atyp}}^{(1)}$ in \eqref{eq:Total_TV_Bound}, for any $\alpha>1$, we have
\begin{align*}
    &\bbP_{U_0^n,U_1^n,Z^n}\pr{\pr{U_0^n,U_1^n,Z^n}\notin\calA_{1,\epsilon}^{(n)}}\nonumber\\
    &=\bbP_{U_0^n,U_1^n,Z^n}\Bigg(\frac{W_{Z|U_0U_1}^\on\pr{Z^n|U_0^n,U_1^n(s_1)}}{W_{Z|U_0}^\on\pr{Z^n|U_0^n}}\nonumber\\
    &\qquad>2^{n(\bbI(U_1;Z|U_0)+\epsilon)}\Bigg)\nonumber\\
    &=\bbP_{U_0^n,U_1^n,Z^n}\Bigg(\pr{\frac{W_{Z|U_0U_1}^\on\pr{Z^n|U_0^n,U_1^n(s_1)}}{W_{Z|U_0}^\on\pr{Z^n|U_0^n}}}^{\alpha-1}\nonumber\\
    &\qquad>2^{n(\alpha-1)(\bbI(U_1;Z|U_0)+\epsilon)}\Bigg)\nonumber\\
    &\mathop\le\limits^{(a)}2^{-n(\alpha-1)(\bbI(U_1;Z|U_0)+\epsilon)}\nonumber\\
    &\qquad\times\bbE_{U_0^n,U_1^n,Z^n}\pr{\pr{\frac{W_{Z|U_0U_1}^\on\pr{Z^n|U_0^n,U_1^n(s_1)}}{W_{Z|U_0}^\on\pr{Z^n|U_0^n}}}^{\alpha-1}}\nonumber\\
    &\mathop=\limits^{(b)}2^{-n(\alpha-1)\pr{\bbI(U_1;Z|U_0)+\epsilon-\bbD_\alpha\pr{P_{U_0U_1Z}\rVert P_{U_0}P_{U_1|U_0}P_{Z|U_0}}}}\nonumber\\
    &\mathop\le\limits^{(c)}2^{-n\beta_1},
\end{align*}where
\begin{itemize}
    \item[$(a)$] follows from Markov's inequality;
    \item[$(b)$] follows from the definition of  R\'{e}nyi divergence of order $\alpha$;
    \item[$(c)$] \sloppy follows by choosing $\beta_1<(\alpha-~1)\big(\bbI(U_1;Z|U_0)+\epsilon-\bbD_\alpha\pr{P_{U_0U_1Z}\rVert P_{U_0}P_{U_1|U_0}P_{Z|U_0}}\big)$.
\end{itemize}
Note that since any joint distribution is absolutely continuous with respect to the product of its marginals, the R\'{e}nyi divergence is finite. Moreover, it is continuous in $\alpha$, and as $\alpha \to 1$, the R\'{e}nyi divergence converges to the mutual information and
\begin{align*}
    &\bbD_\alpha\!\left(P_{U_0U_1Z}\lVert P_{U_0} P_{U_1|U_0} P_{Z|U_0}\right)\nonumber\\
    &\ge \bbD_1\!\left(P_{U_0U_1Z}\lVert P_{U_0} P_{U_1|U_0} P_{Z|U_0}\right)\nonumber\\
    &=\bbI(U_1; Z | U_0).
\end{align*}
Therefore, by selecting $\epsilon > 0$ such that
\begin{align*}
\epsilon > \bbD_\alpha\!\left(P_{U_0U_1Z}\lVert P_{U_0} P_{U_1|U_0} P_{Z|U_0}\right) - \bbI(U_1; Z | U_0),
\end{align*}we guarantee the existence of an $\alpha>1$ for which $\beta_1>0$ while keeping $\beta_1$ sufficiently small to satisfy the desired condition. Similarly, one can show that
\begin{align}
    &\bbP_{U_0^n,U_1^n,U_2^n,Z^n}\pr{\pr{U_0^n,U_1^n,U_2^n,Z^n}\notin\calA_{2,\epsilon}^{(n)}}\le2^{-n\beta_2},\label{eq:Beta_2}
\end{align}\sloppy where $\beta_2<(\alpha-1)\big(\bbI(U_2;Z|U_0,U_1)+\epsilon-\bbD_\alpha\pr{P_{U_0U_1U_2Z}\rVert P_{U_0U_1}P_{U_2|U_0U_1}P_{Z|U_0U_1}}\big)$. Now we use an argument similar to that we used for $\beta_1$ to show that there exists an $\alpha>1$ such that $\beta_2>0$ and is sufficiently small. Then we choose $\beta\triangleq\min\br{\beta_1,\beta_2}$. 
Now applying Lemma~\ref{lemma:Bounding_Atypical} first with $A=\calU_2^n\times\pr{\pr{\calU_0^n\times\calU_1^n\times\calZ^n}\backslash\calA_{1,\epsilon}^{(n)}}$ and $\delta=1$, and then with $A=\pr{\pr{\calU_0^n\times\calU_1^n\times\calU_2^n\times\calZ^n}\backslash\calA_{2,\epsilon}^{(n)}}$ and $\delta=1$, respectively, leads to
\begin{subequations}\label{eq:p_Atyp_final}
    \begin{align}
    \bbP_\mu\pr{p_{\text{atyp}}^{(1)}>2\cdot2^{-n\beta}}&\le\exp\pr{-2\cdot2^{n\pr{\min\br{R_1,R_2}-2\beta}}},\label{eq:p_Atyp_1_final}\\
    \bbP_\mu\pr{p_{\text{atyp}}^{(2)}>2\cdot2^{-n\beta}}&\le\exp\pr{-2\cdot2^{n\pr{\min\br{R_1,R_2}-2\beta}}}.\label{eq:p_Atyp_2_final}
    \end{align}
\end{subequations}
To bound the probability of the typical term in \eqref{eq:Total_TV_Bound}, for any fixed $u_0^n$, $u_1^n$, and $z^n$, we first apply Lemma~\ref{lemma:Bounding_Typical} with $V_0=U_0$, $V_1=U_2$, $V_2=U_1$, and $\delta=2^{-n\beta}$, which leads to
\begin{align}
    \bbP_\mu\pr{p_{\text{typ}}^{(1)}(u_0^n,u_1^n,z^n)>1+2^{-n\beta}}\nonumber\\
    \le\exp\pr{-\frac{1}{3}2^{-n\pr{\bbI\pr{U_2;Z|U_0,U_1}+\epsilon+2\beta-R_2}}},\label{eq:pTyp1}
\end{align}where
\begin{align*}
    &p_{\text{typ}}^{(1)}(u_0^n,u_1^n,z^n)\triangleq\nonumber\\
    &\frac{1}{2^{nR_2}}\sum_{s_2\in\calS_2}\frac{W_{Z|U_0U_1U_2}^\on\pr{z^n|u_0^n,u_1^n,U_2^n(s_2)}}{W_{Z|U_0U_1}^\on\pr{z^n|u_0^n,u_1^n}}\nonumber\\
    &\quad\times\indic{1}_{\br{\pr{u_0^n,u_1^n,U_2^n(s_2),z^n}\in\calA_{2,\epsilon}^{(n)}}}.
\end{align*}Now let,
\begin{align}
    \mathfrak{C}_{z^n}\triangleq\hspace{-1mm}\bigcap_{(u_0^n,u_1^n)\in\calU_0^n\times\calU_1^n}\br{\calC_{2,n}: p_{\text{typ}}^{(1)}(u_0^n,u_1^n,z^n)\le1+2^{-n\beta}}.\label{eq:CZ}
\end{align}For any arbitrary but fixed $z^n$, we have \eqref{eq:Ptilde} at the bottom of the page, 
where
\begin{itemize}
    \item[$(a)$] follows from law of total probability;
    \item[$(b)$] follows since we have $C_{2,n}\in\mathfrak{C}_{z^n}$ in the conditioning;
    \item[$(c)$] follows since for sufficiently large $n$ we have $2^{-n\beta}\le1$;
    \item[$(d)$] follows from Lemma~\ref{lemma:Bounding_Typical} with $V_0=U_0$, $V_1=U_1$, $V_2=\emptyset$, and $\delta=2^{-n\beta}$.
\end{itemize}
\begin{figure*}[b!]
    \hrulefill
    \setcounter{equation}{67}
\begin{align}
    \bbE_\mu\bbV\pr{p_{Z^n\lvert\calC_n},p_{Z^n\lvert\calC_2}}&=\bbE_\mu \bbV\left(\frac{1}{2^{n(R_1+R_2)}}\sum_{s_1,s_2}W_{Z\lvert U_0U_1U_2}^\on\pr{z^n\lvert U_0^n,U_1^n(s_1),U_2^n(s_2)},\frac{1}{2^{nR_2}}\sum_{s_2}W_{Z\lvert U_0U_2}^\on\pr{z^n\lvert U_0^n,U_2^n(s_2)}\right)\nonumber\\
    &\mathop\le\limits^{(a)}\frac{1}{2^{nR_2}}\sum_{s_2}\bbE_\mu \bbV\left(\frac{1}{2^{nR_1}}\sum_{s_1}W_{Z\lvert U_0U_1U_2}^\on\pr{z^n\lvert U_0^n,U_1^n(s_1),U_2^n(s_2)},W_{Z\lvert U_0U_2}^\on\pr{z^n\lvert U_0^n,U_2^n(s_2)}\right)\nonumber\\
    &\mathop=\limits^{(b)} \bbE_{U_0^n,U_2^n(1),C_{1,n}}\bbV\left(\frac{1}{2^{nR_1}}\sum_{s_1}W_{Z\lvert U_0U_1U_2}^\on\pr{z^n\lvert U_0^n,U_1^n(s_1),U_2^n(1)},W_{Z\lvert U_0U_2}^\on\pr{z^n\lvert U_0^n,U_2^n(1)}\right)\nonumber\\
    &\mathop=\limits^{(c)} \bbE_{U_0^n,U_2^n,C_{1,n}}\bbV\left(\bar{p}_{Z^n|C_{1,n}}(z^n),W_{Z\lvert U_0U_2}^\on\pr{z^n\lvert U_0^n,U_2^n}\right),\label{eq:Total_Vari_One_Sided}
\end{align}
    \setcounter{equation}{62}
\end{figure*}
Now we have,
\begin{align*}
    &\bbP_\mu\pr{\bbV\pr{p_{Z^n|\calC_n},q_{Z|U_0}^\on(\cdot|U_0^n)}>7\cdot2^{-n\beta}}\nonumber\\
    &\mathop\le\limits^{(a)}\bbP_\mu\pr{p_{\text{atyp}}^{(1)}>2\cdot2^{-n\beta}}+\bbP_\mu\pr{p_{\text{atyp}}^{(2)}>2\cdot2^{-n\beta}}\nonumber\\
    &\quad+\sum_{z^n\in\calZ^n}\bbP_\mu\pr{p_{\text{typ}}(z^n)>1+3\cdot2^{-n\beta}}\nonumber\\
    &=\bbP_\mu\pr{p_{\text{atyp}}^{(1)}>2\cdot2^{-n\beta}}+\bbP_\mu\pr{p_{\text{atyp}}^{(2)}>2\cdot2^{-n\beta}}\nonumber\\
    &\quad+\sum_{z^n\in\calZ^n}\Big(\bbP_\mu\pr{C_{2,n}\in\mathfrak{C}_{z^n},p_{\text{typ}}(z^n)>1+3\cdot2^{-n\beta}}\nonumber\\
    &\quad+\bbP_\mu\pr{C_{2,n}\notin\mathfrak{C}_{z^n},p_{\text{typ}}(z^n)>1+3\cdot2^{-n\beta}}\Big)\nonumber\\
    &\le\bbP_\mu\pr{p_{\text{atyp}}^{(1)}>2\cdot2^{-n\beta}}+\bbP_\mu\pr{p_{\text{atyp}}^{(2)}>2\cdot2^{-n\beta}}\nonumber\\
    &\quad+\sum_{z^n\in\calZ^n}\Big(\bbP_{C_{2,n}}\pr{C_{2,n}\in\mathfrak{C}_{z^n}}\nonumber\\
    &\quad+\bbP_\mu\pr{p_{\text{typ}}(z^n)>1+3\cdot2^{-n\beta}\big|C_{2,n}\in\mathfrak{C}_{z^n}}\Big)\nonumber\\
    &\mathop\le\limits^{(b)}2\exp\pr{-2\times2^{n\pr{\min\br{R_1,R_2}-2\beta}}}\nonumber\\
    &\quad+\abs{\calZ^n}\abs{\calU_0^n}\abs{\calU_1^n}\exp\pr{-\frac{1}{3}2^{-n\pr{\bbI(U_2;Z|U_0,U_1)+\epsilon+2\beta-R_2}}}\nonumber\\
    &\quad+\abs{\calZ^n}\exp\pr{-\frac{1}{3}2^{-n\pr{\bbI(U_1;Z|U_0)+\epsilon+2\beta-R_1}}},
\end{align*}where $(a)$ follows \eqref{eq:Total_TV_Bound} and the union bound; and $(b)$ follows from \eqref{eq:p_Atyp_final}, \eqref{eq:pTyp1}, and \eqref{eq:Ptilde}.

Now we need to choose $\gamma_1$ and $\gamma_2$ such that \eqref{eq:Resolvability_122} holds. First, we choose $\epsilon$ and $\beta$ small enough such that the following quantity is positive,
\begin{align*}
    \hat{\delta}&\triangleq\min\big\{\min\br{R_1,R_2}-2\beta,R_1-2\beta-\epsilon\nonumber\\
    &\quad-\bbI(U_1;Z|U_0),R_2-2\beta-\epsilon-\bbI(U_2;Z|U_0,U_1)\big\}.
\end{align*}
Since there is no constraint on $\beta$ and $\epsilon$ other than being positive and sufficiently small, such a choice is possible if $R_1>\bbI(U_1;Z|U_0)$ and $R_2>\bbI(U_2;Z|U_0,U_1)$. Then, choosing $\gamma_2<\hat{\delta}$ and $\gamma_1<\beta$ completes the proof of Lemma~\ref{lemma:Resolvability_12}. 
\end{proof}

Similarly, by interchanging the roles of $U_1$ and $U_2$ in Lemma~\ref{lemma:Resolvability_12}, and using the time-sharing between these two corner points, we can establish Lemma~\ref{lemma:Resolvability_1}. However, to obtain the entire achievable region stated in Lemma~\ref{lemma:Resolvability_1}, a time-sharing argument between these two corner points is required, as detailed below. We need to show that for any rate pair $(R_1,R_2)$ satisfying \eqref{eq:Resolvability_11}, there exists a parameter $\lambda^\star\in[0,1]$ such that
\begin{align*}
    R_1&>(1-\lambda^\star)\bbI(U_1;Z|U_0)+\lambda^\star\bbI(U_1;Z|U_0,U_2),\\
    R_2&>\lambda^\star\bbI(U_2;Z|U_0)+(1-\lambda^\star)\bbI(U_2;Z|U_0,U_1).
\end{align*}Then, Lemma~\ref{lemma:Resolvability_1} can be established by applying the result twice: first, over a block of length $(1-\lambda^\star)n$, and then, with the roles of $U_1$ and $U_2$ interchanged, over a block of length $\lambda^\star n$. First, we define $R_1(\lambda):[0,1]\to\bbR$ and $R_2(\lambda):[0,1]\to\bbR$ as
\begin{align*}
    R_1(\lambda)&=(1-\lambda)\bbI(U_1;Z|U_0)+\lambda\bbI(U_1;Z|U_0,U_2),\\
    R_2(\lambda)&=\lambda\bbI(U_2;Z|U_0)+(1-\lambda)\bbI(U_2;Z|U_0,U_1).
\end{align*}Note that $R_1(\lambda)$ and $R_2(\lambda)$ are continuous functions of $\lambda$. Moreover, since $R_1(0)=\bbI(U_1;Z|U_0)$, $R_1(1)=\bbI(U_1;Z|U_0,U_2)$, $R_2(0)=\bbI(U_2;Z|U_0,U_1)$, and $R_2(1)=\bbI(U_2;Z|U_0)$, and given that $\bbI(U_1;Z|U_0,U_2)\ge\bbI(U_1;Z|U_0)$ and $\bbI(U_2;Z|U_0,U_1)\ge\bbI(U_2;Z|U_0)$, it follows that $R_1(\lambda)$ is increasing and $R_2(\lambda)$ is decreasing in $\lambda$. If either $R_1$ is already greater than $\bbI(U_1;Z|U_0,U_2)$ or $R_2$ is greater than $\bbI(U_2;Z|U_0,U_1)$, then our rate pair $(R_1,R_2)$ lies in one of the cases that Lemma~\ref{lemma:Resolvability_12} (or its symmetric version) already covers. Therefore, the claim is immediately true, and we do not need to construct a time-sharing parameter $\lambda^\star$; the direct theorem applies as it is. Since Lemma~\ref{lemma:Resolvability_12} already covers the cases where 
$R_1 > \bbI(U_1;Z|U_0,U_2)$ or $R_2 > \bbI(U_2;Z|U_0,U_1)$, we now focus on the remaining situation where 
\begin{align*}
    \bbI(U_1;Z|U_0) &\le R_1 \le \bbI(U_1;Z|U_0,U_2)\\
\bbI(U_2;Z|U_0) &\le R_2 \le \bbI(U_2;Z|U_0,U_1).
\end{align*}
In this case, the functions
\begin{align*}
R_1(\lambda) &= (1-\lambda )\bbI(U_1;Z|U_0) + \lambda\bbI(U_1;Z|U_0,U_2),\\
R_2(\lambda) &= \lambda \bbI(U_2;Z|U_0) + (1-\lambda) \bbI(U_2;Z|U_0,U_1)
\end{align*}
are \emph{strictly monotonic} in $\lambda$: 
$R_1(\lambda)$ increases continuously from $\bbI(U_1;Z|U_0)$ to $\bbI(U_1;Z|U_0,U_2)$, 
while $R_2(\lambda)$ decreases continuously from $\bbI(U_2;Z|U_0,U_1)$ to $\bbI(U_2;Z|U_0)$. 
By the intermediate value theorem, there exist unique parameters 
$\lambda_1, \lambda_2 \in [0,1]$ such that $R_1(\lambda_1) = R_1$ and $R_2(\lambda_2) = R_2$. 
Next, we observe that $\lambda_1 > \lambda_2$. 
Indeed, suppose for contradiction that $\lambda_1 \le \lambda_2$. 
Since $R_1(\lambda)$ is strictly increasing and $R_2(\lambda)$ is strictly decreasing, we have that,
\begin{align*}
R_1 + R_2 
&= R_1(\lambda_1) + R_2(\lambda_2) \\
&\le R_1(\lambda_1) + R_2(\lambda_1) \\
&= \bbI(U_1,U_2;Z|U_0),
\end{align*}
which contradicts the assumption $R_1 + R_2 > \bbI(U_1,U_2;Z|U_0)$.
Therefore, we must have $\lambda_1 > \lambda_2$. 

By continuity, we can choose a parameter $\lambda^\star$ satisfying 
$\lambda_2 < \lambda^\star < \lambda_1$. 
Because $R_1(\lambda)$ increases with $\lambda$ and $R_2(\lambda)$ decreases with $\lambda$, 
it then follows that $R_1 > R_1(\lambda^\star)$ and $R_2 > R_2(\lambda^\star)$, as required. 
    \begin{figure*}[b!]
    \hrulefill
    \setcounter{equation}{71}
    \begin{align}
        &\bbP_{B_n}\left(\frac{1}{2^{nR}}\sum_{j\in\calJ}\sum_{z^n\in\calZ^n}p_{Z|V_0V_1V_2}^\on\pr{z^n|V_0^n,V_1^n(j),V_2^n}\indic{1}_{\br{\pr{V_0^n,V_1^n(j),V_2^n,z^n}\in\calA}}>\mu(1+\delta)\right)\le\exp\br{-2\delta^2\mu^22^{nR}}.\label{eq:Additinal_4}
    \end{align}
\hrulefill
\begin{align}
            &\bbP_{B_n}\left(\frac{1}{2^{nR}}\sum_{j\in\calJ}\sum_{z^n\in\calZ^n}p_{Z|V_0V_1V_2}^\on\pr{z^n|V_0^n,V_1^n(j),V_2^n}\indic{1}_{\br{\pr{V_0^n,V_1^n(j),V_2^n,z^n}\in\calA}}>\mu(1+\delta)\right)\nonumber\\
             &=\bbP_{B_n}\left(\sum_{j\in\calJ}\sum_{z^n\in\calZ^n}p_{Z|V_0V_1V_2}^\on\pr{z^n|V_0^n,V_1^n(j),V_2^n}\indic{1}_{\br{\pr{V_0^n,V_1^n(j),V_2^n,z^n}\in\calA}}>2^{nR}\mu(1+\delta)\right)\nonumber\\
             &\le\bbP_{B_n}\left(\sum_{j\in\calJ}\sum_{z^n\in\calZ^n}p_{Z|V_0V_1V_2}^\on\pr{z^n|V_0^n,V_1^n(j),V_2^n}\indic{1}_{\br{\pr{V_0^n,V_1^n(j),V_2^n,z^n}\in\calA}}>2^{nR}\pr{\bbP\pr{\pr{V_0^n,V_1^n,V_2^n,Z^n}\in\calA}+\mu\delta}\right)\nonumber\\
             &=\sum_{v_0^n}\sum_{v_2^n}\bbP(V_0^n=v_0^n,V_2^n=v_2^n)\bbP_{B_n|V_0^n=v_0^n,V_2^n=v_2^n}\left(\sum_{j\in\calJ}\sum_{z^n\in\calZ^n}p_{Z|V_0V_1V_2}^\on\pr{z^n|V_0^n,V_1^n(j),V_2^n}\right.\nonumber\\
             &\qquad\qquad\times\indic{1}_{\br{\pr{V_0^n,V_1^n(j),V_2^n,z^n}\in\calA}}>2^{nR}\pr{\bbP\pr{\pr{V_0^n,V_1^n,V_2^n,Z^n}\in\calA}+\mu\delta}\big|V_0^n=v_0^n,V_2^n=v_2^n\Big)\nonumber\\
             &\mathop\le\limits^{(a)}\sum_{v_0^n}\sum_{v_2^n}\bbP(V_0^n=v_0^n,V_2^n=v_2^n)\exp\pr{-2\frac{2^{2nR}\mu^2\delta^2}{2^{nR}}}\nonumber\\
             &\le\exp\pr{-2\mu^2\delta^22^{nR}},\label{eq:Additinal_5}
        \end{align}
            \setcounter{equation}{62}
\end{figure*}
\section{Proof of Lemma~\ref{lemma:one_Sided_MAC_Res}}
\label{proof:lemma:one_Sided_MAC_Res}
We prove Lemma~\ref{lemma:one_Sided_MAC_Res} by considering two cases. First, when $R_2 > \bbI(U_2;Z|U_0)$, the triangle inequality yields
\begin{align}
    \bbV\pr{p_{Z^n\lvert\bbC_n},p_{Z^n\lvert\bbC_2}}&\le\bbV\pr{p_{Z^n\lvert\bbC_n},q_{Z|U_0}^\on(\cdot|u_0^n)}\nonumber\\
    &\quad+\bbV\pr{p_{Z^n\lvert\bbC_2},q_{Z|U_0}^\on(\cdot|u_0^n)}.\label{eq:Triangle}
\end{align}
From Lemma~\ref{lemma:Resolvability_1}, it follows that for sufficiently large $n$, there exist $\gamma'_1,\gamma'_2>0$ such that
\begin{align}
    \bbP_\mu\pr{\bbV\pr{p_{Z^n|\bbC_n},q_{Z|U_0}^\on(\cdot|u_0^n)}>\exp\pr{-\gamma'_1n}}\nonumber\\
    \le\exp\pr{-\exp(\gamma'_2n)},\label{eq:First_Case_First_Term}
\end{align}if
\begin{subequations}\label{eq:Conditions_1}
\begin{align}
R_1 &> \bbI(U_1;Z\lvert U_0), \\ 
 R_2 &> \bbI(U_2;Z\lvert U_0), \label{eq:Conditions_10}\\ 
 R_1 + R_2 &> \bbI(U_1,U_2;Z\lvert U_0).
\end{align}
\end{subequations}
Moreover, since $p_{Z^n\lvert\bbC_2}$ corresponds to the scenario in which the first transmitter sends an \ac{iid} sequence and the second transmitter transmits a codeword from $\bbC_2$, it follows from Lemma~\ref{lemma:Resolvability_1} that, for sufficiently large $n$, there exist $\gamma''_1,\gamma''_2>0$ such that
\begin{align}
    \bbP_\mu\pr{\bbV\pr{p_{Z^n|\bbC_2},q_{Z|U_0}^\on(\cdot|u_0^n)}>\exp\pr{-\gamma''_1n}}\nonumber\\
    \le\exp\pr{-\exp(\gamma''_2n)},\label{eq:First_Case_Second_Term}
\end{align}
if \eqref{eq:Conditions_10} holds. 
Now, we have
\begin{align}
    &\bbP_\mu\pr{\bbV\pr{p_{Z^n|\bbC_n},p_{Z^n\lvert\bbC_2}}>\exp\pr{-\gamma_1n}}\nonumber\\
    &\mathop\le\limits^{(a)}\bbP_\mu\Bigg(\bbV\pr{p_{Z^n\lvert\bbC_n},q_{Z|U_0}^\on(\cdot|u_0^n)}\nonumber\\
    &\quad+\bbV\pr{p_{Z^n\lvert\bbC_2},q_{Z|U_0}^\on(\cdot|u_0^n)}>\exp\pr{-\gamma_1n}\Bigg)\nonumber\\
    &\mathop\le\limits^{(b)}\bbP_\mu\pr{\bbV\pr{p_{Z^n\lvert\bbC_n},q_{Z|U_0}^\on(\cdot|u_0^n)}>\frac{\exp\pr{-\gamma_1n}}{2}}\nonumber\\
    &\quad+\bbP_\mu\pr{\bbV\pr{p_{Z^n\lvert\bbC_2},q_{Z|U_0}^\on(\cdot|u_0^n)}>\frac{\exp\pr{-\gamma_1n}}{2}},\nonumber\\
    &\mathop\le\limits^{(c)}\exp\pr{-\exp(\gamma_2n)},
\end{align}where
\begin{itemize}
    \item[$(a)$] follows from \eqref{eq:Triangle}; 
    \item[$(b)$] follows from the union bound and the fact that, for non-negative \acp{RV} $A_1$ and $A_2$ and a constant $c$, we have $\br{A_1 + A_2 > c} \subseteq \br{A_1 > \tfrac{c}{2}} \cup \br{A_2 > \tfrac{c}{2}}$, together with the choice $\gamma_1 < \min\{\gamma'_1,\gamma''_1\}$, which ensures that for sufficiently large $n$,
    \begin{align*}
        \frac{\exp\pr{-\gamma_1n}}{2}\ge\exp\pr{-\gamma'_1n},\exp\pr{-\gamma''_1n};
    \end{align*}
    \item[$(c)$] follows from \eqref{eq:First_Case_First_Term} and \eqref{eq:First_Case_Second_Term} by choosing $\gamma_2=\min\br{\gamma'_2,\gamma''_2}$.
\end{itemize}
This results in the region $\calR_1$ of Lemma~\ref{lemma:one_Sided_MAC_Res}.

To establish the region $\calR_2$ in Lemma~\ref{lemma:one_Sided_MAC_Res}, consider the case where $R_2\le\bbI(U_2;Z|U_0)$. We have, \eqref{eq:Total_Vari_One_Sided} at the bottom of the previous page, 
where
\begin{itemize}
    \item[$(a)$] follows from the triangle inequality;
    \item[$(b)$] follows from the symmetry of codebook construction with respect to $s_2$;
    \item[$(c)$] follows by defining
    \setcounter{equation}{68}
\begin{align}
    \bar{p}_{Z^n|C_{1,n}}\triangleq\frac{1}{2^{nR_1}}\sum_{s_1}W_{Z\lvert U_0U_1U_2}^\on\pr{z^n\lvert U_0^n,U_1^n(s_1),U_2^n}.\label{eq:Induced_Dist_One_Sided}
\end{align}
\end{itemize}
\begin{figure*}[b!]
    \hrulefill
    \setcounter{equation}{73}
\begin{align}
    &\bbP_{U_0^n,U_1^n,U_2^n,Z^n}\pr{\pr{U_0^n,U_1^n,U_2^n,Z^n}\notin\calA_\epsilon^{(n)}}\nonumber\\
    &\quad=\bbP_{U_0^n,U_1^n,U_2^n,Z^n}\pr{\frac{W_{Z|U_0U_1U_2}^\on\pr{Z^n|U_0^n,U_1^n(s_1),U_2^n}}{W_{Z|U_0U_2}^\on\pr{Z^n|U_0^n,U_2^n}}>2^{n(\bbI(U_1;Z|U_0,U_2)+\epsilon)}}\nonumber\\
    &\quad=\bbP_{U_0^n,U_1^n,U_2^n,Z^n}\pr{\pr{\frac{W_{Z|U_0U_1U_2}^\on\pr{Z^n|U_0^n,U_1^n(s_1),U_2^n}}{W_{Z|U_0U_2}^\on\pr{Z^n|U_0^n,U_2^n}}}^{\alpha-1}>2^{n(\alpha-1)(\bbI(U_1;Z|U_0,U_2)+\epsilon)}}\nonumber\\
    &\quad\mathop\le\limits^{(a)}2^{-n(\alpha-1)(\bbI(U_1;Z|U_0,U_2)+\epsilon)}\bbE_{U_0^n,U_1^n,U_2^n,Z^n}\pr{\pr{\frac{W_{Z|U_0U_1U_2}^\on\pr{Z^n|U_0^n,U_1^n(s_1),U_2^n}}{W_{Z|U_0U_2}^\on\pr{Z^n|U_0^n,U_2^n}}}^{\alpha-1}}\nonumber\\
    &\quad\mathop=\limits^{(b)}2^{-n(\alpha-1)\pr{\bbI(U_1;Z|U_0,U_2)+\epsilon-\bbD_\alpha\!\pr{P_{U_0U_1U_2Z}\rVert P_{U_0U_2}P_{U_1|U_0U_2}P_{Z|U_0U_2}}}}\nonumber\\
    &\quad\mathop\le\limits^{(c)}2^{-n\beta},\label{eq:P_atyp_1_One_Sided}
\end{align}
    \setcounter{equation}{69}
\end{figure*}
Now let $\epsilon>0$ be arbitrary, to be defined later, and define
\begin{align*}
    \calA_\epsilon^{(n)}&\triangleq\Bigg\{\pr{u_0^n,u_1^n,u_2^n,z^n}:\frac{1}{n}i_{p_{U_1,Z|U_0,U_2}}\pr{u_1^n;z^n|u_0^n,u_2^n}\nonumber\\
    &\qquad<\bbI(U_1;Z|U_0,U_2)\Bigg\}.
\end{align*}Now, by using the indicator function, we split $\bar{p}_{Z^n|\bbC_1}$ in \eqref{eq:Induced_Dist_One_Sided}
into atypical and typical parts as follows,
\begin{subequations}\label{eq:Atyp_Typ_One_Sided}
\begin{align}
    \bar{p}_{\text{atyp}}&\triangleq\sum_{z^n\in\calZ^n}\frac{1}{2^{nR_1}}\sum_{s_1}W_{Z\lvert U_0U_1U_2}^\on\pr{z^n\lvert U_0^n,U_1^n(s_1),U_2^n}\nonumber\\
    &\quad\times\indic{1}_{\br{\pr{U_0^n,U_1^n(s_1),U_2^n,z^n}\notin\calA_\epsilon^{(n)}}},\label{eq:Atyp_One_Sided}\\
    \bar{p}_{\text{typ}}&\triangleq\frac{1}{2^{nR_1}}\sum_{s_1}\frac{W_{Z\lvert U_0U_1U_2}^\on\pr{z^n\lvert U_0^n,U_1^n(s_1),U_2^n}}{W_{Z\lvert U_0U_2}^\on\pr{z^n\lvert U_0^n,U_2^n}}\nonumber\\
    &\quad\times\indic{1}_{\br{\pr{U_0^n,U_1^n(s_1),U_2^n,z^n}\in\calA_\epsilon^{(n)}}}.\label{eq:Typ_One_Sided}
\end{align}
\end{subequations}Using \eqref{eq:Atyp_Typ_One_Sided} and the triangle inequality, we split the total variation distance on the \ac{RHS} of \eqref{eq:Total_Vari_One_Sided} into atypical and typical parts as follows,
\begin{align}
    &\bbV\pr{\bar{p}_{Z^n|C_{1,n}},W_{Z\lvert U_0U_2}^\on\pr{\cdot|U_0^n,U_2^n}}\nonumber\\
    &=\sum_{z^n\in\calZ^n}W_{Z\lvert U_0U_2}^\on\pr{z^n|U_0^n,U_2^n}\left|\frac{\bar{p}_{Z^n|\bbC_1}(z^n)}{W_{Z\lvert U_0U_2}^\on\pr{z^n|U_0^n,U_2^n}}-1\right|\nonumber\\
    &\le \bar{p}_{\text{atyp}}+\sum_{z^n\in\calZ^n}W_{Z\lvert U_0U_2}^\on\pr{z^n|U_0^n,U_2^n}\sbr{\bar{p}_{\text{typ}}(z^n)-1}^+.\label{eq:Atyp_Typ_Trian_One_Sided}
\end{align}Note that we assume $\bar{p}_{Z^n|C_{1,n}}$ is absolutely continuous \ac{wrt} $W_{Z\lvert U_0U_2}^\on\pr{\cdot|U_0^n,U_2^n}$, therefore the summation in \eqref{eq:Atyp_Typ_Trian_One_Sided} is bounded. The following lemma plays a key role in our proof.
\begin{lemma}[Bound on Atypical Sets]
\label{lemma:Bounding_Atypical_One_Sided}
    Let $0<\delta$, $V_0\in\calV_0$, $V_1\in\calV_1$, $V_2\in\calV_2$, and $Z\in\calZ$ be four finite \acp{RV} with joint \ac{PMF} $p_{V_0V_1V_2Z}$. Also, let $V_0^n\in\calV_0^n$ be a random sequence generated \ac{iid} according to $P_{V_0}$, and for any realization $V_0^n=v_0^n$, let $B_{1,n}=\br{V_1^n(j)}_{j\in\calJ}$, where $\calJ\triangleq[2^{nR}]$, be a set of codewords of length $n$ generated \ac{iid} according to $p_{V_1|V_0}(\cdot|v_0^n)$, and $V_2^n\in\calV_2^n$ be a random sequence generated \ac{iid} according to $P_{V_2|V_0}(\cdot|v_0^n)$. We define $B_n\triangleq\br{V_0^n,B_{1,n},V_2^n}$. Now, consider a set $\calA\subseteq\calV_0^n\times\calV_1^n\times\calV_2^n\times\calZ^n$, with $\bbP\pr{\pr{V_0^n,V_1^n,V_2^n,Z^n}\in\calA}\le\mu$. Then, we have \eqref{eq:Additinal_4} at the bottom of the previous page.
    \end{lemma}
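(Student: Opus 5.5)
The plan mirrors the proof of Lemma~\ref{lemma:Bounding_Atypical}, but is simpler: there is only a single codebook $B_{1,n}$, so the summands indexed by $j\in\calJ$ are conditionally independent once $(V_0^n,V_2^n)$ is fixed. Lemma~\ref{lemma:Janson} with $\chi=1$ then reduces to a Hoeffding bound with $2^{nR}$ independent terms.

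\textbf{Step 1 (rewrite the threshold).} Multiply through by $2^{nR}$ to pass from the normalized average to the raw sum. Let $A_j$ denote the inner sum over $z^n$ for index $j$, i.e.,
\begin{align*}
A_j\triangleq\sum_{z^n}p_{Z|V_0V_1V_2}^\on\pr{z^n|V_0^n,V_1^n(j),V_2^n}\indic{1}_{\br{\pr{V_0^n,V_1^n(j),V_2^n,z^n}\in\calA}}.
\end{align*}
Each $A_j$ lies in $[0,1]$, being the conditional probability of the event $\calA$. Since $\bbP\pr{(V_0^n,V_1^n,V_2^n,Z^n)\in\calA}\le\mu$, we have $2^{nR}\mu(1+\delta)\ge 2^{nR}\pr{\bbP(\calA)+\mu\delta}$. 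Hence the target probability is at most $\bbP\pr{\sum_j A_j>2^{nR}(\bbP(\calA)+\mu\delta)}$.

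\textbf{Step 2 (condition).} Apply the law of total probability over $(V_0^n,V_2^n)=(v_0^n,v_2^n)$. Conditionally, the codewords $V_1^n(j)$ are i.i.d.\ according to $p_{V_1|V_0}^\on(\cdot|v_0^n)$. Because $V_2^n$ is generated from $v_0^n$ independently of $B_{1,n}$, the $A_j$ are independent, with common conditional mean
\begin{align*}
\bbP\pr{\calA\mid v_0^n,v_2^n}.
\end{align*}
Averaged over $(v_0^n,v_2^n)$, this mean equals the unconditional $\bbP(\calA)$.

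\textbf{Step 3 (concentrate).} Apply Lemma~\ref{lemma:Janson} with $\chi=1$ and $2^{nR}$ terms, which gives a bound of the form $\exp\pr{-2t^2/2^{nR}}$. Taking deviation $t=2^{nR}\mu\delta$ yields $\exp\pr{-2\mu^2\delta^2 2^{nR}}$, which is uniform in $(v_0^n,v_2^n)$. Averaging over $(v_0^n,v_2^n)$ then completes \eqref{eq:Additinal_4}.

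The main obstacle is Step 2. Hoeffding's inequality requires centering at the \emph{conditional} mean $2^{nR}\bbP(\calA\mid v_0^n,v_2^n)$, whereas the threshold is stated with the unconditional $\bbP(\calA)$. As in \eqref{eq:Additional_3}, I would handle this by requiring the bound $\bbP(\calA\mid v_0^n,v_2^n)\le\mu$ for the relevant conditioning, or equivalently by stating the hypothesis conditionally. The applications satisfy this: in \eqref{eq:P_atyp_1_One_Sided} the Markov/R\'enyi bound holds pointwise in $(u_0^n,u_2^n)$ up to a typicality restriction on $u_0^n$ that can be absorbed. Under that bound, the event in Step 1 forces a deviation of at least $2^{nR}\mu\delta$ above the conditional mean, and Step 3 applies.
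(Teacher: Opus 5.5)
Your route is the paper's own: scale by $2^{nR}$, condition on $(V_0^n,V_2^n)$, and invoke Lemma~\ref{lemma:Janson} with $\chi=1$. The obstacle you isolate in Step~2 is real, and it is exactly where the paper's proof breaks. In step~$(a)$ of \eqref{eq:Additinal_5}, the threshold is still $2^{nR}\pr{\bbP(\calA)+\mu\delta}$ with the unconditional probability. Item~$(iii)$ of that proof, however, says the conditional mean is $2^{nR}\bbP(\calA\mid v_0^n,v_2^n)$. The Janson/Hoeffding bound is therefore unjustified for every pair with $\bbP(\calA\mid v_0^n,v_2^n)>\bbP(\calA)$; the same slip occurs in \eqref{eq:Additional_3}.

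No cleverer argument can close this, because the statement as written is false. Let $V_0,V_1,Z$ be constant, let $V_2$ be uniform on $\{0,1\}$, and set $\calA\triangleq\{(v_0^n,v_1^n,v_2^n,z^n):v_2(1)=0\}$, $\mu=\tfrac12$, $\delta=\tfrac12$. For every realization of $B_n$, the normalized sum in \eqref{eq:Additinal_4} equals $\indic{1}_{\br{V_2(1)=0}}$. Hence the left-hand side equals $\tfrac12$, whereas the right-hand side is $\exp\pr{-2^{nR}/8}$. Your conditional hypothesis, $\bbP(\calA\mid v_0^n,v_2^n)\le\mu$ for all $(v_0^n,v_2^n)$, is therefore necessary, not merely convenient. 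Under it, your Steps~2--3 give a complete proof, provided you carry out the comparison of Step~1 inside the conditioning, with $\bbP(\calA\mid v_0^n,v_2^n)$ in place of $\bbP(\calA)$.

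Your closing remark about the applications is too optimistic, though. Conditioned on $(u_0^n,u_2^n)$, the Markov/R\'enyi step of \eqref{eq:P_atyp_1_One_Sided} produces an exponent governed by the per-letter conditional R\'enyi divergences at the pairs $(u_{0,t},u_{2,t})$. That is, it depends on the joint type of $(u_0^n,u_2^n)$, so the restriction must be on that joint type, not on $u_0^n$ alone.

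If the conditional bound holds only on a set $\mathcal{G}$ of pairs, your argument gives $\bbP\pr{(V_0^n,V_2^n)\notin\mathcal{G}}+\exp\pr{-2\mu^2\delta^2 2^{nR}}$. In the one-sided application $U_2^n$ is a single random sequence, so the first term is only exponentially small. It cannot be absorbed into the doubly exponential bound \eqref{eq:p_Atyp_1_final_One_Sided}.

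In fact that bound cannot hold in general. Suppose some pair $(u_0,u_2)$ in the support has $\bbI(U_1;Z|U_0=u_0,U_2=u_2)>\bbI(U_1;Z|U_0,U_2)+\epsilon$. On the event, of probability $p_{U_0U_2}(u_0,u_2)^n$, that $(U_0^n,U_2^n)$ is constant at that pair, $\bar{p}_{\text{atyp}}$ concentrates near $1$. There you must either fix a good $(u_0^n,u_2^n)$ or settle for an exponential rather than doubly exponential guarantee.
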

    \begin{proof}
        We have \eqref{eq:Additinal_5} at the bottom of the previous page, 
        where $(a)$ follows from Lemma~\ref{lemma:Janson}, by noting that: $(i)$ the innermost sum is nonnegative and bounded above by $1$; $(ii)$ the summation over $j$ contains $2^{nR}$ independent terms, which implies $\chi=1$ in Lemma~\ref{lemma:Janson}; and $(iii)$ the overall expectation of the two inner summations equals $2^{nR}\mathbb{P}\!\left((V_0^n,V_1^n,V_2^n,Z^n)\in\calA|V_0^n=v_0^n,V_2^n=v_2^n\right)$.
    \end{proof}
    \setcounter{equation}{74}
We now proceed to the proof of Lemma~\ref{lemma:one_Sided_MAC_Res}. To bound $\bar{p}_{\text{atyp}}$ in \eqref{eq:Atyp_Typ_Trian_One_Sided}, for any $\alpha>1$, we have \eqref{eq:P_atyp_1_One_Sided} at the bottom of the page, 
where
\begin{itemize}
    \item[$(a)$] follows from Markov's inequality;
    \item[$(b)$] follows from the definition of  R\'{e}nyi divergence of order $\alpha$;
    \item[$(c)$] follows by choosing $\beta<(\alpha-1)\big(\bbI(U_1;Z|U_0,U_2)+\epsilon-\bbD_\alpha\!\pr{P_{U_0U_1U_2Z}\rVert P_{U_0U_2}P_{U_1|U_0U_2}P_{Z|U_0U_2}}\big)$.
\end{itemize}
Note that since any joint distribution is absolutely continuous with respect to the product of its marginals, the R\'{e}nyi divergence is finite. Moreover, it is continuous in $\alpha$, and as $\alpha \to 1$, the R\'{e}nyi divergence converges to the mutual information and
\begin{align*}
    &\bbD_\alpha\!\pr{P_{U_0U_1U_2Z}\rVert P_{U_0U_2}P_{U_1|U_0U_2}P_{Z|U_0U_2}}\nonumber\\
    &\ge \bbD_1\!\pr{P_{U_0U_1U_2Z}\rVert P_{U_0U_2}P_{U_1|U_0U_2}P_{Z|U_0U_2}}\nonumber\\
    &=\bbI(U_1; Z | U_0,U_2).
\end{align*}
Therefore, by selecting $\epsilon > 0$ such that
\begin{align*}
\epsilon &> \bbD_\alpha\!\pr{P_{U_0U_1U_2Z}\rVert P_{U_0U_2}P_{U_1|U_0U_2}P_{Z|U_0U_2}} \nonumber\\
&\quad- \bbI(U_1; Z|U_0,U_2),
\end{align*}we guarantee the existence of an $\alpha>1$ for which $\beta>0$ while keeping $\beta$ sufficiently small to satisfy the desired condition.  
Now applying Lemma~\ref{lemma:Bounding_Atypical_One_Sided} with 
$A=\pr{\pr{\calU_0^n\times\calU_1^n\times\calU_2^n\times\calZ^n}\backslash\calA_\epsilon^{(n)}}$ and $\delta=1$, leads to
    \begin{align}
    \bbP_\mu\pr{\bar{p}_{\text{atyp}}>2\cdot2^{-n\beta}}&\le\exp\pr{-2\cdot2^{n\pr{R_1-2\beta}}}.\label{eq:p_Atyp_1_final_One_Sided}
    \end{align}
To bound the probability of the typical term in \eqref{eq:Atyp_Typ_Trian_One_Sided}, for any fixed $u_0^n$, $u_2^n$, and $z^n$, we apply Lemma~\ref{lemma:Bounding_Typical} with $V_0=U_0$, $V_1=U_1$, $V_2=U_2$, and $\delta=2^{-n\beta}$, which leads to
\begin{align}
    &\bbP_{C_{1,n}}\pr{\bar{p}_{\text{typ}}(u_0^n,u_2^n,z^n)>1+2^{-n\beta}}\nonumber\\
    &\quad\le\exp\pr{-\frac{1}{3}2^{-n\pr{\bbI\pr{U_1;Z|U_0,U_2}+\epsilon+2\beta-R_1}}}.\label{eq:pTyp1_One_Sided}
\end{align}
Now we have,
\begin{align*}
    &\bbP_\mu\pr{\bbV\pr{\bar{p}_{Z^n|C_n},W_{Z|U_0U_2}^\on(\cdot|U_0^n,U_2^n)}>3\cdot2^{-n\beta}}\nonumber\\
    &\mathop\le\limits^{(a)}\bbP_\mu\pr{\bar{p}_{\text{atyp}}>2\cdot2^{-n\beta}}+\sum_{z^n\in\calZ^n}\bbP_\mu\pr{\bar{p}_{\text{typ}}(z^n)>1+2^{-n\beta}}\nonumber\\
    &=\bbP_\mu\pr{\bar{p}_{\text{atyp}}>2\cdot2^{-n\beta}}\nonumber\\
    &+\sum_{z^n\in\calZ^n}\bbP_{U_0^n,C_{1,n},U_2^n}\pr{\bar{p}_{\text{typ}}(U_0^n,U_2^n,z^n)>1+2^{-n\beta}}\nonumber\\
    &=\bbP_\mu\pr{\bar{p}_{\text{atyp}}>2\cdot2^{-n\beta}}\nonumber\\
    &+\sum_{z^n\in\calZ^n}\sum_{(u_0^n,u_2^n)\in\calU_0^n\times\calU_2^n}\bbP\pr{U_0^n=u_0^n,U_2^n=u_2^n}\nonumber\\
    &\times\bbP_{C_{1,n}}\pr{\bar{p}_{\text{typ}}(U_0^n,U_2^n,z^n)>1+2^{-n\beta}|U_0^n=u_0^n,U_2^n=u_2^n}\nonumber\\
    &\le\bbP_\mu\pr{\bar{p}_{\text{atyp}}>2\cdot2^{-n\beta}}\nonumber\\
    &+\sum_{z^n\in\calZ^n}\sum_{(u_0^n,u_2^n)\in\calU_0^n\times\calU_2^n}\hspace{-2mm}\bbP_{C_{1,n}}\pr{\bar{p}_{\text{typ}}(u_0^n,u_2^n,z^n)>1+2^{-n\beta}}\nonumber\\
    &\mathop\le\limits^{(b)}\exp\pr{-2\times2^{n\pr{R_1-2\beta}}}\nonumber\\
    &+\abs{\calZ^n}\abs{\calU_0^n}\abs{\calU_2^n}\exp\pr{-\frac{1}{3}2^{-n\pr{\bbI(U_1;Z|U_0,U_2)+\epsilon+2\beta-R_1}}},
\end{align*}where
\begin{itemize}
    \item[$(a)$] follows from \eqref{eq:Atyp_Typ_Trian_One_Sided}, the union bound, and since assuming
\begin{align}
    &\calB\triangleq\nonumber\\
    &\br{\sum_{z^n\in\calZ^n}\hspace{-1mm}W_{Z\lvert U_0U_2}^\on\!\pr{z^n|U_0^n,U_2^n}\sbr{\bar{p}_{\text{typ}}(z^n)-1}^+>\!2^{-n\beta}},\nonumber
\end{align}the set $\calB$ ensures that there exists at least one $z^n$ such that $\sbr{\bar{p}_{\text{typ}}(z^n)-1}^+>2^{-n\beta}$ and therefore
\begin{align}
    \calB\subseteq\bigcup_{z^n}\br{\bar{p}_{\text{typ}}(z^n)>1+2^{-n\beta}};\nonumber
\end{align}
\item[$(b)$] follows from \eqref{eq:p_Atyp_1_final_One_Sided} and \eqref{eq:pTyp1_One_Sided}.
\end{itemize}

Now we need to choose $\gamma_1$ and $\gamma_2$ such that \eqref{eq:Doble_Expo_One_Sided} holds. First, we choose $\epsilon$ and $\beta$ small enough such that the following quantity is positive, $\hat{\delta}\triangleq\min\br{R_1-2\beta,R_1-2\beta-\epsilon-\bbI(U_1;Z|U_0,U_2)}$. 
Since there is no constraint on $\beta$ and $\epsilon$ other than being positive and sufficiently small, such a choice is possible if $R_1>\bbI(U_1;Z|U_0,U_2)$. By selecting parameters $\gamma_2 < \hat{\delta}$ and $\gamma_1 < \beta$, and noting that our analysis holds in full generality, covering both cases $R_2 \le \bbI(U_2;Z|U_0)$ and $R_2 > \bbI(U_2;Z|U_0)$, we complete the proof of the region $\calR_2$ in Lemma~\ref{lemma:one_Sided_MAC_Res}. 

\end{appendices}

\bibliographystyle{IEEEtran}
\bibliography{IEEEabrv,bibfile}

@STRING{IEEE_J_IFS        = "{IEEE} Trans. Inf. Forensics Security"}

@STRING{IEEE_J_IT         = "{IEEE} Trans. Inf. Theory"}

@STRING{IEEE_J_SAIT       = "{IEEE} J. Sel. Areas Inf. Theory"}

@BOOK{BlochBarros,

AUTHOR    = {Bloch, M. R and Barros, J},
TITLE     = {Physical-Layer Security: From Information Theory to Security	Engineering},
PUBLISHER = {Cambridge University Press},
YEAR      = 2011,
EDITION   = {1rd},
ADDRESS   = {\hspace{-0.2cm}Cambridge, U.K},
}

@ARTICLE{BCC:IT78, 
author={Csisz\'ar, I and K\"{o}rner, J}, 
journal=IEEE_J_IT, 
title={Broadcast channels with confidential messages}, 
year={1978}, 
volume={24}, 
number={3}, 
pages={339-348}, 
month=may,
}

@ARTICLE{Slofstra20, 
author={Slofstra, W.}, 
journal={J. Amer. Math. Soc.}, 
title={Tsirelson’s problem and an embedding theorem for groups arising from non-local games}, 
year={2020}, 
volume={33},
number  = {1},
month   = Jan,
pages={1-56}, 
}

@article{QMAC,

author  = {Winter, A},
journal = IEEE_J_IT,
title   = {The capacity of the quantum multiple access channel},
volume  = {47},
number  = {7},
month   = nov,
year    = {2001},
pages   = {3059 - 3065},
}

@article{Boche14,

author  = {Boche, H and N\"{o}tzel, J},
journal = {Quantum Inf. Process.},
title   = {The classical-quantum multiple access channel with conferencing encoders and with common messages},
volume  = {13},
number  = {12},
month   = dec,
year    = {2014},
pages   = {2595 - 2617},
}

@article{Goldfeld18,

author  = {Goldfeld, Z and Cuff, P and Permuter, H. H},
title   = {Wiretap Channels with Random States Non-Causally Available at the Encoder},
journal = {available at \url{https://arxiv.org/abs/1608.00743v2}},
month   = Jan,
year    = {2018},
}

@article{Frey18,

author  = {Frey, M and Bjelakovi\'{c}, I and Sta\'{n}czak, S},
title   = {The {MAC} Resolvability Region, Semantic Security
and Its Operational Implications},
journal = {available at \url{https://arxiv.org/abs/1710.02342v2}},
month   = Jan,
year    = {2018},
}

@article{Hsieh08,

author = {Hsieh, M.-H and Devetak, I and Winter, A},
title  = {Entanglement-assisted capacity of quantum multiple-access channels},
journal = IEEE_J_IT,
volume  = {54},
number  = {7},
month   = Jul,
year    = {2008},
pages   = {3078-3090},
}

@article{Shi21,

author = {Shi, H and Hsieh, M.-H and Guha, S and Zhang, Z},
title  = {Entanglement-assisted capacity regions and protocol designs for quantum multiple-access channels},
journal = {npj Quantum Inf.},
volume  = {7},
number  = {1},
month   = May,
year    = {2021},
pages   = {1-9},
}

@article{Wyner,

author  = {Wyner, A. D},
title   = {The Wire-Tap Channel},
journal = {Bell System Technical Journal},
volume  = {57},
number  = {8},
month   = Oct,
year    = {1975},
pages   = {1355-1367},
}

@inproceedings{Aghaee25,

author    = {Aghaee, H and Deppe, C},
title     = {On the Interference Channel with Entangled Transmitters},
booktitle = {Proc. {IEEE} Info. Theory Workshop (ITW)},
address   = {Sydney, Australia},
month     = Sep,
year      = {2025},
}

@inproceedings{Cuff16,

author    = {Cuff, P},
title     = {Soft Covering with High Probability},
booktitle = {Proc. {IEEE} Int.  Symp. on Info. Theory (ISIT)},
address   = {Barcelona, Spain},
month     = Jul,
year      = {2016},
pages     = {2963-2967}
}

@inproceedings{Cuff15,

author    = {Cuff, P},
title     = {A Stronger Soft Covering Lemma and Applications},
booktitle = {Proc. {IEEE} Commun. Netw. Secur. (CNS)},
address   = {Florence, Italy},
month     = Sep,
year      = {2015},
pages     = {40-43}
}

@article{Hayashi06,

author  = {Hayashi, M},
title   = {General nonasymptotic and asymptotic formulas in channel resolvability and identification capacity and their application to the wiretap channel},
journal = IEEE_J_IT,
volume  = {52},
number  = {4},
month   = Dec,
year    = {2006},
pages   = {1562-1575},
}

@article{Bloch2013,

author  = {Bloch, M. R and Laneman, J. N},
title   = {Strong Secrecy From Channel Resolvability},
journal = IEEE_J_IT,
volume  = {59},
number  = {12},
month   = Dec,
year    = {2013},
pages   = {8077-8098},
}

@inproceedings{ISIT21,

author    = {ZivariFard, H and Bloch, M. R and Nosratinia, A},
title     = {Covert Communication via Non-Causal Cribbing from a Cooperative Jammer},
booktitle = {Proc. {IEEE} Int.  Symp. on Info. Theory (ISIT)},
address   = {Melbourne, Australia},
month     = Jul,
year      = {2021},
pages     = {202-207},
}

@article{MAC_GMS,

author  = {ZivariFard, H  and Wang, X},
title   = {Keyless Covert Communication Over Quantum {MAC}s with General Message Sets},
journal = IEEE_J_IT,
volume  = {71},
number  = {9},
month   = Sep,
year    = {2026},
pages   = {6752 - 6780},
}

@inproceedings{YassaeeMAWC,

author    = {Yassaee, M. H and Aref, M. R},
title     = {Multiple Access Wiretap channels with strong secrecy},
booktitle = {Proc. {IEEE} Info. Theory Workshop (ITW)},
address   = {Dublin, Ireland},
month     = Sep,
year      = {2010},
pages     = {1-5},
}

@inproceedings{Wiese2012,

author    = {Wiese, M and Boche, H},
title     = {An achievable region for the Wiretap multiple-access channel with common message},
booktitle = {{IEEE} Int.  Symp. on Info. Theory (ISIT)},
address   = {Cambridge, MA},
month     = Jul,
year      = {2012},
pages     = {249-253}
}

@article{Kramer_Book,

author  = {Kramer, G},
title   = {Topics in multi-user information theory},
journal = {Found. Trends Comm. Inf. Theory},
volume  = {4},
number  = {4-5},
year    = {2008},
pages   = {265-444},
}

@article{Concentration,

author  = {Hoeffding, W},
journal = {J. Am. Stat. Ass.},
title   = {Probability inequalities for sums of bounded random variables},
volume  = {58},
number  = {301},
year    = {1963},
pages   = {13-30},
}

@article{Concentration_2,

author  = {Janson, S},
journal = {Random Structures \& Algorithms},
title   = {Large deviations for sums of partly dependent random variables},
volume  = {24},
number  = {3},
year    = {2004},
pages   = {234-248},
}

@BOOK{ElGamalKim,

AUTHOR    = {El Gamal, A and Kim, Y.-H},
TITLE     = {Network Information Theory},
PUBLISHER = {Cambridge University Press},
YEAR      = {2012},
EDITION   = {First},
ADDRESS   = {\hspace{-0.2cm}Cambridge, U.K},
}

@BOOK{Concentration_Book,

AUTHOR    = {Dubhashi, D. P and Panconesi, A},
TITLE     = {Concentration of measure for the analysis of randomized algorithms},
PUBLISHER = {Cambridge University Press},
YEAR      = {2009},
EDITION   = {1rd},
ADDRESS   = {Cambridge, U.K},
}

@BOOK{Eggleston,

AUTHOR    = {Eggleston, H. G},
TITLE     = {Convexity},
PUBLISHER = {Cambridge University Press},
YEAR      = {1958},
EDITION   = {6rd},
ADDRESS   = {Cambridge, U.K},
}

@article{Cuff13,

author  = {Cuff, P},
title   = {Distributed Channel Synthesis},
journal = IEEE_J_IT,
volume  = {59},
number  = {11},
month   = Nov,
year    = {2013},
pages   = {7071-7096},
}

@article{Perurbation,

author  = {Gohari, A. A and Anantharam, V},
title   = {Evaluation of Marton’s inner bound for the general broadcast channel},
journal = IEEE_J_IT,
volume  = {58},
number  = {2},
month   = Feb,
year    = {2012},
pages   = {608-619},
}

@article{WillemsConferencing,

author  = {Willems, F. M. J and van der Meulen, E. C},
title   = {The discrete memoryless multiple access channel with partially cooperating encoders},
journal = IEEE_J_IT,
volume  = {29},
number  = {3},
month   = May,
year    = {1983},
pages   = {441-445},
}

@article{UziCribbing,

author  = {Pereg, U and Deppe, C and Boche, H},
title   = {The Quantum Multiple-Access Channel
With Cribbing Encoders},
journal = IEEE_J_IT,
volume  = {68},
number  = {6},
month   = Jun,
year    = {2022},
pages   = {3965 - 3988},
}

@article{Uzi_Entangled_MAC,

author  = {Pereg, U and Deppe, C and Boche, H},
title   = {The Multiple-Access Channel With Entangled Transmitters},
journal = IEEE_J_IT,
volume  = {71},
number  = {2},
month   = Feb,
year    = {2025},
pages   = {1096 - 1120},
}

@article{Uzi_Relay,

author  = {Pereg, U},
title   = {Quantum Relay Channels},
journal = IEEE_J_IT,
volume  = {71},
number  = {11},
month   = Nov,
year    = {2025},
pages   = {8595 - 8612},
}

@article{QMAC_Security,

author  = {Chou, R},
title   = {Private Classical Communication over Quantum Multiple-Access Channels},
journal = IEEE_J_IT,
volume  = {68},
number  = {3},
month   = Mar,
year    = {2022},
pages   = {1782 - 1794},
}

@article{GMAWCJamming,

author  = {Tekin, E and Yener, A},
title   = {The General {G}aussian Multiple-Access and Two-Way Wiretap Channels: Achievable Rates and Cooperative Jamming},
journal = IEEE_J_IT,
volume  = {54},
number  = {6},
month   = Jun,
year    = {2008},
pages   = {2735-2751},
}

@article{ZivSemantic2016,

author  = {Goldfeld, Z and Cuff, P and Permuter, H. H},
title   = {Semantic-Security Capacity for Wiretap Channels of Type {II}},
journal = IEEE_J_IT,
volume  = {62},
number  = {7},
month   = Jul,
year    = {2016},
pages   = {3863-3879},
}

@article{Notzel201,

author  = {N\"{o}tzel, J},
journal = IEEE_J_SAIT,
title   = {Entanglement-enabled communication},
volume  = {1},
number  = {2},
month   = Aug,
year    = {2020},
pages   = {401-415},
}

@inproceedings{Notzel202,

author  = {N\"{o}tzel, J and DiAdamo, S},
title     = {Entanglement-enabled communication for the Internet of Things},
booktitle = {Proc. Int. Conf. Comput. Inf. Telecomm. Syst. (CITS)},
address   = {Hangzhou, China},
month     = Oct,
year      = {2020},
pages     = {1-6}
}

@article{Yun23,

author  = {Yun, J and Rai, A and Bae, J},
title   = {Non-local and quantum advantages in network coding for multiple access channels},
journal = {available at \url{https://arxiv.org/abs/2304.10792}},
month   = Jan,
year    = {2023},
}

@article{Mermin90,
  title = {Simple unified form for the major no-hidden-variables theorems},
  author  = {Mermin, N. D},
  journal = {Phys. Rev. Lett.},
  volume = {65},
  pages   = {3373-3376},
  year = {1990},
}

@article{Peres90,
  title = {Incompatible results of quantum measurements},
  author  = {Peres, A},
  journal = {Phys. Lett. A},
  volume = {151},
  pages   = {107-108},
  year = {1990},
}

@article{Aravind03,

author  = {Aravind, P. K},
title   = { A simple demonstration of bell’s theorem involving two observers and no probabilities or inequalities},
journal = {available at \url{https://arxiv.org/abs/quant-ph/0206070}},
month   = Jan,
year    = {2003},
}

@inproceedings{Doolittle22,

author    = {Doolittle, B and Chitambar, E and Leditzky, F},
title     = {Nonclassical behaviors of two-sender multiple access channels},
booktitle = {Proc. APS March Meeting Abstr.},
address   = {available at \url{https://meetings.aps.org/Meeting/MAR22/Session/G00.182}},
year      = {2022},
}

@inproceedings{Fawzi22,

author    = {Fawzi, O and Ferm\'e, P},
title     = {Beating the sum-rate capacity of the binary adder channel with non-signaling correlations},
booktitle = {Proc. {IEEE} Int. Symp. Inf. Theory (ISIT)},
address   = {Melbourne, Australia},
month     = Jun,
year      = {2022},
pages     = {2750-2755},
}

@article{Brassard05,
author  = {Brassard, G and Broadbent, A and Tapp, A},
title   = {Quantum pseudo-telepathy},
journal = {Found. Phys.},
volume  = {35},
number  = {11},
month   = Nov,
year    = {2005},
pages   = {1877-1907},
}

@inproceedings{Interference_Entangled,

author    = {Hawellek, J and  Mohan, A and Aghaee, H and Deppe, C},
title     = {The Interference Channel with Entangled Transmitters},
booktitle = {Proc. {IEEE} Int.  Symp. on Info. Theory (ISIT)},
address   = {Ann Arbor, MI},
month     = Jun,
year      = {2025},
pages     = {2945-2949},
}

@article{Seshadri23,
author  = {Seshadri, A and  Leditzky, F and Siddhu, V and Smith, G},
title   = {On the Separation of Correlation-Assisted Sum Capacities of Multiple Access Channels},
journal = IEEE_J_IT,
volume  = {69},
number  = {9},
month   = Sep,
year    = {2023},
pages   = {5805-5844},
}

@article{Leditzky_20,

author  = {Leditzky, F and Alhejji, M.~A and Levin, J and Smith, G},
journal = {Nature Commun.},
title   = {Playing games with multiple access channels},
volume  = {11},
number  = {1},
month   = Mar,
year    = {2020},
pages   = {1-5},
}

@article{Loock20,

author = {van Loock, P and Alt, W and Becher, C and Benson, O and Boche, H and Deppe, C and Eschner, J and H\"ofling, S and Meschede, D and Michler, P and Schmidt, F and Weinfurter, H},
journal = {Adv. Quantum Technol.},
title = {Extending Quantum Links: Modules for Fiber- and Memory-Based Quantum Repeaters},
volume = {3},
number = {11},
pages = {1900141},
year = {2020},
}

@BOOK{Bassoli21,

AUTHOR    = {Bassoli, R and Boche, H and Deppe, C and Ferrara, R and Fitzek, F. H. P and Janssen, G and Saeedinaeeni, S},
TITLE     = {Quantum Communication Networks (Foundations in Signal Processing, Communications and Networking)},
PUBLISHER = {Springer},
YEAR      = {2021},
ADDRESS   = {\hspace{-0.2cm}Cham, Switzerland},
}

@article{Quek17,
  title = {Quantum and superquantum enhancements to two-sender, two-receiver channels},
  author  = {Quek, Y and Shor, P. W},
  journal = {Phys. Rev. A, Gen. Phys.},
  volume = {95},
  issue = {5},
  pages = {052329},
  numpages = {11},
  year = {2017},
  month = May,
}

@article{Clauser69,
  title = {Proposed experiment to test local hidden-variable theories},
  author  = {Clauser, J. F and Horne, M. A and Shimony, A and Holt, R. A},
  journal = {Phys. Rev. Lett.},
  volume = {23},
  issue = {15},
  pages = {880-884},
  year = {1969},
  month = Oct,
}

@article{Popescu94,
  title = {Quantum nonlocality as an axiom},
  author  = {Popescu, S and Rohrlich, D},
  journal = {Found. Phys.},
  volume = {24},
  issue = {3},
  pages = {379-385},
  year = {1994},
  month = Mar,
}

@inproceedings{Perturbation_Quantum,

author    = {Grace, M. R and Guha, S},
title     = {Perturbation theory for quantum information},
booktitle = {Proc. {IEEE} Info. Theory Workshop (ITW)},
address   = {Mumbai, India},
month     = Nov,
year      = {2022},
pages     = {500-505},
}

@article{Ziv_WTC_with_NonCausaul_CSI,

author  = {Goldfeld, Z and Cuff, P and Permuter, H. H},
journal = IEEE_J_IT,
title   = {Wiretap Channels With Random States Non-Causally Available at the Encoder},
volume  = {66},
number  = {3},
month   = Mar,
year    = {2020},
pages   = {1497-1519},
}

@article{ForenPaper,

author    = {Zivari-Fard, H and  Akhbari, B and Ahmadian-Attari, M and Aref, M. R},
journal = IEEE_J_IFS,
title   = {Imperfect and Perfect Secrecy in Compound Multiple Access Channel With Confidential Message},
volume  = {11},
number  = {6},
month   = Jun,
year    = {2016},
pages   = {1239 -1251},
}

@article{MultiCase_Paper,

author    = {ZivariFard, H and  Bloch, M.~R and Nosratinia, A},
journal = IEEE_J_IFS,
title   = {Two-Multicast Channel With Confidential Messages
},
volume  = {16},
month   = Jan,
year    = {2021},
pages   = {2743 -2758},
}

@article{XU24,

author  = {Xu, H and Wong, K.-K and Caire, G},
title   = {A New Achievable Region of the ${K}$-User {MAC} Wiretap Channel With Confidential and Open Messages Under Strong Secrecy},
journal = IEEE_J_IT,
volume  = {70},
number  = {12},
month   = Dec,
year    = {2024},
pages   = {9123-9151},
}

@inproceedings{effectivesecrec,

author    = {Hou, J and  Kramer, G},
title     = {Effective secrecy: Reliability, confusion and stealth},
booktitle = {Proc. {IEEE} Int.  Symp. on Info. Theory (ISIT)},
address   = {HI, USA},
month     = Jul,
year      = {2014},
pages     = {601-605}
}

@inproceedings{ICC2014,

author    = {Zivari-Fard, H and  Akhbari, B and Ahmadian-Attari, M and Aref, M. R},
title     = {Compound Multiple Access Channel with Confidential Messages},
booktitle = {Proc. {IEEE} Int. Conf. on Comm. (ICC)},
address   = {Sydney, Australia},
month     = Jun,
year      = {2014},
pages     = {1922-1927}
}

@article{IETpaper,

author    = {Zivari-Fard, H and  Akhbari, B and Ahmadian-Attari, M and Aref, M. R},
title     = {Multiple Access Channel with Common Message and Secrecy Constraint},
journal = {IET Commun.},
volume  = {10},
number  = {1},
month   = Feb,
year    = {2016},
pages   = {98-110},
}

@inproceedings{WieseBoche,

author    = {Wiese, M and  Boche, H},
title     = {Strong Secrecy for Multiple Access Channels},
booktitle = {Information Theory, Combinatorics, and Search Theory},
address   = {Springer},
year      = {2013},
pages     = {71-122}
}

\end{document}